\newcommand{\mw}[1]{#1}
\newcommand{\trans}[1]{#1^{\textnormal{\textsf{\tiny T}}}} 
\newcommand{\Asym}{\textnormal{Asym}}
\newcommand{\Sym}{\textnormal{Sym}}
\newcommand{\K}{\{1,\ldots, K\}}
\renewcommand {\epsilon} {\varepsilon}
\newcommand{\abs}[1]{\left\lvert#1\right\rvert}
\newcommand{\vect}[1]{\boldsymbol{#1}}
\newcommand{\Ca}{\mathcal{C}}
\renewcommand{\eta}{\mathcal{S}}
\newcommand{\bfY}{\vect{Y}}
\newcommand{\bfN}{\vect{N}}
\newcommand{\bfV}{\vect{V}}
\newcommand{\bfX}{\vect{X}}
\newcommand{\mat}[1]{\mathsf{#1}}
\newcommand{\deta}[1]{\det\left(#1\right)}
\def\N{{\mathbb N}}        
\def\R{{\mathbb R}}        
\def\1{{\mathbf 1}}        
\newtheorem{theorem}{Theorem}
\newtheorem{lemma}[theorem]{Lemma}
\newtheorem{corollary}[theorem]{Corollary}
\newtheorem{proposition}[theorem]{Proposition}
\newtheorem{remark}{Remark}
\newtheorem{definition}{Definition}
\newtheorem{example}{Example}
\newtheorem{conclusion}{Conclusion}
\theoremstyle{remark}
\newcommand{\bvect}[1]{#1^n}
\begin{document}

\title{Cognitive Wyner Networks with Clustered  Decoding}

\author{\authorblockN{Amos Lapidoth, Nathan Levy, Shlomo Shamai (Shitz), and Mich\`ele~Wigger}\thanks{This work was in part supported by the European Commission in the framework of
the FP7 Network of Excellence in Wireless COMmunications NEWCOM++
and NEWCOM\#. The paper was in part presented at the \emph{International Symposium on Information Theory 2009}, in Seoul, Korea, July 2009. 

A.~Lapidoth is with the Department of Information Technology and Electrical Engineering at ETH Zurich, Switzerland \{email: lapidoth@isi.ee.ethz.ch\}. 
N.~Levy was with the Department of Electrical Engineering at the Technion---Israel Institute of Technology \{email: nhlevy@gmail.com\}.
S.~Shamai is with the Department of Electrical Engineering at the Technion---Israel Institute of Technology \{email: sshlomo@ee.technion.ac.il\}.
M.~Wigger was with the Department of Information Technology and Electrical Engineering at ETH Zurich, Switzerland. She is now with the Communications and Electronics Department, at
Telecom ParisTech, France \{email: michele.wigger@telecom-paristech.fr\}.}}
\date{\today}


\maketitle

\begin{abstract}
  We study an interference network where equally-numbered transmitters
  and receivers lie on two parallel lines, each transmitter opposite
  its intended receiver. We consider two short-range interference
  models: the ``asymmetric network,'' where the signal sent by each
  transmitter is interfered only by the signal sent by its left
  neighbor (if present), and a ``symmetric network,'' where it is
  interefered by both its left and its right neighbors.  Each
  transmitter is cognizant of its own message, the messages of the
  $t_\ell$ transmitters to its left, and the messages of the $t_r$
  transmitters to its right. Each receiver decodes its message based on
  the signals received at its own antenna, at the $r_\ell$ receive
  antennas to its left, and the $r_r$ receive antennas to its
  right. 

  For such networks we provide upper and lower bounds on the
  multiplexing gain, i.e., on the high-SNR asymptotic logarithmic
  growth of the sum-rate capacity. In some cases our bounds meet,
  e.g., for the asymmetric network.
    
  Our results exhibit an equivalence between the transmitter
  side-information parameters $t_\ell, t_r$ and the receiver
  side-information parameters $r_\ell, r_r$ \mw{in the sense that increasing/decreasing $t_\ell$ or $t_r$ by a positive integer  $\delta$ has the same effect on the prelog as increasing/decreasing $r_\ell$ or $r_r$ by $\delta$}.  Moreover---even in
  asymmetric networks---there is an equivalence between the left
  side-information parameters $t_\ell, r_\ell$ and the right
  side-information parameters $t_r, r_r$.

\end{abstract}


\section{Introduction}
\label{sec:Introduction}


We consider a large cellular mobile communication system where $K$ cells are positioned on a linear array. 
We assume short-range interference where the signal sent by the mobiles in a cell interfere only with the signals sent in the left adjacent cell and/or the right adjacent cell, depending on the position of the mobile within the cell. Similarly, the signal sent by a base station interferes only with the signals sent by the base stations in the adjacent cell(s). Our goal is to determine the throughput of such a cellular system at high signal-to-noise ratio (SNR). 

The high-SNR throughput of our system (where we assume constant
non-fading channel gains) does not depend on the number of mobiles in
a cell (provided this number is not zero), because in each cell there is only one base station. Therefore, we restrict attention to setups with only one mobile per cell. 

We particularly focus on two regular setups. The first setup exhibits \emph{asymmetric interference}: the communication in a cell is only interfered by the signals sent in the cell to its left  but not by the signals sent in the cell to its right (e.g., because all the mobiles lie close to the right border of their cells). The second setup exhibits \emph{symmetric interference}: the communication in a cell is interfered by the signals sent in the cells to its left and to its right (e.g., because the mobiles  lie in the center of their cells). 
The symmetric setup was introduced in
\cite{Wyner-94,Hanly-Whiting-93}. 

On a more abstract level, our communication scenario is described as follows: $K$ transmitters wish to communicate independent messages to  their $K$ corresponding receivers, and it is assumed that  these communications interfere. Moreover, the $K$ transmitters are assumed to be located on a horizontal line, and the $K$ receivers are assumed to lie on a parallel line, each receiver opposite its corresponding transmitter. We consider two specific networks. In the \emph{asymmetric network}, each receiver observes a linear combination of the signals transmitted by its corresponding transmitter,  the signal of the
transmitter to its left, and additive white Gaussian noise (AWGN). See Figure~\ref{fig: setting-handoff}.  In the \emph{symmetric
network}, each receiver observes a linear combination of the signal
transmitted by its corresponding transmitter, the two signals of
the transmitter to its left and the transmitter to its right, and AWGN. See
Figure~\ref{fig: setting-full}. 
The symmetric network is also known as Wyner's linear model or the full Wyner model; the asymmetric network is known as the asymmetric Wyner model or the soft hand-off model. 
\begin{figure}[htb]
\psfrag{MK1}{$M_1$}
\psfrag{MK2}{$M_2$}
\psfrag{MK3}{$M_3$}
\psfrag{MK4}{$M_4$}
\psfrag{MK5}{$M_5$}
\psfrag{MK6}{$M_6$}
\psfrag{MK7}{$M_7$}
\psfrag{XK1}{$X_1$}
\psfrag{XK2}{$X_2$}
\psfrag{XK3}{$X_3$}
\psfrag{XK4}{$X_4$}
\psfrag{XK5}{$X_5$}
\psfrag{XK6}{$X_6$}
\psfrag{XK7}{$X_7$}
\psfrag{YK1}{$Y_1$}
\psfrag{YK2}{$Y_2$}
\psfrag{YK3}{$Y_3$}
\psfrag{YK4}{$Y_4$}
\psfrag{YK5}{$Y_5$}
\psfrag{YK6}{$Y_6$}
\psfrag{YK7}{$Y_7$}
\psfrag{parameters}{$K=7$, $t_\ell=2$, $t_r=1$, $r_\ell=2$, and $r_r=1$}
\psfrag{Receiver5}{Receiver~$5$}
\psfrag{Receivingantennas}[l][l]{\footnotesize{Receive antennas (with AWGN)}}
\psfrag{Transmitters}{\footnotesize{Transmitters}}
\begin{center}
\includegraphics[scale=0.55]{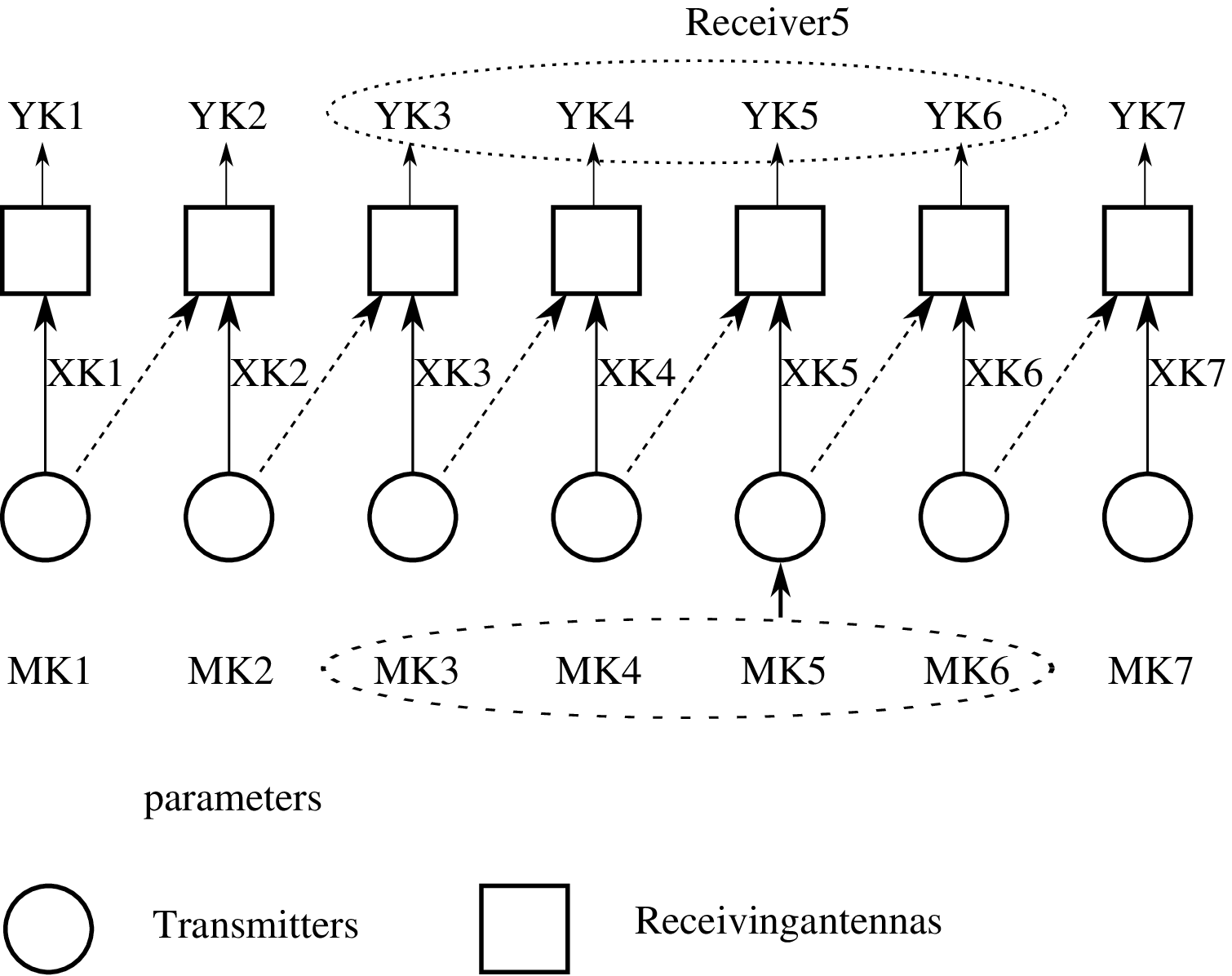}
\caption{Asymmetric network}
\label{fig: setting-handoff}
\end{center}
\end{figure}

\begin{figure}[htb]
\psfrag{MK1}{$M_1$}
\psfrag{MK2}{$M_2$}
\psfrag{MK3}{$M_3$}
\psfrag{MK4}{$M_4$}
\psfrag{MK5}{$M_5$}
\psfrag{MK6}{$M_6$}
\psfrag{MK7}{$M_7$}
\psfrag{XK1}{$X_1$}
\psfrag{XK2}{$X_2$}
\psfrag{XK3}{$X_3$}
\psfrag{XK4}{$X_4$}
\psfrag{XK5}{$X_5$}
\psfrag{XK6}{$X_6$}
\psfrag{XK7}{$X_7$}
\psfrag{YK1}{$Y_1$}
\psfrag{YK2}{$Y_2$}
\psfrag{YK3}{$Y_3$}
\psfrag{YK4}{$Y_4$}
\psfrag{YK5}{$Y_5$}
\psfrag{YK6}{$Y_6$}
\psfrag{YK7}{$Y_7$}
\psfrag{Receiver5}{Receiver~$5$}
\psfrag{Receivingantennas}[l][l]{\footnotesize{Receive antennas (with AWGN)}}
\psfrag{Transmitters}{\footnotesize{Transmitters}}
\psfrag{parameters}{$K=7$, $t_\ell=2$, $t_r=1$, $r_\ell=1$, and $r_r=2$}
\begin{center}
\includegraphics[scale=0.55]{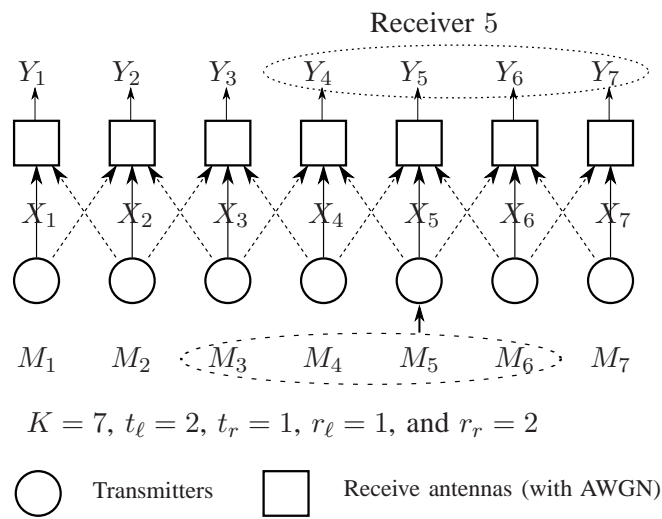}

\caption{Symmetric network}
\label{fig: setting-full}

\end{center}
\end{figure}

In \cite{Wyner-94, Hanly-Whiting-93} 
the receivers were allowed to
\emph{fully} cooperate in their decoding, and thus the communication scenario
was modeled as a multiple-access channel (MAC). In contrast, here we
assume that each receiver has to decode its message individually, and
therefore our communication scenario is modeled as an interference
network. However, we still allow for \emph{partial}
cooperation between neighboring receivers as encountered in the
  uplink of cellular mobile systems where the neighboring basestations---because they can communicate over a backhaul---can
  cooperate  in the form of
\emph{clustered local decoding}. That means, each receiver
beside its own antenna has access also to the antennas of some of the
receivers to its left and to its right. 

Similarly, we also want to
allow for (partial) cooperation between the transmitters in the form of \emph{message cognition}. That means, that each transmitter besides its own message is cognizant also of the messages of some transmitters to its
left and to its right. Such a scenario could be envisioned in the uplink of cellular mobile systems where the mobiles can communicate over bluetooth connections before communicating to their corresponding base stations.\footnote{Our setup can also model a downlink scenario where the receiving mobiles relay their observed signals to the mobiles in neighboring cells (e.g., using bluetooth connections) and the transmitting base stations use the backhaul to share their messages.}

Notice that the described model represents a combination of the cognitive model in \cite{Lapidoth-Shamai-Wigger-ISIT07} and the clustered decoding model in \cite{local-decoding}.
Also, clustered local processing is in a way a compromise between the
joint (multi-cell) decoding in \cite{Wyner-94, Hanly-Whiting-93} and the single
(single-cell) decoding in
\cite{Shamai-Wyner-97-I-II,Lapidoth-Shamai-Wigger-ISIT07}. 
Clustered decoding has also been considered in \cite{Kim}  for fully-connected interference networks. \mw{The cognitive transmitter model considered here has been refined in \cite{rate-limited}, where the transmitters can exchange parts of their messages prior to the actual communication over rate-limited pipes, similar to~\cite{willems83,maricyateskramer07,simeoneetal08, WK08, BrossLapidothWigger11}. }

Our focus in this paper is on the high-SNR asymptotes of the
sum-capacities of these networks. Formally, we present our
results in terms of the multiplexing gain or the asymptotic multiplexing gain per user; the asymptotic multiplexing gain per user is defined as the multiplexing gain of a network divided by the number of transmitter/receiver pairs $K$ in the asymptotic regime of large $K$.
We present lower and upper bounds on the multiplexing gain and the asymptotic multiplexing gain per user for the two networks. 

For the asymmetric network our upper and lower bounds coincide, and thus yield the exact multiplexing gain and asymptotic multiplexing gain per user. The results exhibit an equivalence between cooperation at the transmitters and cooperation at the receivers. Moreover, the asymptotic multiplexing gain per user also exhibits an equivalence between the transmitters' information about their right-neighbors' messages and their information about their left-neighbors' messages. Similarly, they also exhibit, an equivalence between the receivers' information about the signals observed at their right-neighbors' antennas and their information about the signals observed at their left-neighbors' antennas. This result surprises in view of the asymmetry of the network.

For the symmetric network our upper and lower bounds coincide only in some special cases. In these special cases  the multiplexing gain---and thus also the asymptotic multiplexing gain per user---again exhibits an equivalence between cooperation at the transmitters and cooperation at the receivers. \mw{For the symmetric network, we mostly assume that the nonzero cross-gains are all equal. Our techniques extend to general cross-gains, but the statement of the results becomes cumbersome and is therefore omitted. Instead, we also consider a random model where the cross-gains are drawn from a continuous distributions. Our main results continue to hold (with probability~$1$) for this randomized setup.}

\mw{For large number of users $K\gg 1$,  our multiplexing-gain results are of the form $\eta_\infty \cdot K+ o(K)$, where $\eta_\infty \in[0.5, 1]$ is strictly monotonic  in the side-information parameters $t_\ell, t_r, r_\ell, r_r$. That means, if we increase one or several of the side-information parameters, then also the factor $\eta_\infty$increases.\footnote{The parameter $\eta_\infty$ is called the  \emph{asymptotic multiplexing-gain per user} and will be introduced formally in the next section.} The results in~\cite{GamalAnnapureddy4,GamalAnnapureddy2, GamalAnnapureddy1, GamalAnnapureddy3} suggest that this strict monotonicity relies on the weak connectivity of the network, i.e., the fact that there are relatively few interference links. Indeed, \cite{GamalAnnapureddy4,GamalAnnapureddy2, GamalAnnapureddy1, GamalAnnapureddy3} show that for \emph{fully-connected networks}, i.e., when all the transmitted signals interfere at all received signals, and when there is no clustering at the receivers ($r_\ell=r_r=0$), then for the side-information pattern considered here, $\eta_\infty=1/2$, irrespective of $t_\ell$ and $t_r$. This result holds even in the stronger setup where for each message we can choose the set of $t_\ell+t_r+1$ adjacent transmitters that are cognizant of this message \cite{GamalAnnapureddy4}. In this stronger setup, a given transmitter~$k$ might not know  Message~$k$ intended to its corresponding receiver~$k$.
Notice that sometimes (for example for the networks considered here),  the multiplexing gain can indeed be increased by assigning a given Message $M_k$ to subsets of $t_\ell+t_r+1$ transmitters that does not contain the original transmitter~$k$ \cite{GamalAnnapureddy4}. We will describe this in more detail after describing our results. 

General interference networks with transmitter cooperation have also been studied in \cite{Lapidoth-Shamai-Wigger-ITW07, Gesbert, Lozano, Wangetal11}. In particular, in \cite{Lapidoth-Shamai-Wigger-ITW07}, the authors completely characterized the set of networks and transmitter side-informations that have full multiplexing gain $K$ or multiplexing gain $K-1$. In \cite{Wangetal11}, a network is presented where adding an interference link to the network---while keeping the same transmitter side-informations---can increase the multiplexing gain. 
}

\mw{The asymmetric network has also been studied by Liu and Erkip~\cite{LiuErkip}, with a focus on finite-SNR results but without transmitter cognition or clustered decoding. For general $K\geq 3$, \cite{LiuErkip} characterizes the maximum sum-rate that is achievable using a simple Han-Kobayashi scheme without time-sharing and where the inputs follow a Gaussian distribution.  For $K=3$, they show that this scheme achieves the sum-capacity in \emph{noisy-interference and mixed-interference regimes} and it achieves the entire capacity region in a \emph{strong interference regime}. Zhou and Yu \cite{ZhouYu} considered a cyclic version of this model where additionally the $K$-th transmitted signal interferes with the first receive signal, i.e., the interference pattern is cyclic. In \cite{ZhouYu}, an expression for the  Han-Kobayashi region with arbitrary (also non-Gaussian) inputs is presented. It is shown that this region achieves within 2 bits of the $K$-user cyclic asymmetric network in the \emph{weak-interference regime}.  In the strong interference regime, it achieves capacity. (In their achievability proofs it suffices to consider  Gaussian inputs.) For $K=3$ the authors also present an improved Han-Kobayashi scheme involving time-sharing that achieves rates within 1.5 bits of capacity. Finally,  \cite{ZhouYu} also characterizesthe  generalized degrees of freedom (GDoF) of the symmetric capacity assuming that all cross-gains in the network are equal. Interestingly, this result shows that the GDoF of the $K$-user cyclic asymmetric Wyner network with equal cross-gains has the same GDoF as the standard two-user interference channel \cite{EtkinTseWang08}.}

Other related results on Wyner-type networks can be found in \cite{BergelYellinShamai,  Shamai-Somekh-Simeone-Sanderovich-Zaidel-Poor-JWCC-2007,
SimeoneSomekhPoorShamai2009, SimeoneSomekhKramerPoorShamai2008, hanly, Bergel, Bjornson, Simeone,  Jafar2, Jafar3}.

The lower bounds in our paper  are based on coding strategies that silence some of the transmitters and thereby split the network into non-interfering subnetworks that can be treated separately. Depending on the considered setup, a different scheme is then used for the transmission in the subnetworks. 
In some setups, a part of the messages is transmitted using an interference cancellation scheme and the other part is transmitted using Costa's dirty-paper coding. (Costa's dirty paper coding can also be replaced by a simple linear beamforming scheme as e.g.,  in \cite{Lapidoth-Shamai-Wigger-ITW07}, see also \cite{Maddah08, CadambeJafar08, Jafar11}.) In other setups, 
the messages are transmitted using one of the following elementary bricks of multi-user information theory depending on the available side-information: an optimal multi-input/multi-output (MIMO) scheme, an optimal MIMO multi-access scheme, or an optimal MIMO broadcast scheme. \mw{Introducing also Han-Kobayashi type ideas to our coding strategies might improve the performance of our schemes for finite SNR. }

Our upper bounds rely on an extension of Sato's multi-access channel (MAC) bound
\cite{Sato-IT-1981} to apply for more general interference networks with more than two transmitters and receivers and where the transmitters and the receivers have side-information (see also
\cite{Kramer-upper,Lapidoth-Shamai-Wigger-ISIT07, GamalAnnapureddy4} and in particular \cite[Lemma~1 and Theorem~3]{Lapidoth-Shamai-Wigger-ITW07}). 
More specifically, we first partition the $K$ receivers into groups $\mathcal{A}$ and $\mathcal{B}_1,\ldots, \mathcal{B}_q$, and
 we allow the receivers in Group~$\mathcal{A}$ to cooperate. Then, we let a
genie reveal specific linear combinations of the noise sequences to
these receivers in Group $\mathcal{A}$. 
Finally, we request that the receivers in Group~$\mathcal{A}$ jointly decode all messages $M_1,\ldots,
M_K$ whereas all other receivers do not have to decode anything. 
We choose the genie-information so that: for each $i=1,\ldots, q$, if the receivers in Group~ $\mathcal{A}$ can successfully decode their own  messages and the messages intended for the receivers in groups~$\mathcal{B}_1,\ldots, \mathcal{B}_{i-1}$, then 
they can also reconstruct the outputs observed at the receivers in Group $\mathcal{B}_i$. In this case, they can also  decode the messages intended for the receivers in group $\mathcal{B}_i$ at least as well as the Group~$\mathcal{B}_i$ receivers. This iterative argument is used to show that the capacity
region of the resulting MAC is included in the capacity region of
the original network. The upper bound is then concluded by upper bounding the multiplexing gain of the  MAC.

We conclude this section with notation and an outline of the paper. 
\mw{Throughout the paper, $\mathbb{R}$, $\mathbb{N}$, and $\mathbb{N}_0$ denote the sets of real numbers, natural numbers, and nonnegative integers. Their $m$-fold Cartesian products are denoted $\mathbb{R}^n$, $\mathbb{N}^m$, and $\mathbb{N}_0^m$.} Also,  $\log(\cdot)$ denotes the natural logarithm, and  $a \mod b$ denotes the rest in the Euclidean division of $a$ by $b$. Random variables are denoted by upper case letters, their realizations by lower case letters. Vectors are denoted by bold letters: random vectors by upper case bold letters and deterministic vectors by lower case bold letters. Given a sequence of random variables $X_1,\ldots, X_n$ we denote by $X^n$ the tuple $(X_1,\ldots, X_n)$ and by $\bfX$ the $n$-dimensional column-vector $\trans{(X_1,\ldots, X_n)}$. \mw{For sets we use calligraphic symbols, e.g., $\mathcal{A}$. The difference of two sets $\mathcal{A}$ and $\mathcal{B}$ is denoted $\mathcal{A}\backslash \mathcal{B}$. We further use the Landau symbols, and thus $o(x)$ denotes a function that grows sublinearly in $x$.}

The paper is organized as follows. In Section~\ref{sec:asym} we describe the channel model and the results for the asymmetric network; in Section~\ref{sec:sym} the channel model and the results for the symmetric network. In Section~\ref{sec:converse} we present a \emph{Dynamic-MAC Lemma} that we use to prove our converse results for the multiplexing-gain. In the  rest of the paper we prove our presented results: in Section~\ref{sec:thmg} the results for the asymmetric network; in Section~\ref{sec:symach} the achievability results for the symmetric network with symmetric side-information; in Section~\ref{app:lowerbound} the achievability results for the symmetric network with general side-information; and finally in Section~\ref{sec:ub} the converse results for the symmetric network with general side-information parameters.


\section{Asymmetric Network}\label{sec:asym}

\subsection{Description of the Problem}
\label{sec:DescriptionOfTheProblem}


We consider $K$ transmitter/receiver pairs that are labeled from $\{1,\ldots, K\}$. 
The goal of the communication is that, for each
$k\in\K$, Transmitter~$k$ conveys Message $M_k$ to Receiver~$k$.  The
messages $\{M_k\}_{j=1}^K$ are assumed to be independent with $M_{k}$
being uniformly distributed over the set
$\mathcal{M}_{k}\triangleq\{1,\ldots, \lfloor e^{n R_k}\rfloor\}$,
where $n$ denotes the block-length of transmission and $R_k$ the
rate of transmission of  Message $M_k$.

In our setup, all the transmitters and the receivers are equipped with a single antenna and the channels are  discrete-time and real-valued. Denoting the time-$t$ channel input at Transmitter~$k \in \{1,\ldots, K\}$ 
by $x_{k,t}$, the time-$t$ channel output at Receiver~$k$'s antenna can be expressed as:
\begin{equation}
\label{channel}
Y_{k,t}= x_{k,t} + \alpha_k x_{k-1,t} +N_{k,t}, \quad k\in\{1,\ldots, K\};
\end{equation}
where for each $k\in\{1,\ldots, K\}$ the noise sequence $\{N_{k,t}\}$ is an independent sequence of independent and identically distributed (i.i.d.) standard Gaussians; where the cross-gain $\alpha_k$ is some given non-zero real number; and where to simplify notation we defined $x_{0,t}$  to be deterministically 0 for all times $t$. Thus, the communication of the $k$-th transmitter/receiver pair is interfered only by the communication of the transmitter/receiver pair to its left; see Figure~\ref{fig: setting-handoff}.

It is assumed that each
transmitter beside its own message is 
also cognizant of the $t_\ell\geq 0$ previous messages and the $t_r\geq 0$ following messages. That means, for
each $k\in\K$, Transmitter~$k$ knows messages $M_{k-t_\ell}, \ldots, M_k, \ldots, M_{k+t_r}$,
where  $M_{-t_\ell+1}, \ldots, M_0$ and $M_{K+1}, \ldots, M_{K+t_r}$ are defined to be
deterministically zero.  Thus, Transmitter~$k$ can produce its sequence
of channel inputs $X_k^n$
as
\begin{equation}\label{eq:enc}
{X}^n_{k} = f_k^{(n)}(M_{k-t_\ell}, \ldots, M_k, \ldots, M_{k+t_r}),
\end{equation}
for some encoding function 
\begin{equation}\label{eq:encofun}
f_{k}^{(n)}\colon
\mathcal{M}_{k-t_\ell}\times \cdots \times \mathcal{M}_{k} \times \cdots \times \mathcal{M}_{k+t_r} \rightarrow \R^n.\end{equation}

The channel input sequences are subject to symmetric average
block-power constraints, i.e., with probability 1 they have to satisfy
\begin{equation}\label{eq:power}
\frac{1}{n} \sum_{t=1}^n X_{k,t}^2 
\leq P, \quad k \in \K.
\end{equation}

\newcommand{\param}{t_\ell, t_r, r_\ell, r_r}

Each receiver observes the signals received at its own
antenna,  at the $r_\ell\geq 0$ antennas to
its left, and at the $r_r\geq 0$ antennas to its right.  
Receiver~$k$, for $k\in\K$, can thus produce its guess of Message $M_k$ based on
the output sequences ${Y}^n_{k-r_\ell}, \ldots,{Y}^n_{k+r_r}$, i.e., as 
\begin{equation}\label{eq:dec}
\hat{M}_{k} \triangleq \varphi^{(n)}_k({Y}_{k-r_\ell}, \ldots, {Y}^n_{k+r_r}) , 
\end{equation}
for some decoding function 
\begin{equation}\label{eq:decofun}
\varphi_k^{(n)}\colon \mathbb{R}^{n(r_\ell+r_\ell+1)} \rightarrow \mathcal{M}_k,
\end{equation}
where 
${Y}^n_{-r_\ell+1},\ldots,{Y}^n_{0}$ and ${Y}^n_{K+1},\ldots, {Y}^n_{K+r_r}$ are assumed to
be deterministically 0.

The parameters $t_\ell, t_r, r_\ell, r_r\geq 0$ are given positive integers. We call $t_\ell$ and $t_r$ the \emph{transmitter side-information parameters} and $r_\ell$ and $r_r$ the \emph{receiver side-information parameters}. Similarly, we call $t_\ell$ and $r_\ell$ the \emph{left side-information parameters} and $t_r$ and  $r_r$ the \emph{right side-information parameters}.

For the described setup we say that a rate-tuple $(R_1,\ldots, R_K)$ is achievable if, 
as the block-length $n$
tends to infinity, the average probability of error decays to 0, i.e.,
\begin{equation*}
\lim_{n\rightarrow 0}\Pr \left[(M_1,\ldots, M_{K}) \neq (\hat{M}_1,\ldots, \hat{M}_K)\right] =0.
\end{equation*}
The closure of the set of all rate-tuples $(R_1,\ldots, R_K)$ that are achievable is called the capacity region, which we denote by $\Ca^{\Asym}$. To make the dependence on the number of transmitter/receiver pairs $K$, the side-information parameters $\param$, and the power $P$ explicit, we mostly write $\Ca^{\Asym}(K,\param;P)$.
The sum-capacity is defined as the supremum of the sum-rate $\sum_{k=1}^{K} R_k$ over all achievable tuples $(R_1,\ldots,
R_K)$ and is denoted by $\Ca^{\Asym}_{\Sigma}(K, t_\ell,t_r,
r_\ell,r_r;P)$.
Our main focus in this work is on the high-SNR asymptote of the sum-capacity which is characterized
by the \emph{multiplexing gain}:\footnote{The multiplexing gain is
  also referred to as the ``high-SNR slope", ``pre-log", or ``degrees
  of freedom"}
\begin{eqnarray*}
 \eta^{\textnormal{Asym}} (K,t_\ell,t_r,r_\ell,r_r)\triangleq \varlimsup_{P\rightarrow\infty}  \frac{\Ca^{\textnormal{Asym}}_{\Sigma}(K,t_\ell,t_r,r_\ell,r_r;P)}{\frac{1}{2}\log(P) }, \end{eqnarray*}
and for large networks $(K\gg 1)$ by the \emph{asymptotic multiplexing gain per user}:
\begin{eqnarray*}
\eta^{\textnormal{Asym}}_{\infty}(t_\ell,t_r,r_\ell,r_r) \triangleq \varlimsup_{K\rightarrow\infty}  \frac{ \eta^{\textnormal{Asym}} (K,t_\ell,t_r,r_\ell,r_r)}{K}. \end{eqnarray*}
\subsection{Results}
\label{sec:TheSoftHandoffNetwork}

\begin{theorem}\label{th:mg}
The multiplexing gain of the asymmetric model is 
\begin{equation}\label{eq:mg}
\eta^{\textnormal{Asym}}(K,t_\ell,t_r,r_\ell,r_r)=K-\left\lceil\frac{ K-t_\ell-r_\ell-1}{t_\ell+t_r+r_\ell+r_r+2} \right\rceil .\end{equation}
\end{theorem}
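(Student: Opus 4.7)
The plan is to silence $q := \lceil (K - t_\ell - r_\ell - 1)/(t_\ell + t_r + r_\ell + r_r + 2) \rceil$ transmitters at carefully chosen positions so that the remaining active transmitters/receivers split into non-interfering consecutive blocks, each of which can be operated at its full multiplexing gain. Concretely, I would place silent transmitters at positions spaced by $t_\ell + t_r + r_\ell + r_r + 2$, starting after an enlarged leftmost block of up to $(t_\ell + r_\ell + 1) + (t_\ell + t_r + r_\ell + r_r + 1)$ active pairs; this exploits the left-boundary effect that the messages $M_{j \le 0}$ and the outputs $Y_{j \le 0}$ are identically zero (and so act as extra ``free'' side-information for the leftmost transmitters and receivers). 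Because silencing $X_{k}$ removes the interference $\alpha_{k+1} X_{k}$ from $Y_{k+1}$, each remaining block is an isolated subnetwork with a lower-bidiagonal channel matrix. Within such a block of $B$ active pairs, the $t_\ell + t_r + 1$-fold cognition at each transmitter and the $r_\ell + r_r + 1$-fold clustering at each receiver are, by the block-size choice, sufficient to realize the block as either a MIMO broadcast channel (pre-cancelling the known interference using Costa's dirty-paper coding) or equivalently a MIMO multi-access channel decoded over the clustered antenna windows, whose pre-log equals $B$. Summing over all blocks gives the claimed sum pre-log of $K - q$.

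\textbf{Converse.} For the upper bound I would invoke the Dynamic-MAC Lemma of Section~\ref{sec:converse}. I would partition the $K$ receivers into a cooperating subset $\mathcal{A}$ and residual groups $\mathcal{B}_1, \ldots, \mathcal{B}_q$ positioned so that the $\mathcal{A}$-receivers sit at a periodic spacing of $t_\ell + t_r + r_\ell + r_r + 2$ (with a left-boundary offset of $t_\ell + r_\ell + 1$ matching the one used in achievability), and each $\mathcal{B}_i$ fills the gap between two consecutive $\mathcal{A}$-receivers. The genie then reveals to $\mathcal{A}$ precisely the noise samples needed so that once $\mathcal{A}$ has decoded its own messages and those of $\mathcal{B}_1, \ldots, \mathcal{B}_{i-1}$, it can (using the transmitter cognition pattern to compute the relevant $X_k$'s and the genie to supply the noise) reconstruct the channel outputs of $\mathcal{B}_i$ symbol by symbol. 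By the Dynamic-MAC Lemma this reduces the original network to a single MAC whose receive space is the union of the clustered antenna windows of $\mathcal{A}$; a standard high-SNR analysis of this MAC then bounds $\sum_k R_k / (\tfrac{1}{2}\log P)$ from above by $K - q$.

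\textbf{Main obstacle.} The main difficulty is the bookkeeping required to make the two constructions meet exactly: choosing silent-transmitter positions, the set $\mathcal{A}$, and the genie information in a consistent way so that the asymmetric left-boundary correction $-t_\ell - r_\ell - 1$ in the numerator of (\ref{eq:mg}) is reproduced in both directions of the proof. The asymmetry reflects the one-sided flow of interference (only $X_{k-1}$ leaks into $Y_k$), which makes left cognition/clustering more valuable than right cognition/clustering for finite $K$, and it must be handled carefully when locating the first silent transmitter in the achievability scheme and when locating the first receiver of $\mathcal{A}$ in the converse. A secondary challenge is verifying that within an interior block no single transmitter knows all $B$ messages and no single receiver sees all $B$ antennas, yet combined dirty-paper/MIMO processing still attains full pre-log $B$; this relies on the broadcast/multi-access toolbox referenced in the introduction.
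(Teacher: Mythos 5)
Your overall framework matches the paper's: achievability by silencing transmitters to split the line into non-interfering subnets and running dirty-paper/successive-cancellation schemes within each, and a converse via the Dynamic-MAC Lemma with a carefully chosen genie. However, your achievability scheme has a concrete error in how it exploits the boundary, and this is not a mere bookkeeping issue.

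You propose an ``enlarged leftmost block of up to $(t_\ell+r_\ell+1)+(t_\ell+t_r+r_\ell+r_r+1)$ active pairs'' followed by a silenced transmitter, on the grounds that the left boundary supplies free side-information. Such a block cannot be operated at full pre-log. With the silenced transmitter at position $t_\ell+r_\ell+\beta+1$ (writing $\beta=t_\ell+t_r+r_\ell+r_r+2$), the left-anchored SIC/DPC chain reaches at most transmitters $1,\ldots,t_\ell+r_\ell+1$ (receiver $r_\ell+1$ is the last that still sees the interference-free antenna $Y_1$, and transmitter $r_\ell+t_\ell+1$ is the last whose left-cognition window still reaches back to $M_{r_\ell+1}$), while the right-anchored SIC/DPC chain emanating from the cleanup antenna at the silenced position reaches at most $t_r+r_r$ positions back. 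That covers only $t_\ell+r_\ell+1+t_r+r_r=\beta-1$ of your $t_\ell+r_\ell+\beta$ active transmitters; the remaining $t_\ell+r_\ell+1$ in the middle have neither enough left nor enough right side-information and cannot be served at pre-log~1. A quick counterexample: take $t_\ell=t_r=r_\ell=r_r=1$ (so $\beta=6$, $t_\ell+r_\ell+1=3$) and $K=8$. The theorem gives multiplexing gain $7$ via silencing transmitter $6$: the generic block $\{1,\dots,6\}$ yields pre-log $5$ and the reduced terminal block $\{7,8\}$ yields pre-log $2$. Your enlarged-leftmost construction would silence at position $9$, which doesn't exist; taking instead a ``small'' first block $\{1,2,3\}$ and silencing $4$ leaves $\{5,\dots,8\}$, which has a clean left but no cleanup antenna on its right (there is nothing at position $9$) and so only achieves pre-log $3$, for a total of $6<7$.

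The asymmetry of the correction term $-t_\ell-r_\ell-1$ is real, as you note, but it manifests at the \emph{right} end of the line, not the left. Interference flows only leftward into each receiver, so every subnet needs a clean left side; the leftmost subnet gets this for free from the network boundary, exactly as every interior subnet gets it for free from the previous silenced transmitter. What the silenced transmitter additionally provides is the interference-free ``cleanup'' antenna at the subnet's \emph{right} edge, which is indispensable for the right-anchored DPC/SIC chains. The only place that can be dispensed with is the last subnet, because its last $\le t_\ell+r_\ell+1$ messages can all be decoded using the left-anchored chain alone. This is precisely the paper's placement of the reduced subnet at the end. For the converse, your instinct to use the Dynamic-MAC Lemma with a periodic $\mathcal{A}$ is right; what the paper actually does is take $\mathcal{A}$ to be a union of contiguous blocks of $t_\ell+t_r+1$ receivers located so that the antennas $\{1+m\beta\}_{m\ge 0}$ (offset $1$, not $t_\ell+r_\ell+1$) are the ones $\mathcal{A}$ cannot see, and constructs a genie from $\gamma$ specific weighted noise sums so that the unseen outputs can be reconstructed once $\mathcal{A}$'s own messages are decoded. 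Your proposal would need to be reworked to match this structure before it could close.
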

\begin{proof}
See Section~\ref{sec:ach} for the direct part and
Section~\ref{sec:convTh1} for the converse.
\end{proof}

Specializing Theorem~\ref{th:mg} to the case $r_\ell=r_r=0$ where each receiver has access only to its own
receive antenna, recovers the result in \cite{Lapidoth-Shamai-Wigger-ISIT07}.

\begin{remark}
  Notice that Expression \eqref{eq:mg} depends only on the sum of the
  {left side-information parameters} $t_\ell+r_\ell$ and on
  the sum of the {right side-information parameters} $t_r+r_r$.  This shows an equivalence between cognition of messages at
  the transmitters and clustered local decoding at the receivers.

  Notice however that the left side-information parameters $r_\ell$
  and $t_\ell$ do not play the same role as the right side-information
  parameters $t_r$ and $r_r$.  In fact, left side-information can be more valuable (in terms of increasing the multiplexing gain)
  than right side-information. 
    
The difference in
the roles of left and right side-information is only a
boundary effect and vanishes when $K\rightarrow \infty$, see Corollary~\ref{th:mginfty} and Remark~\ref{rem:leftrightsym} ahead.\end{remark}

As a corollary to Theorem~\ref{th:mg} we can derive the
\emph{asymptotic multiplexing gain per user}.
\begin{corollary}\label{th:mginfty}
The asymptotic  multiplexing gain per user
of the asymmetric network is 
\begin{equation}\label{eq:mginfty}
\eta^{\textnormal{Asym}}_{\infty}(t_\ell,t_r,r_\ell,r_r)= \frac{t_\ell+t_r+r_\ell+r_r+1}{t_\ell+t_r+r_\ell+r_r+2}.
\end{equation}
\end{corollary}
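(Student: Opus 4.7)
The plan is to derive the corollary as a direct limit computation from the exact formula in Theorem~\ref{th:mg}. Introduce the shorthand $s \triangleq t_\ell+t_r+r_\ell+r_r+2$, so that Theorem~\ref{th:mg} reads
\begin{equation*}
\eta^{\textnormal{Asym}}(K,t_\ell,t_r,r_\ell,r_r) \;=\; K \;-\; \left\lceil\frac{K-t_\ell-r_\ell-1}{s}\right\rceil.
\end{equation*}
The side-information parameters are fixed while $K\to\infty$, so the only $K$-dependence on the right-hand side is through the leading $K$ and through the ceiling term.

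First I would sandwich the ceiling using the elementary bound $x\le\lceil x\rceil < x+1$ valid for all real $x$. This gives
\begin{equation*}
K - \frac{K-t_\ell-r_\ell-1}{s} - 1 \;<\; \eta^{\textnormal{Asym}}(K,t_\ell,t_r,r_\ell,r_r) \;\le\; K - \frac{K-t_\ell-r_\ell-1}{s}.
\end{equation*}
Dividing throughout by $K$ and noting that the constants $t_\ell$, $r_\ell$, $1$ all contribute terms of order $1/K$, both the lower and upper bounds converge to the same limit
\begin{equation*}
1 - \frac{1}{s} \;=\; \frac{s-1}{s} \;=\; \frac{t_\ell+t_r+r_\ell+r_r+1}{t_\ell+t_r+r_\ell+r_r+2}.
\end{equation*}
By the squeeze argument this is exactly the value of the $\varlimsup$ defining $\eta^{\textnormal{Asym}}_\infty(t_\ell,t_r,r_\ell,r_r)$, establishing \eqref{eq:mginfty}.

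There is no real obstacle here: Theorem~\ref{th:mg} already supplies a closed-form expression for $\eta^{\textnormal{Asym}}(K,\ldots)$ valid for every $K$, and the corollary is merely a statement about its asymptotic slope in $K$. The only mild subtlety worth flagging is that the $\varlimsup$ in the definition of $\eta^{\textnormal{Asym}}_\infty$ becomes an honest limit here, since the ceiling correction is bounded by $1$ and therefore disappears after normalizing by $K$; this is what justifies replacing the $\varlimsup$ by equality in \eqref{eq:mginfty}.
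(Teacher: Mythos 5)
Your proof is correct and is precisely the argument the paper implicitly relies on: the paper states Corollary~\ref{th:mginfty} as a direct consequence of Theorem~\ref{th:mg} without spelling out the limit computation, and your sandwich of the ceiling followed by normalization by $K$ is the standard and intended route. The remark that the $\varlimsup$ becomes a genuine limit is a good touch and accurately resolves the only notational subtlety.
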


\begin{remark}\label{rem:leftrightsym}
  The asymptotic multiplexing gain per-user in \eqref{eq:mginfty}
  depends on the parameters $t_\ell$, $t_r$, $r_\ell$, and $r_r$ only
  through their sum. Thus, in the considered
  setup the asymptotic multiplexing gain per-user only depends on the
  total amount of side-information at the transmitters and receivers
  and not on how the side-information is distributed. In particular, cognition of
  messages at the transmitters  and clustered
  local decoding at the receivers are equally valuable, and---despite
  the asymmetry of the interference network---also left and right
  side-information are equally
  valuable.
\end{remark}

\mw{  El Gamal, Annapureddy, and Veeravalli~\cite{GamalAnnapureddy4}  showed that when $r_\ell=r_r=0$ and when for each message one can freely choose the set of  $t_\ell+t_r+1$ transmitters to which this message is assigned,  then the asymptotic multiplexing gain per-user is equal to $\frac{2(t_\ell+t_r+1)}{2(t_\ell+t_r+1)+1}$ and thus larger than $\eta_\infty^{\textnormal{Asym}}$ in~\eqref{eq:mginfty}. They also showed that in this modified setup, each message $M_k$  should again be assigned to $t_\ell+t_r+1$ adjacent transmitters, but these transmitters do not necessarily include Transmitter~$k$. }

\section{Symmetric Network}
\label{sec:sym}
\subsection{Description of the Problem}\label{sec:symmodel} 
The symmetric network is defined in the same way as the asymmetric network in Section~\ref{sec:asym}, except that the channel law (\ref{channel}) is replaced by
\begin{eqnarray}\label{eq:symchannel}
Y_{k,t}= \alpha_{k,\ell} X_{k-1,t} + X_{k,t} + \alpha_{k,r} X_{k+1,t} +N_{k,t}, \hspace{1cm} \nonumber \\ \hfill \quad k\in\{1,\ldots, K\}.
\end{eqnarray}
Like for the asymmetric network, for each $k\in\{1,\ldots,K\}$ the
symbol $X_{k,t}$ denotes Transmitter~$k$'s channel input at time
$t$;  the symbols  $X_{0,t}$ and $X_{K+1,t}$ are
deterministically zero; the cross-gains $\{\alpha_{k,\ell}, \alpha_{k,r}\}$ are given non-zero real numbers; and $\{N_{k,t}\}$ are 
i.i.d. standard
Gaussians. Let $\mat{H}_{\textnormal{Net}}$ denote the $K$-by-$K$ channel matrix of the entire network: its row-$j$, column-$i$ element equals 1 if $j=i$, it equals $\alpha_{j,\ell}$ if $j-i=1$,  it equals $\alpha_{j, r}$ if $j-i=-1$, and it equals 0 otherwise.

The message cognition at the transmitters 
is again described by the nonnegative
integers $t_\ell$ and $t_r$ and the encoding rules in \eqref{eq:enc},
and the clustered decoding  by the nonnegative integers
$r_\ell$ and $r_r$ and the decoding rules in~\eqref{eq:dec}. 

The channel input sequences have to satisfy the power
constraints~\eqref{eq:power}.

Achievable rates, channel
capacity, sum-capacity, multiplexing gain, and the asymptotic multiplexing gain per user are defined
analogously to Section~\ref{sec:asym}. For this symmetric model and for a
given positive integer $K>0$, nonnegative integers $t_\ell, t_r,
r_\ell, r_r \geq 0$, and power $P>0$ the capacity region is
denoted by $\Ca^{\textnormal{Sym}}(K,t_\ell, t_r, r_\ell, r_r;P)$, the
sum-capacity by
$\Ca^{\textnormal{Sym}}_{\Sigma}(K,t_\ell,t_r,r_\ell,r_r;P)$, the multiplexing gain by
$\eta^{\textnormal{Sym}}(K,t_\ell,t_r,r_\ell,r_r)$, and the asymptotic
multiplexing gain per user by
$\eta^{\textnormal{Sym}}_{\infty}(t_\ell,t_r,r_\ell,r_r)$.

\mw{We shall mostly restrict attention to equal cross-gains, i.e., $\alpha_{k,\ell}=\alpha_{k,r}=\alpha$ for all $k\in\{1,\ldots, K\}$ and some $\alpha \neq 0$. However, our proof techniques extend also to non-equal cross gains. In fact, by inspection of the proofs, one sees that they only depend on the cross-gains through the ranks of various principal submatrices of the  network's channel matrix and the fact that the cross-gains are nonzero. 

A formulation of our results for general cross-gains would involve conditions on the rank of various principal submatrices of the network's channel matrix and be very cumbersome. We therefore omit it. Instead, we will extend our results extend to a setup where all cross-gains are drawn according to a continuous distribution.\footnote{Such cross-gains are typically called \emph{generic} \cite{CadambeJafar08,Jafar11}. Here, we refrain from calling them so as to avoid confusion with generic subnets which we introduce in our achievability proofs.} In this case, all principal submatrices of the  channel matrix  are full rank  and all cross-gains are nonzero with probability~$1$. 
}

\subsection{Results}\label{sec:det}

\mw{We mostly restrict attention to equal cross-gains, i.e., $\alpha_{k,\ell}=\alpha_{k,r}=\alpha$ for all $k\in\{1,\ldots, K\}$ and some $\alpha \neq 0$. }
This assumption motivates the following definition of a channel matrix.
\begin{definition}
For every positive integer $p\geq1$ and real number $\alpha$ we define
$\mat{H}_p(\alpha)$ to be the  $p\times p$ matrix with diagonal elements all equal to 1, elements above and below the diagonal equal to $\alpha$, and all other elements equal to 0.
\end{definition}
Notice that under the assumption of all equal cross-gains $\alpha$, the network's channel matrix is $\mat{H}_{\textnormal{Net}}=\mat{H}_K(\alpha)$. 

We first present our results for \emph{symmetric side-information}
where 
\begin{equation}\label{eq:equal}
t_\ell+r_\ell=t_r+r_r, 
\end{equation}
followed by our results 
for general side-information parameters $r_\ell,t_\ell, r_r, t_r\geq 0$. We treat the special case with symmetric side-information separately, because for this case we have stronger results than for general side-information.

\subsubsection{Symmetric Side-Information}
\label{sec:ressymsi}
Throughout this subsection we assume that the parameters
$t_\ell, t_r, r_\ell, r_r$ satisfy~\eqref{eq:equal}.
\begin{theorem}[Symmetric Side-Information]\label{th:symsym}
Depending on the value of $\alpha$ and the parameters $K, t_\ell, t_r, r_\ell, r_r$, the multiplexing gain satisfies the following conditions.
\begin{enumerate}
\item \label{e1} If $K\leq t_\ell+r_\ell+1$:
\begin{equation}\label{eq:S1}
\eta^{\Sym}(K,t_\ell,t_r, r_\ell, r_r) = K-\delta_1,
\end{equation}
where 
$\delta_1$ equals 1, if $\deta{\mat{H}_{K}(\alpha)}=0$ and 0 otherwise.
\item \label{e2} If $K> t_\ell+r_\ell+2$ and  $\deta{\mat{H}_{t_\ell+r_\ell+1}(\alpha)}\neq 0$:
\begin{IEEEeqnarray}{rCl}\label{eq:S2a}
K-\left \lfloor \frac{K}{t_\ell+r_\ell+2}\right \rfloor -1 &\leq& \eta^{\Sym}(K,t_\ell,t_r, r_\ell, r_r) \nonumber \\
 &  \leq& K-\left \lfloor \frac{K}{t_\ell+r_\ell+2}\right \rfloor. \nonumber \\ 
\end{IEEEeqnarray}
 \item \label{e3} If  $K> t_\ell+r_\ell+2$;  $\deta{\mat{H}_{t_\ell+r_\ell+1}(\alpha)}\neq 0$; and  $\deta{\mat{H}_{t_\ell+r_\ell}(\alpha)}\neq 0$, then\begin{equation}\label{eq:S2}
\eta^{\Sym}(K,t_\ell,t_r, r_\ell, r_r)  = K-\left \lfloor \frac{K}{t_\ell+r_\ell+2}\right \rfloor.
\end{equation} 
(This third case  is a special case of the second case. It is interesting because almost all values of $\alpha$ lead to this case and because for this case we can improve the lower bound in \eqref{eq:S2a} to meet the upper bound.)
\item \label{e4} If $K> t_\ell+r_\ell+2$ and $\deta{\mat{H}_{t_\ell+r_\ell+1}(\alpha)}= 0$:
\begin{IEEEeqnarray}{rCl}\label{eq:S3}
 K-\left \lfloor \frac{K}{t_\ell+r_\ell+1}\right \rfloor &\leq &\eta^{\Sym}(K,t_\ell,t_r, r_\ell, r_r) \nonumber \\ &  \leq& K- 2\left \lfloor \frac{K}{2(t_\ell+r_\ell)+3} \right \rfloor - \delta_2,\nonumber \\ 
\end{IEEEeqnarray}
where $\delta_2$ equals 1 if $(K \mod (2(t_\ell+r_\ell)+3)) > (t_\ell+r_\ell+1)$ and 0 otherwise. 
\end{enumerate}
\end{theorem}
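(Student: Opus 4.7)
The plan is to treat the four cases of Theorem~\ref{th:symsym} in a unified way, splitting each into an achievability (lower bound) and a converse (upper bound). Throughout, I would follow the recipe sketched in the introduction: for achievability, silence a carefully chosen periodic subset of transmitters so that the remaining terminals decompose the network into identical non-interfering sub-nets, each of which can be operated with an optimal MIMO, MAC, or BC scheme to achieve its full multiplexing gain; for the converse, apply the Dynamic-MAC Lemma of Section~\ref{sec:converse} with a genie that reveals enough noise samples so that a cooperating group of receivers can iteratively reconstruct the outputs at the remaining ones.

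For achievability, Case~(\ref{e1}) treats the whole network as a single sub-net of $K$ users: the symmetric-side-information condition $t_\ell+r_\ell=t_r+r_r$ ensures that at least one transmitter is cognizant of every message and at least one receive cluster observes every antenna, so the network behaves as a cooperative MIMO with channel matrix $\mat{H}_K(\alpha)$ and achieves multiplexing gain equal to its rank, namely $K-\delta_1$. For Cases~(\ref{e2}) and~(\ref{e3}), I would silence every $(t_\ell+r_\ell+2)$-nd transmitter to produce sub-nets of $t_\ell+r_\ell+1$ consecutive active users, whose channel matrices $\mat{H}_{t_\ell+r_\ell+1}(\alpha)$ are invertible by hypothesis, giving the lower bound in~\eqref{eq:S2a}. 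Under the extra assumption $\deta{\mat{H}_{t_\ell+r_\ell}(\alpha)}\neq 0$ of Case~(\ref{e3}), the leftover boundary block can be turned into an active sub-net of size $t_\ell+r_\ell$, saving one silenced transmitter and recovering the missing unit to match the upper bound in~\eqref{eq:S2}. For Case~(\ref{e4}), where $\mat{H}_{t_\ell+r_\ell+1}(\alpha)$ is singular, I would instead silence every $(t_\ell+r_\ell+1)$-st transmitter and operate sub-nets of size $t_\ell+r_\ell$, whose channel matrices can be shown to be nonsingular using the tridiagonal structure of $\mat{H}_p(\alpha)$; this yields the lower bound in~\eqref{eq:S3}.

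For the matching upper bounds I would apply the Dynamic-MAC Lemma by placing into the cooperating group $\mathcal{A}$ the receivers opposite the silenced positions together with enough of their clusters, and into singleton groups $\mathcal{B}_i$ the remaining receivers. The genie reveals the noise realizations at the $\mathcal{B}_i$ positions, so that once $\mathcal{A}$ has decoded its own messages and those of $\mathcal{B}_1,\ldots,\mathcal{B}_{i-1}$ it can reconstruct the outputs at $\mathcal{B}_i$; this bounds the sum-multiplexing-gain by the rank of the principal submatrix of $\mat{H}_K(\alpha)$ indexed by $\mathcal{A}$. Sizing $|\mathcal{A}|$ to equal $K-1$ in the singular subcase of Case~(\ref{e1}), $K-\lfloor K/(t_\ell+r_\ell+2)\rfloor$ in Cases~(\ref{e2}) and~(\ref{e3}), and $K-2\lfloor K/(2(t_\ell+r_\ell)+3)\rfloor-\delta_2$ in Case~(\ref{e4}) produces the claimed upper bounds, provided the corresponding principal submatrices are verified to have the right rank. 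The main obstacle I expect will be Case~(\ref{e4}): the period $2(t_\ell+r_\ell)+3$ and the correction $\delta_2$ reflect a delicate arithmetic balance between the rank deficit created by the singularity of $\mat{H}_{t_\ell+r_\ell+1}(\alpha)$ and the rank that can be recovered by inserting one longer block per period, and coordinating the silencing pattern for achievability with the genie configuration for the converse so that they meet at this quantity is the most subtle step of the proof.
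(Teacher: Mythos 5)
Your achievability plan matches the paper's. You silence periodic blocks of transmitter/receiver pairs so that the channel matrix of each non-interfering subnet is an $\mat{H}_p(\alpha)$ with $p\le t_\ell+r_\ell+1$, then depending on which side has more side-information run an optimal MIMO point-to-point, MIMO-BC, or MIMO-MAC scheme over each subnet, achieving $\mathrm{rank}(\mat{H}_p(\alpha))$. The Case~(\ref{e3}) and Case~(\ref{e4}) refinements -- shifting a boundary silenced pair so that the leftover block size hits a non-singular $\mat{H}_p$, and in Case~(\ref{e4}) exploiting that singularity of $\mat{H}_{t_\ell+r_\ell+1}(\alpha)$ forces $\mat{H}_{t_\ell+r_\ell}(\alpha)$ to be non-singular (Lemma~\ref{lem:alpha}) -- are exactly the paper's moves. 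One small imprecision: you say ``silence transmitters,'' but in the symmetric network you must silence transmitter/receiver \emph{pairs} (ignore the receive antenna too) or the subnets do not actually decouple at the decoding side.

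The converse sketch contains a genuine gap. You propose that the genie reveal ``the noise realizations at the $\mathcal{B}_i$ positions.'' This does not let you verify Assumption~\eqref{eq:assum_fun}: to reconstruct $\bfY_k=\alpha\bfX_{k-1}+\bfX_k+\alpha\bfX_{k+1}+\bfN_k$ from $\bfN_k$ you would also need $\bfX_{k-1},\bfX_k,\bfX_{k+1}$, and these depend on messages that in general lie \emph{outside} $\mathcal{A}_i$ at the point in the iteration when you need them (you can check this already in the simpler asymmetric converse of Section~\ref{sec:convTh1}: the messages needed for $\bfX_{m\beta}$ are not all in $\mathcal{A}$). The paper instead reveals carefully chosen \emph{linear combinations} of noise vectors (e.g.~\eqref{gen0}--\eqref{gen1} and~\eqref{eq:genieodd}--\eqref{eq:genieeven}), where the coefficients are built from rows of inverses or null vectors of tridiagonal submatrices. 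These make the interference chain telescope so that only one or two reconstructible inputs $\bfX_j$ (computable from messages in $\mathcal{A}$) survive. Without this, the reconstruction step fails.

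Two further corrections. First, the Dynamic-MAC Lemma bounds the multiplexing gain by $|\mathcal{R}_{\mathcal{A}}|$, the \emph{number} of receive antennas accessed by the cooperating group, not by ``the rank of the principal submatrix of $\mat{H}_K(\alpha)$ indexed by $\mathcal{A}$''; the Gaussian noise fills the entire $|\mathcal{R}_{\mathcal{A}}|$-dimensional space regardless of the signal's rank. Second, the converse for Case~(\ref{e1}) in the singular subcase cannot be obtained from the Dynamic-MAC Lemma at all (it would give only $\eta\le K$); the paper handles it with a separate full-cooperation argument that reduces the network to a $K\times K$ MIMO channel of rank $K-1$, which is what actually delivers $\delta_1$. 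Your idea that the period $2(t_\ell+r_\ell)+3$ in Case~(\ref{e4}) exploits the linear dependence induced by $\deta{\mat{H}_{t_\ell+r_\ell+1}(\alpha)}=0$ is correct in spirit -- the paper encodes this dependence in the coefficients $d_j$ that appear in the genie~\eqref{eq:genie1} -- but making it work requires the explicit linear-combination genie, not raw noise.
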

\begin{proof}
The achievability results are proved in Section~\ref{sec:symach}. 
The converse in~\eqref{eq:S1} can be proved by first allowing all the transmitters to cooperate and all the receivers to cooperate, and then using the well-known expression for the capacity of the multi-antenna Gaussian point-to-point channel. The converse to~\eqref{eq:S2a} and~\eqref{eq:S2} follows by specializing Upper bound~\eqref{eq:genub1} in Proposition~\ref{th:genub} ahead to $t_\ell+r_\ell=t_r+r_r$. Similarly, the converse to \eqref{eq:S3} follows by specializing  \eqref{eq:genub2} to $t_\ell+r_\ell=t_r+r_r$. 
\end{proof}
\mw{
\begin{remark}
Inspecting the achievability and the converse proofs of~\eqref{eq:S2}, we see that they continue to hold for arbitrary cross-gains provided that they are non-zero and that various principal submatrices of the network's channel matrix $\mat{H}_{\textnormal{Net}}$ have full rank. When the cross-gains are drawn at random according to a continuous distribution both these properties are satisfied with probability $1$, and thus for this random setup with symmetric side-information parameters the multiplexing gain is as given in~\eqref{eq:S2} (with probability~$1$). 
\end{remark}}
\begin{remark}
\mw{We observe that when $\mat{H}_{t_\ell+r_\ell+1}(\alpha)$ and $\mat{H}_{t_\ell+r_\ell}(\alpha)$ are full rank (or when the cross-gains are drawn from a continuous distribution), the multiplexing gain only depends on the sum of the
side-information parameters $(t_\ell+r_\ell)$. Or equivalently they only depend on the sums $(t_r+r_r)$ or
$(t_\ell+t_r+r_\ell+r_r)$. Thus, in these cases,}
message cognition at the transmitters and clustered local decoding at
the receivers are equivalent in terms of increasing the
multiplexing gain.
\end{remark}

The following corollary is obtained from  Theorem~\ref{th:symsym} by letting $K$ tend to $\infty$.
\begin{corollary}\label{th:symmgperuser}
If 
$\deta{\mat{H}_{t_\ell+r_\ell+1}(\alpha)}\neq 0$, then
the asymptotic multiplexing gain per-user is given by
        \begin{equation}\label{eq:MGSYM}
        \eta^{\textnormal{Sym}}_{\infty}(t_\ell,t_r,r_\ell,r_r)=\frac{t_\ell+r_\ell+1}{t_\ell+r_\ell+2}.
        \end{equation}
Otherwise, it satisfies
      \[\frac{t_\ell+r_\ell}{t_\ell+r_\ell+1} \leq \eta^{\textnormal{Sym}}_{\infty}(t_\ell,t_r,r_\ell,r_r)\leq \frac{t_\ell+r_\ell+\frac{1}{2}}{t_\ell+r_\ell+\frac{3}{2}}.\]
      Thus, for a few values $\alpha\neq0$ the asymptotic multiplexing gain per-user drops.
\end{corollary}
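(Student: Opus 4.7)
The plan is to obtain the corollary as a direct consequence of the multiplexing-gain bounds from Theorem~\ref{th:symsym}, by dividing both sides of those bounds by $K$ and letting $K\to\infty$. The only analytic input required is that, for any fixed positive integer $m$, one has $\lfloor K/m\rfloor / K \to 1/m$ as $K\to\infty$, together with the fact that the additive constants $1$ and $\delta_2 \in\{0,1\}$ that appear in the bounds vanish when divided by $K$.

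In the case $\deta{\mat{H}_{t_\ell+r_\ell+1}(\alpha)}\neq 0$, I would invoke Part~\ref{e2} of Theorem~\ref{th:symsym}, which gives
\begin{equation*}
K-\left\lfloor\frac{K}{t_\ell+r_\ell+2}\right\rfloor-1 \;\leq\; \eta^{\Sym}(K,t_\ell,t_r,r_\ell,r_r) \;\leq\; K-\left\lfloor\frac{K}{t_\ell+r_\ell+2}\right\rfloor.
\end{equation*}
Dividing throughout by $K$, both the upper and lower bounds converge to $1-\frac{1}{t_\ell+r_\ell+2}=\frac{t_\ell+r_\ell+1}{t_\ell+r_\ell+2}$ as $K\to\infty$, so by the sandwich argument the $\limsup$ defining $\eta^{\Sym}_\infty$ takes exactly this value, establishing~\eqref{eq:MGSYM}.

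In the complementary case $\deta{\mat{H}_{t_\ell+r_\ell+1}(\alpha)}=0$, I would apply Part~\ref{e4} of the theorem, namely
\begin{equation*}
K-\left\lfloor\frac{K}{t_\ell+r_\ell+1}\right\rfloor \;\leq\; \eta^{\Sym}(K,t_\ell,t_r,r_\ell,r_r) \;\leq\; K-2\left\lfloor\frac{K}{2(t_\ell+r_\ell)+3}\right\rfloor-\delta_2,
\end{equation*}
where $\delta_2\in\{0,1\}$. Dividing by $K$, the lower bound tends to $1-\frac{1}{t_\ell+r_\ell+1}=\frac{t_\ell+r_\ell}{t_\ell+r_\ell+1}$, while the upper bound tends to $1-\frac{2}{2(t_\ell+r_\ell)+3}=\frac{2(t_\ell+r_\ell)+1}{2(t_\ell+r_\ell)+3}=\frac{t_\ell+r_\ell+\frac{1}{2}}{t_\ell+r_\ell+\frac{3}{2}}$, since $\delta_2/K\to 0$. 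Taking $\limsup$ yields exactly the claimed two-sided inequality for $\eta^{\Sym}_\infty$.

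There is no real obstacle here: the corollary is essentially an arithmetic corollary of Theorem~\ref{th:symsym}, and the only thing to be careful about is to check that the floor functions and the constants $1$ and $\delta_2$ do not contribute in the limit. The qualitative observation stated at the end of the corollary, that $\eta^{\Sym}_\infty$ \emph{drops} for the exceptional values of $\alpha$, follows by comparing the closed-form value $\frac{t_\ell+r_\ell+1}{t_\ell+r_\ell+2}$ of the nondegenerate case against the interval $\bigl[\frac{t_\ell+r_\ell}{t_\ell+r_\ell+1},\frac{t_\ell+r_\ell+1/2}{t_\ell+r_\ell+3/2}\bigr]$ of the degenerate case and noting that the upper endpoint of the latter is strictly smaller than $\frac{t_\ell+r_\ell+1}{t_\ell+r_\ell+2}$.
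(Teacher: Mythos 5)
Your proof is correct and coincides with the paper's intended argument: the paper merely says the corollary is ``obtained from Theorem~\ref{th:symsym} by letting $K$ tend to $\infty$,'' and you have filled in precisely that routine passage to the limit, dividing the multiplexing-gain bounds by $K$ and observing that the floor terms and the bounded additive constants contribute $1/m$ and $0$ respectively.
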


\begin{remark}
When $\deta{\mat{H}_{t_\ell+r_\ell+1}(\alpha)}\neq 0$, then to obtain the same asymptotic multiplexing-gain per-user 
in this symmetric network as in the
asymmetric network before,  we need double the ``amount'' of
side-information $t_\ell+t_r+r_\ell+r_r$.
\end{remark}

\mw{El Gamal et al. \cite{GamalAnnapureddy4} showed that also here a larger asymptotic multiplexing gain per-user is achievable when the messages are assigned to the transmitters in a different way (even when $r_\ell=r_r=0$). In particular, if each message can be freely assigned to $t_\ell+t_r+1$ transmitters, then an asymptotic multiplexing gain per-user of $\frac{2(t_\ell+t_r+1)}{2(t_\ell+t_r+1)+2}$ is achievable \cite{GamalAnnapureddy4}, which is larger than $\eta_\infty^{\textnormal{Sym}}$ in \eqref{eq:MGSYM}.}

\begin{example}
\label{ex:7}
Consider a symmetric network with symmetric side-information
$r_\ell+t_\ell=r_r+t_r=2$. Let $K$ be $7$. Then, if $\alpha\notin \{-\sqrt{2}/2,
\sqrt{2}/2\}$, by Theorem~\ref{th:symsym} the multiplexing gain
is 6, and in contrast,  if $\alpha\in\{-\sqrt{2}/2, \sqrt{2}/2\}$ the multiplexing gain is only 5. 

By Corollary~\ref{th:symmgperuser} the asymptotic multiplexing
gain  per-user is $3/4$, if $\alpha\notin \{-\sqrt{2}/2, \sqrt{2}/2\}$, but it
is at most $5/7$ (which is smaller than $3/4$) if
$\alpha\in\{-\sqrt{2}/2,\sqrt{2}/2\}$.
\end{example}

Notice however, that even though the multiplexing gain is discontinuous at certain values of $\alpha$, this does not 
imply that for fixed powers $P$ also the sum-rate capacity of the network is discontinuous in $\alpha$. 

We conclude this section with a result on 
the \emph{high-SNR power-offset} which is defined as
\begin{eqnarray*}
\lefteqn{\mathcal{L}^{\textnormal{Sym}}_{\infty}(K, t_\ell, t_r, r_\ell,
r_r)}\\ & \triangleq & \varlimsup_{P\rightarrow\infty}  \left(\frac{\eta^{\Sym}}{2}\log(P)  -  \Ca^
{\textnormal{Sym}}_{\Sigma}(K,t_\ell,t_r,r_\ell,r_r;P) \right).
\end{eqnarray*}
\begin{proposition}[Symmetric Side-Information] \label{prop:poweroffset}
Assume (\ref{eq:equal}). Let  $\alpha^*$  be
such that $\deta{\mat{H}_{r_\ell+t_\ell+1}(\alpha^*)}=0$. Also, let $K=q (r_\ell+t_\ell+2)-1$ for some positive integer $q$. Then, there exists a function $c_0(\cdot)$, bounded in the neighborhood of $\alpha^*$ such that for all
$\alpha$ sufficiently close to $\alpha^*$:
\begin{equation*}
\mathcal{L}_\infty^{\textnormal{Sym}} (K, t_\ell, t_r, r_\ell,
r_r)\geq  -\nu\log|\alpha-\alpha^*| + c_0(\alpha^*),
\end{equation*}
where $\nu$ is the multiplicity of $\alpha^*$ as a root of the polynomial $\deta{\mat{H}_{r_\ell+t_\ell+1}(X)}$.
\end{proposition}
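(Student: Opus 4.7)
The plan is to expose, in the genie-aided Dynamic-MAC upper bound on $\Ca_\Sigma^{\Sym}$ used in the converse of Theorem~\ref{th:symsym}(\ref{e2}), an explicit $\nu\log|\alpha-\alpha^*|$ term that becomes the claimed lower bound on the power offset. Write $d:=t_\ell+r_\ell+1$, so $K=q(d+1)-1$. Because each of $\deta{\mat{H}_d(X)}$ and $\deta{\mat{H}_{d-1}(X)}$ is a polynomial with only finitely many zeros, there exists a deleted neighborhood of $\alpha^*$ on which both are nonzero; Theorem~\ref{th:symsym}(\ref{e3}) then gives $\eta^{\Sym}(K,\param)=qd$ on this neighborhood.

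First, I would apply the Dynamic-MAC Lemma of Section~\ref{sec:converse} with the partition that appears in the converse of Theorem~\ref{th:symsym}(\ref{e2}): take the singleton groups $\mathcal{B}_i:=\{i(d+1)\}$ for $i=1,\ldots,q-1$, and let $\mathcal{A}$ be the remaining $qd$ indices, which form $q$ contiguous length-$d$ blocks. Revealing the noise sequences $\{N_k: k\in\bigcup_i\mathcal{B}_i\}$ to Group-$\mathcal{A}$ fulfils the iterative-reconstruction hypothesis, so the Lemma yields
\begin{equation*}
\Ca_\Sigma^{\Sym}(K,\param;P)\;\leq\;\tfrac{1}{2}\log\deta{\mat{I}_{qd}+P\,\mat{H}_\mathcal{A}(\alpha)\trans{\mat{H}_\mathcal{A}(\alpha)}}+O(1),
\end{equation*}
where $\mat{H}_\mathcal{A}(\alpha)$ is the $qd\times K$ submatrix of $\mat{H}_K(\alpha)$ consisting of the rows indexed by $\mathcal{A}$ and the $O(1)$ is uniform in $\alpha$ on a neighborhood of $\alpha^*$.

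Second, I would factor out the singular contribution from the log-determinant. Splitting the columns of $\mat{H}_\mathcal{A}(\alpha)$ into the $qd$ columns indexed by $\mathcal{A}$ and the $q-1$ columns indexed by $\bigcup_i\mathcal{B}_i$: the former piece is block-diagonal with $q$ diagonal blocks $\mat{H}_d(\alpha)$; each column of the latter piece has exactly two nonzero entries (both equal to $\alpha$) coupling the bottom row of one block to the top row of the next. Consequently the block-diagonal piece has rank $q(d-\nu)$ at $\alpha^*$ and the full matrix has rank at most $qd-(q(\nu-1)+1)\leq qd-1$, so $\deta{\mat{H}_\mathcal{A}(\alpha)\trans{\mat{H}_\mathcal{A}(\alpha)}}$ vanishes at $\alpha^*$; a Cauchy--Binet expansion combined with the factorization $\deta{\mat{H}_d(\alpha)}=(\alpha-\alpha^*)^\nu\, r(\alpha)$ with $r(\alpha^*)\neq 0$ then yields
\begin{equation*}
\deta{\mat{H}_\mathcal{A}(\alpha)\trans{\mat{H}_\mathcal{A}(\alpha)}}\;\leq\;Q(\alpha)\cdot|\alpha-\alpha^*|^{2\nu},
\end{equation*}
where $Q$ is bounded on a neighborhood of $\alpha^*$.

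Using the high-SNR expansion
\[
\tfrac{1}{2}\log\deta{\mat{I}_{qd}+P\,\mat{M}\trans{\mat{M}}}=\tfrac{qd}{2}\log P+\tfrac{1}{2}\log\deta{\mat{M}\trans{\mat{M}}}+o_P(1),
\]
valid on the deleted neighborhood where $\mat{H}_\mathcal{A}(\alpha)$ has full row rank $qd$, one obtains
\begin{equation*}
\Ca_\Sigma^{\Sym}(K,\param;P)\;\leq\;\tfrac{qd}{2}\log P+\nu\log|\alpha-\alpha^*|+c_0(\alpha^*)+o_P(1),
\end{equation*}
with $c_0$ bounded on a neighborhood of $\alpha^*$. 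Since $\eta^{\Sym}=qd$ there, subtracting gives the claimed bound on $\mathcal{L}_\infty^{\Sym}$. The main obstacle will be justifying the Cauchy--Binet estimate in the previous step: the $q-1$ coupling columns contribute additional Cauchy--Binet summands that must be shown not to cancel the $|\alpha-\alpha^*|^{2\nu}$ leading behaviour---equivalently, one must show that no $qd\times qd$ column-minor of $\mat{H}_\mathcal{A}(\alpha)$ has a zero of order lower than $\nu$ at $\alpha^*$. This reduces to a structural property of the null vectors of the tridiagonal matrix $\mat{H}_d(\alpha^*)$ at the boundary rows, which can be verified via their explicit Chebyshev-polynomial representation.
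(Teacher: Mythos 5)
Your proposal takes a genuinely different route from the paper and, as written, contains two separate gaps.

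\emph{(i) The pure-noise genie does not satisfy the reconstruction hypothesis of the Dynamic-MAC Lemma.} With $\mathcal{B}_i=\{i(d+1)\}$ (where $d=t_\ell+r_\ell+1$) and only $\{\bfN_k : k\in\bigcup_i\mathcal{B}_i\}$ revealed, consider $r_\ell=r_r=0$: the group-$\mathcal{A}$ receiver cannot form $\bfY_{i(d+1)}$ because it needs $\bfX_{i(d+1)}=f^{(n)}_{i(d+1)}(M_{i(d+1)-t_\ell},\ldots,M_{i(d+1)+t_r})$, and $M_{i(d+1)}\in\mathcal{B}_i$ is not available in $\mathcal{A}_i=\mathcal{A}\cup\mathcal{B}_1\cup\cdots\cup\mathcal{B}_{i-1}$, so assumption~\eqref{eq:assum_fun} fails. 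If instead $r_\ell+r_r\geq 1$, every antenna $i(d+1)$ is already observed by an adjacent receiver in $\mathcal{A}$, so $\mathcal{R}_\mathcal{A}=\mathcal{K}$ and the lemma's conclusion $\eta\leq|\mathcal{R}_\mathcal{A}|=K$ is vacuous. In either case the bound you then write down, $\tfrac{1}{2}\log\deta{\mat{I}+P\,\mat{H}_\mathcal{A}\trans{\mat{H}_\mathcal{A}}}$ with $\mat{H}_\mathcal{A}$ a $qd\times K$ row submatrix, conflates $\mathcal{A}$ (the receiver set) with $\mathcal{R}_\mathcal{A}$ (the observed-antenna set), and does not follow from the lemma's derivation.

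\emph{(ii) The Cauchy--Binet step is open, not just technical.} Because $\mat{H}_d(\alpha^*)$ is symmetric tridiagonal with nonzero off-diagonals, its kernel has dimension exactly one for any $\alpha^*$, independently of the root multiplicity $\nu$ of $\deta{\mat{H}_d(X)}$ at $\alpha^*$. The rank argument therefore only shows a single rank drop of $\mat{H}_\mathcal{A}$ at $\alpha^*$, not the order-$2\nu$ vanishing of $\deta{\mat{H}_\mathcal{A}\trans{\mat{H}_\mathcal{A}}}$ that you need. Your claim that no $qd\times qd$ column minor vanishes to order lower than $\nu$ is exactly what must be proved, and the Chebyshev-polynomial argument you gesture at is nontrivial.

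The paper avoids both obstacles by a different construction. Writing $v_p(\alpha)\triangleq\deta{\mat{H}_p(\alpha)}/(-\alpha)^p$ and $L=t_\ell+r_\ell$, the genie is $\bfV_0=-\alpha\, v_{L+1}(\alpha)\,\bfX_{L+1}+\sum_{j=0}^{L}v_j(\alpha)\bfN_{j+1}$. The transmit-signal term $\bfX_{L+1}$---computable from the group-$\mathcal{A}$ messages---is precisely what makes the iterative reconstruction go through, and its coefficient $v_{L+1}(\alpha)$ vanishes to order exactly $\nu$ at $\alpha^*$. The cognitive-MAC sum-rate is then split as $I\big(\{\bfY_k\}_{k\in\mathcal{R}_\mathcal{A}};M_1,\ldots,M_K\,\big|\,\{\bfV_m\}\big)+I\big(\bfV_0;M_1,\ldots,M_K\big)$: the first term has prelog $K-2\gamma'''-1=qd-1$ and a uniformly bounded offset, while the second is bounded by $\tfrac{n}{2}\log\!\left(1+P\alpha^2 v_{L+1}^2(\alpha)/\|(v_0,\ldots,v_L)\|_2^2\right)=\tfrac{n}{2}\log P+n\nu\log|\alpha-\alpha^*|+O(n)$. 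The $\nu\log|\alpha-\alpha^*|$ offset falls out of a single scalar mutual-information term, with no determinant combinatorics at all. This is a fundamentally different mechanism from the one you propose.
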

In other words, when $\alpha$ approaches the critical value $\alpha^*$, the power offset goes to infinity.
\begin{proof}
See Appendix~\ref{sec:poweroffset}.
\end{proof}

\subsubsection{Results for General Parameters $t_\ell, t_r, r_\ell, r_r
  \geq 0$}\label{sec:gensym}

\begin{proposition}\label{prop:genlb}
The multiplexing gain of the symmetric network with general side-information parameters satisfies the following four lower bounds.
\begin{enumerate}
\item \label{1} It is lower bounded by:
                \begin{equation}
                \eta^{\textnormal{Sym}}(K,t_\ell,t_r,r_\ell,r_r)\geq
                K-2\left \lfloor \frac{ K}{t_\ell+t_r+r_\ell+r_r } \right \rfloor - \theta_1, \label{eq:lowergen1}
                \end{equation}
                where \begin{equation*}
                \theta_1 = \begin{cases} 0 & \textnormal{if }\kappa_1 =0\\ 1 &  \textnormal{if }\kappa_1=1 \\ 2 &  \textnormal{if }\kappa_1\geq 2
                \end{cases} 
                \end{equation*} for 
                \begin{equation*}
                \kappa_1\triangleq  ( K \mod (_\ell+t_r+r_\ell+tr_r)) .
                \end{equation*}
    \item  \label{2}Moreover, irrespective of the right side-information $t_r$ and $r_r$:
  \begin{equation}
                \eta^{\textnormal{Sym}}(K,t_\ell,t_r,r_\ell,r_r)\geq
                K-2\left \lfloor \frac{ K}{t_\ell+r_\ell+1} \right \rfloor - \theta_2,\label{eq:lowergen4}
                \end{equation}
                where \begin{equation*}
                \theta_2 = \begin{cases} 0 & \textnormal{if }\kappa_2 =0\\ 1 &  \textnormal{if }\kappa_2=1 \\ 2 &  \textnormal{if }\kappa_2\geq 2
                \end{cases} 
                \end{equation*} for 
                \begin{equation*}
                \kappa_2\triangleq  ( K \mod (t_\ell+r_\ell)) .
                \end{equation*}          
\item \label{3}The  lower bound \eqref{eq:lowergen4} in \ref{2}) remains valid if on the right-hand side of \eqref{eq:lowergen4} we replace the parameters $t_\ell$ and $r_\ell$ by $t_r$ and $r_r$.                                
\item \label{4} Finally, irrespective of the transmitter side-information $t_\ell$ and $t_r$: 
  \begin{IEEEeqnarray}{rCl}\label{eq:Sl2}
\eta^{\textnormal{Sym}}(K,t_\ell,t_r,r_\ell,r_r)\geq K-2\left\lfloor \frac{K}{r_\ell+r_t+3}\right \rfloor - \theta_3,\nonumber \\ \label{eq:lowergen2}
  \end{IEEEeqnarray}
    where \begin{equation*}
                \theta_3 = \begin{cases} 0 & \textnormal{if }\kappa_3 =0\\ 1 &  \textnormal{if }\kappa_3=1 \\ 2 &  \textnormal{if }\kappa_3\geq 2
                \end{cases} 
                \end{equation*} for 
                \begin{equation*}
                \kappa_3\triangleq  ( K \mod (r_\ell+r_r+3)) .
                \end{equation*}
\end{enumerate}              
\end{proposition}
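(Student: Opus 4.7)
The plan is to establish each of the four lower bounds by the common ``silence and conquer'' recipe described in the introduction: silence a carefully chosen periodic pattern of transmitters to partition the $K$-user network into independent active subnets, then apply inside each subnet a multi-user coding scheme tailored to the side-information allowed by the particular bound. Because the symmetric channel~\eqref{eq:symchannel} couples every transmitted signal to three adjacent received signals, each ``gap'' between two subnets must consist of two consecutive silenced transmitters in order to cleanly decouple the subnets on either side; one silenced transmitter alone would still leave a signal crossing the gap.

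For the four bounds I would take the silencing period to match the denominator of each claim: $p_1=t_\ell+t_r+r_\ell+r_r$ for Bound~(\ref{1}), $p_2=t_\ell+r_\ell+1$ for Bound~(\ref{2}), $p_3=t_r+r_r+1$ for Bound~(\ref{3}), and $p_4=r_\ell+r_r+3$ for Bound~(\ref{4}), so that each full period consists of two silenced transmitters followed by an active subnet of size $p_i-2$. Within each subnet the plan is to achieve one degree of freedom per active transmitter. For Bound~(\ref{1}) the scheme uses the full side-information budget by combining transmit-side dirty-paper coding (or linear beamforming) exploiting $t_\ell,t_r$ with receive-side MIMO multi-access decoding exploiting $r_\ell,r_r$, in the spirit of the ``elementary bricks'' listed in the introduction. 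For Bounds~(\ref{2}) and~(\ref{3}) only left (respectively right) side-information is used: the subnet of size $t_\ell+r_\ell-1$ (respectively $t_r+r_r-1$) is small enough that the left (resp.\ right) cognition and antenna-clustering alone suffice to decode all subnet messages, reusing the asymmetric-network construction of Section~\ref{sec:ach}. For Bound~(\ref{4}) no transmit cognition is used: each receiver in the subnet accesses $r_\ell+r_r+1$ antennas which, because of the two silenced transmitters on either side of the subnet, together observe a full-rank $(r_\ell+r_r+1)\times(r_\ell+r_r+1)$ linear system in the $r_\ell+r_r+1$ unknown active signals (full rank whenever the cross-gains are non-zero), so plain zero-forcing MIMO decoding at each receiver recovers its message.

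The boundary analysis produces the $\theta_i$ terms: writing $\kappa_i=K\bmod p_i$, the $\lfloor K/p_i\rfloor$ complete periods contribute $(p_i-2)\lfloor K/p_i\rfloor$ active transmitters, and the incomplete last period of $\kappa_i$ transmitters contributes $\kappa_i$, $\kappa_i-1$, or $\kappa_i-2$ additional active transmitters depending on whether it is empty, a single transmitter (which must itself be silenced to separate it from the preceding subnet), or has at least two transmitters (the first two of which must be silenced), giving precisely the case-split $\theta_i\in\{0,1,2\}$. The main technical obstacle is to verify that the within-subnet schemes for Bounds~(\ref{1})--(\ref{3}) genuinely achieve $p_i-2$ degrees of freedom while using \emph{only} the slice of side-information allowed by that bound; this requires a careful distribution of the roles of dirty-paper coding, linear beamforming, MIMO multi-access, and MIMO broadcast among the subnet's transmitters and receivers, paralleling and adapting the constructions already developed in Section~\ref{sec:ach} for the asymmetric network and in Section~\ref{sec:symach} for the symmetric network with symmetric side-information. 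For Bound~(\ref{4}) the verification reduces to checking invertibility of certain $(r_\ell+r_r+1)$-dimensional principal submatrices of $\mat{H}_{\textnormal{Net}}$, which holds whenever the cross-gains are non-zero.
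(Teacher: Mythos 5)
Your high-level ``silence and conquer'' architecture matches the paper: you correctly identify the four silencing periods ($t_\ell+t_r+r_\ell+r_r$, $t_\ell+r_\ell+1$, $t_r+r_r+1$, $r_\ell+r_r+3$), the need for \emph{two} consecutive silenced transmitters per gap to decouple subnets in the symmetric channel, and the origin of the $\theta_i\in\{0,1,2\}$ boundary terms. For Bounds~(\ref{2}) and~(\ref{3}) you correctly point to adapting the asymmetric-network construction, though you gloss over the main adaptation the paper makes explicit: since each active antenna now suffers interference from both sides, both the successive-interference-cancellation steps and the dirty-paper codes must simultaneously cancel \emph{two} interferers $\alpha X_{k-2} + X_{k-1}$ (rather than just one), which is why the active subnet shrinks to $t_\ell+r_\ell-1$ rather than something larger.

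The genuine gap is in your treatment of Bound~(\ref{4}). You claim that ``each receiver in the subnet accesses $r_\ell+r_r+1$ antennas which \ldots\ observe a full-rank $(r_\ell+r_r+1)\times(r_\ell+r_r+1)$ linear system in the $r_\ell+r_r+1$ unknown active signals \ldots\ so plain zero-forcing MIMO decoding at each receiver recovers its message.'' This is false for two reasons. First, the geometry is wrong: only the \emph{central} receiver $r_\ell+2$ of the subnet sees all $r_\ell+r_r+1$ active transmit antennas. A receiver $k\in\{2,\ldots,r_\ell+1\}$ near the left edge sees only antennas $\max(1,k-r_\ell),\ldots,k+r_r$, which observe a strict subset of the active inputs (the corresponding square system has dimension $k+r_r$, not $r_\ell+r_r+1$), and, for subnets beyond the first, the extreme accessible antennas actually observe leakage signals from the neighboring subnet, so those antennas must be \emph{discarded}, not inverted. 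Second, even the central receiver's matrix is the tridiagonal $\mat{H}_{r_\ell+r_r+1}(\alpha)$, which \emph{can} be singular for certain non-zero $\alpha$ (e.g.\ $\alpha=\pm\sqrt{1/2}$ for size~$3$); ``full rank whenever the cross-gains are non-zero'' does not hold for this submatrix. The paper's actual achievability has the left group of receivers decode by successive interference cancellation sweeping from the leftmost subnet antenna, the right group sweep from the rightmost antenna, and only the single central receiver invoke a MIMO sum-rate inequality, so that full observability is required at exactly one receiver rather than at all of them. Your per-receiver zero-forcing scheme, applied naively, would stall at every non-central receiver.
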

\begin{proof}
See Section~\ref{app:lowerbound}.
\end{proof}
The lower bound in \ref{2}) is useful only when $t_r=r_r=0$, the lower bound in \ref{3}) only when $t_\ell=r_\ell=0$, and the bound in \ref{4}) only when $t_\ell+t_r \leq 2$.

\begin{proposition}\label{th:genub}
The multiplexing gain is upper bounded by the following three upper bounds. 
\begin{enumerate}
\item \label{1a}
It is upper bounded by:
\begin{IEEEeqnarray}{rCl}\label{eq:genub1}
\lefteqn{ \eta^{\textnormal{Sym}}(K,t_\ell,t_r,r_\ell,r_r)}\qquad \nonumber \\&  \leq& K-2\left \lfloor \frac{K}{t_\ell+t_r+r_\ell+r_r+4} \right \rfloor-{\theta_4},\IEEEeqnarraynumspace
\end{IEEEeqnarray}
 where    \begin{equation*}
                \theta_4 = \begin{cases} 0 & \textnormal{if }\kappa_4 <\min\{t_\ell+r_\ell+2, t_r+r_r+2\} \\ 1 &  \textnormal{if }\kappa_4 \geq \min\{t_\ell+r_\ell+2, t_r+r_r+2\}
                \end{cases} 
                \end{equation*} for 
                \begin{equation*}
                \kappa_4\triangleq  ( K \mod (t_\ell+t_r+r_\ell+r_r+4)) .
                \end{equation*}
  \item  \label{2a} Moreover, if $\deta{\mat{H}_{r_\ell+t_\ell+1}(\alpha)}=0$:
  \begin{IEEEeqnarray}{rCl}\label{eq:genub2}
\lefteqn{ \eta^{\textnormal{Sym}}(K,t_\ell,t_r,r_\ell,r_r)}\qquad \nonumber \\&  \leq&
  K-2\left \lfloor \frac{K}{t_\ell+t_r+r_\ell+r_r+3} \right \rfloor-{\theta_5},\IEEEeqnarraynumspace
  \end{IEEEeqnarray}
where  \begin{equation*}
                \theta_5 = \begin{cases} 0 & \textnormal{if }\kappa_5 <t_r+r_r+1 \\ 1 &  \textnormal{if }\kappa_5 \geq t_r+r_r+1
                \end{cases} 
                \end{equation*} for 
                \begin{equation*}
                \kappa_5\triangleq  ( K \mod (t_\ell+t_r+r_\ell+r_r+3)) .
                \end{equation*}
\item  \label{3a} The upper bound in \ref{2a}) holds also if everywhere (except for $\eta^{\textnormal{Sym}}(K,t_\ell,t_r,r_\ell,r_r)$) one exchanges the subscripts $\ell$ and $r$.
\end{enumerate}
\end{proposition}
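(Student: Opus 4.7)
The strategy is to invoke the Dynamic-MAC Lemma of Section~\ref{sec:converse}: for each of the three bounds I exhibit a partition of the $K$ receivers into a cooperating group $\mathcal{A}$ and silent groups $\mathcal{B}_1,\ldots,\mathcal{B}_q$, together with a genie that provides $\mathcal{A}$ with noise revelations permitting the claimed iterative decoding of all $K$ messages. The multiplexing gain is then bounded by that of the resulting cooperative MAC, which---because of the tridiagonal structure of $\mat{H}_{\textnormal{Net}}$---is at most the cardinality of $\mathcal{A}$'s observation set $\bigl(\bigcup_{k\in\mathcal{A}}\{k-r_\ell,\ldots,k+r_r\}\bigr)\cap\{1,\ldots,K\}$.

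For Part~\ref{1a}, set $L:=t_\ell+t_r+r_\ell+r_r+4$ and split $\{1,\ldots,K\}$ into $\lfloor K/L\rfloor$ consecutive blocks of length $L$ followed by a remainder of length $\kappa_4$. In each full block I reserve a contiguous silent strip of $r_\ell+r_r+2$ positions, placed so that its two innermost positions fall outside $\mathcal{A}$'s observation set; the remaining $t_\ell+t_r+2$ positions of the block join $\mathcal{A}$, and the silent strip becomes a new $\mathcal{B}_i$. This yields an observation set of size $K-2\lfloor K/L\rfloor-\theta_4$, where $\theta_4$ accounts for whether a partial silent strip fits inside the residual block (it does precisely when $\kappa_4\geq\min\{t_\ell+r_\ell+2,\,t_r+r_r+2\}$). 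The iterative-reconstruction hypothesis holds because the silent strips are separated by $\mathcal{A}$-stretches of length $t_\ell+t_r+2$, which provides enough message-cognition overlap at both edges of each strip for $\mathcal{A}$, once the relevant noise linear combinations are revealed, to reconstruct the silent outputs.

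For Part~\ref{2a}, the hypothesis $\deta{\mat{H}_{r_\ell+t_\ell+1}(\alpha)}=0$ permits shortening the silent strip by one: take $L:=t_\ell+t_r+r_\ell+r_r+3$ and a silent strip of $r_\ell+r_r+1$ positions per block. Two of these positions still lie outside the observation set, because the rank deficiency of $\mat{H}_{r_\ell+t_\ell+1}(\alpha)$ introduces an extra null direction in the effective channel matrix that effectively removes one additional cluster position. The correction $\theta_5$ handles the partial last block analogously. Part~\ref{3a} follows from Part~\ref{2a} by applying the argument to the left-right reflection of the network, which is an automorphism of the symmetric model up to index relabeling.

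The main obstacle I anticipate is verifying the iterative-reconstruction hypothesis at the sharpest admissible silent-strip placement: one must choose the genie's noise revelations and the ordering of $\mathcal{B}_1,\ldots,\mathcal{B}_q$ so that the silent outputs can be reconstructed using only already-decoded messages, the transmitter cognition spread of $t_\ell+t_r+1$ per message, and the revealed noises. In Part~\ref{2a} this additionally requires explicitly exploiting the nontrivial kernel of $\mat{H}_{r_\ell+t_\ell+1}(\alpha)$ to absorb the missing buffer position; without this rank deficiency the sharper bound would not be admissible.
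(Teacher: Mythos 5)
Your Part~\ref{1a} argument matches the paper in spirit: you partition the $K$ indices into periods of length $t_\ell+t_r+r_\ell+r_r+4$, take $\mathcal{A}$ to be a run of $t_\ell+t_r+2$ consecutive indices per period so that exactly two receive antennas per period fall outside $\mathcal{R}_{\mathcal{A}}$, and you correctly flag that the genie and the iterative decoding order of $\mathcal{B}_1,\ldots,\mathcal{B}_q$ are the parts that need careful work. What drives the reconstruction in the paper is that this choice of $\mathcal{A}$ allows exactly \emph{two} input sequences $\bfX_j$ per period to be computed from $\{M_k\}_{k\in\mathcal{A}}$, which together with the observed outputs and the inverses of the bidiagonal matrices $\mat{M}_p(\alpha)$ lets one express the two missing outputs as known linear combinations plus the genie noise.

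Your Part~\ref{2a}, however, has a genuine arithmetic and conceptual gap. You shrink the \emph{silent strip} to $r_\ell+r_r+1$ positions while keeping an $\mathcal{A}$-block of length $t_\ell+t_r+2$. But with a silent strip of length $r_\ell+r_r+1$, the adjacent $\mathcal{A}$-blocks already observe $r_\ell+r_r$ of those positions, so only \emph{one} antenna per period is outside $\mathcal{R}_{\mathcal{A}}$, and the Dynamic-MAC Lemma gives $\eta \leq K - \gamma_5 - \theta_5$, which is weaker than the claimed $K - 2\gamma_5 - \theta_5$. Your appeal to ``an extra null direction that effectively removes one additional cluster position'' does not help: the bound supplied by the Dynamic-MAC Lemma is $|\mathcal{R}_{\mathcal{A}}|$, a set cardinality determined purely by which antennas the $\mathcal{A}$-group observes; a rank deficiency in a channel submatrix cannot remove an antenna from $\mathcal{R}_{\mathcal{A}}$.

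The correct construction (and the one the paper uses) keeps the silent strip at the \emph{same} length $r_\ell+r_r+2$ as in Part~\ref{1a} --- so two antennas per period remain unobserved --- and instead shrinks the $\mathcal{A}$-block to $t_\ell+t_r+1$ consecutive indices, making the period $t_\ell+t_r+r_\ell+r_r+3$. The cost is that now only \emph{one} input $\bfX_j$ per period is reconstructible from $\{M_k\}_{k\in\mathcal{A}}$, which is not enough for a single-round reconstruction. Two things rescue the argument: first, the hypothesis $\deta{\mat{H}_{t_\ell+r_\ell+1}(\alpha)}=0$ provides a linear relation among the observable outputs (the row of $\mat{H}_{t_\ell+r_\ell+1}$ in the kernel direction), which substitutes for the missing known input; second, and crucially, the Dynamic-MAC Lemma must be invoked with $q>1$: the receivers whose outputs are reconstructed in one round are then made to decode their messages (forming a new group $\mathcal{B}_i$), and those newly decoded messages unlock further inputs for the next round of reconstruction. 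Your proposal treats the multi-group iteration as a bookkeeping detail, but in Part~\ref{2a} it is the load-bearing element; without it the reconstruction hypothesis of the lemma simply cannot be met under the tighter period.
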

\begin{proof}
See Section~\ref{sec:ub}.
\end{proof}
From Propositions~\ref{prop:genlb} and~\ref{th:genub} we obtain the following corollary.
\begin{corollary}\label{cor:gen}
Irrespective of the parameter $\alpha$, the asymptotic multiplexing
gain per user satisfies
\begin{eqnarray*}
\lefteqn{\max\left\{ \frac{r_\ell+r_r+1}{r_\ell+r_r+3},\frac{t_\ell+t_r+r_\ell+r_r-2}{t_\ell+t_r+r_\ell+r_r}\right\}\nonumber } \hspace{1.4cm} \\&\leq&
\eta^{\textnormal{Sym}}_\infty(t_\ell,t_r,r_\ell,r_r) \\ & &\hspace{1.4cm} \leq \frac{t_\ell+t_r+r_\ell+r_r+2}{t_\ell+t_r+r_\ell+r_r+4} .\end{eqnarray*}
\end{corollary}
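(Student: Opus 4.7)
The plan is to deduce the corollary directly from Propositions~\ref{prop:genlb} and~\ref{th:genub} by dividing each of the multiplexing-gain bounds by $K$ and passing to the limit $K\to\infty$. Since $\eta^{\textnormal{Sym}}_\infty$ is a $\varlimsup$ in $K$, the lower bounds survive provided we exhibit a sequence of $K$'s making the lower bound tight, and the upper bound survives because the $\varlimsup$ of a sequence dominated by a convergent one is bounded by the limit. So the entire argument reduces to careful arithmetic on the floor and remainder terms.

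For the upper bound, I would invoke part~\ref{1a}) of Proposition~\ref{th:genub}, namely the inequality~\eqref{eq:genub1}, which is the only upper bound in that proposition that is valid irrespective of $\alpha$. Dividing both sides by $K$ gives
\[
\frac{\eta^{\textnormal{Sym}}(K,t_\ell,t_r,r_\ell,r_r)}{K}
\leq 1 - \frac{2}{K}\left\lfloor \frac{K}{t_\ell+t_r+r_\ell+r_r+4}\right\rfloor - \frac{\theta_4}{K}.
\]
Since $\theta_4\in\{0,1\}$ is bounded and $\frac{1}{K}\lfloor K/(t_\ell+t_r+r_\ell+r_r+4)\rfloor \to \frac{1}{t_\ell+t_r+r_\ell+r_r+4}$, taking $\varlimsup$ in $K$ yields the claimed upper bound $\frac{t_\ell+t_r+r_\ell+r_r+2}{t_\ell+t_r+r_\ell+r_r+4}$.

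For the lower bound I would separately use parts~\ref{1}) and~\ref{4}) of Proposition~\ref{prop:genlb}, i.e.\ bounds~\eqref{eq:lowergen1} and~\eqref{eq:lowergen2}, both of which hold for every $\alpha\neq0$. The same kind of calculation (the correction terms $\theta_1,\theta_3\in\{0,1,2\}$ are bounded and therefore vanish after division by $K$, while the normalized floor terms converge to the reciprocals of the denominators) gives
\[
\eta_\infty^{\textnormal{Sym}}(t_\ell,t_r,r_\ell,r_r) \geq 1-\frac{2}{t_\ell+t_r+r_\ell+r_r}
= \frac{t_\ell+t_r+r_\ell+r_r-2}{t_\ell+t_r+r_\ell+r_r}
\]
and
\[
\eta_\infty^{\textnormal{Sym}}(t_\ell,t_r,r_\ell,r_r) \geq 1-\frac{2}{r_\ell+r_r+3}
= \frac{r_\ell+r_r+1}{r_\ell+r_r+3}.
\]
Taking the maximum of these two lower bounds yields the stated inequality.

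There is no real obstacle here: the only thing to be careful about is that the $\varlimsup$ in the definition of $\eta_\infty^{\textnormal{Sym}}$ is compatible with the bounds as stated (in particular, one selects subsequences of $K$ along which the floor terms realize the limit value $\frac{1}{\text{denominator}}$ for the lower bounds, while for the upper bound any sequence $K\to\infty$ suffices). The remainder corrections $\theta_1,\theta_3,\theta_4$ are uniformly bounded in $K$, so they are immaterial once divided by $K$.
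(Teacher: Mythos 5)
Your proof is correct and is exactly what the paper intends: the corollary is stated to follow ``From Propositions~\ref{prop:genlb} and~\ref{th:genub}'' with no further elaboration, so the intended argument is precisely the floor-and-remainder arithmetic you carry out, using~\eqref{eq:genub1} for the upper bound and~\eqref{eq:lowergen1},~\eqref{eq:lowergen2} for the lower. A minor clarification: there is no need to exhibit a special subsequence of $K$ for the lower bounds, since whenever $a_K\geq b_K$ for all $K$ and $b_K\to L$, one automatically has $\varlimsup_K a_K\geq L$, so the normalized lower bounds survive the $\varlimsup$ without any subsequence argument.
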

\mw{
\begin{remark}
All lower bounds in Proposition~\ref{prop:genlb} and Upper bound~\eqref{eq:genub1} in Proposition~\ref{th:genub} continue to hold (with probability~$1$), if the cross-gains are randomly drawn according to a continuous distribution. As a consequence also Corollary~\ref{cor:gen} remains valid in this random setup. 

This can be seen by inspecting the proofs and noticing that they hold for arbitrary nonzero cross-gains; in our random setup the cross-gains are nonzero with probability 1.
\end{remark}}

\mw{\section{Converse Proofs}\label{sec:converse}
 Our converse proofs all rely on the following lemma. 
 
 
 For a given set of receivers $\mathcal{S}\subseteq \mathcal{K}$, let $\mathcal{R}_{\mathcal{S}}$ denote the set of indices $k\in \mathcal{K}$ such that Antenna~$k$ is observed by at least one of the receivers in $\mathcal{S}$.
 \begin{lemma}[Dynamic-MAC Lemma]\label{lem:dynMAC}
 Consider a general interference network with message cognition and clustered decoding. Let $\vect{V}_0,\ldots, \vect{V}_g$, for $g\in\mathbb{N}_0$, be a set of genie-signals and  let $\mathcal{A},\mathcal{B}_1, \mathcal{B}_2, \ldots, \mathcal{B}_q$, $q\in\mathbb{N}$, form a partition of the set of receivers $\mathcal{K}$, such that for all $k\in\mathcal{K}$ the differential entropy
 \begin{equation}\label{eq:assum_noise}
h\big( \{\vect{N}_k\}_{k\in\mathcal{R}_\mathcal{A}}| \vect{V}_0,\ldots, \vect{V}_q\big)
 \end{equation} 
 is finite and bounded in $P$.\footnote{For the lemma to hold, it suffices that the differential entropies grow slower than any multiple of $n\log(P)$.}
 If for any given encoding and decoding functions $f_1^{(n)},\ldots, f_K^{(n)}$ and $\varphi_1^{(n)}, \ldots, \varphi_K^{(n)}$ there exist deterministic functions $\xi_1, \ldots, \xi_q$ on the respective domains such that for each $i \in \{1,\ldots, q\}$:
 \begin{equation}\label{eq:assum_fun}
\{\vect{Y}_k\}_{k\in \mathcal{R}_{\mathcal{B}_i}} = \xi_i\big( \{ \vect{Y}_k\}_{k\in \mathcal{R}_{\mathcal{A}_i}}, \{ M_{k}\}_{k\in\mathcal{A}_i}, \vect{V}_1,\ldots, \vect{V}_g\big),
 \end{equation}
 where $\mathcal{A}_i\triangleq \mathcal{A} \cup \mathcal{B}_1 \cup \ldots \cup \mathcal{B}_{i-1}$, then the multiplexing gain of the network is upper bounded as
\begin{equation}\label{eq:converse}
  \eta\leq |\mathcal{R}_{\mathcal{A}}|.
  \end{equation}
 \end{lemma}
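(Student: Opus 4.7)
The plan is to reduce the network to a single genie-aided multi-access channel whose decoder has access to the tuple $\big(\{\vect{Y}_k\}_{k\in\mathcal{R}_\mathcal{A}},\vect{V}_0,\ldots,\vect{V}_g\big)$, and then to upper-bound its sum-rate in the high-SNR regime.

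First I would verify that this ``super-receiver'' can reliably decode all $K$ messages, by induction on $i=0,1,\ldots,q$. At the base step $i=0$, the super-receiver recovers $\{M_k\}_{k\in\mathcal{A}}$ by running in parallel the original decoders $\varphi_k^{(n)}$ for $k\in\mathcal{A}$; since by the definition of $\mathcal{R}_\mathcal{A}$ every antenna required by such a decoder lies in $\mathcal{R}_\mathcal{A}$, all the needed inputs are at hand. For the inductive step, assuming the super-receiver already knows $\{M_k\}_{k\in\mathcal{A}_i}$ and holds $\{\vect{Y}_k\}_{k\in\mathcal{R}_{\mathcal{A}_i}}$, it applies the function $\xi_i$ from \eqref{eq:assum_fun} to reconstruct $\{\vect{Y}_k\}_{k\in\mathcal{R}_{\mathcal{B}_i}}$, after which it runs each original decoder $\varphi_k^{(n)}$ for $k\in\mathcal{B}_i$ to produce $\{M_k\}_{k\in\mathcal{B}_i}$. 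A union bound then gives that the super-receiver's error probability is at most the sum of the $K$ per-receiver error probabilities and therefore still vanishes as $n\to\infty$.

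Next I would apply Fano's inequality to the super-receiver to obtain
\begin{equation*}
\sum_{k=1}^{K} n R_k \;\leq\; I\!\big(M_1,\ldots,M_K;\{\vect{Y}_k\}_{k\in\mathcal{R}_\mathcal{A}},\vect{V}_0,\ldots,\vect{V}_g\big) + n\epsilon_n,
\end{equation*}
and expand the mutual information by the chain rule, using that the Gaussian noises are independent of the messages, to bound it above by
\begin{equation*}
h\!\big(\{\vect{Y}_k\}_{k\in\mathcal{R}_\mathcal{A}}\big)\;-\;h\!\big(\{\vect{N}_k\}_{k\in\mathcal{R}_\mathcal{A}}\,\big|\,\vect{V}_0,\ldots,\vect{V}_g\big).
\end{equation*}
The first term I would bound by $|\mathcal{R}_\mathcal{A}|\cdot\tfrac{n}{2}\log(2\pi e\,\sigma^2)$ with $\sigma^2=O(P)$, using the maximum-entropy property of the Gaussian together with the variance estimate $\var(Y_{k,t})=O(P)$ that follows from the power constraint \eqref{eq:power} and the fact that each $Y_{k,t}$ is a \emph{local} linear combination of the inputs. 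The second term is bounded in $P$ by the hypothesis \eqref{eq:assum_noise}. Dividing by $\tfrac12\log P$ and sending $n\to\infty$ followed by $P\to\infty$ then delivers $\eta\leq|\mathcal{R}_\mathcal{A}|$.

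The main obstacle I expect is arranging the entropy chain rule so that the genie signals contribute no additional $\log P$-scaling: the lemma controls only the conditional noise entropy $h(\{\vect{N}_k\}_{k\in\mathcal{R}_\mathcal{A}}\,|\,\vect{V}_0,\ldots,\vect{V}_g)$, so $\vect{V}_0,\ldots,\vect{V}_g$ must enter the final bound solely through that conditional-noise entropy, while the additive term $h(\vect{V}_0,\ldots,\vect{V}_g)$ has to cancel between the joint and conditional sides of the chain rule. A secondary care-point is that \eqref{eq:assum_fun} demands one fixed choice of genie sequences that works simultaneously for every iteration index $i$; this is exactly what allows the argument to run with a single super-receiver in one sweep rather than with a growing family of aided decoders.
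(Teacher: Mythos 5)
Your proof matches the paper's argument: both construct a single genie-aided receiver (the paper's ``Cognitive MAC'') that inductively reconstructs each $\{\vect{Y}_k\}_{k\in\mathcal{R}_{\mathcal{B}_i}}$ via \eqref{eq:assum_fun}, runs the original decoders, and applies Fano's inequality to reduce the bound to $h(\{\vect{Y}_k\}_{k\in\mathcal{R}_\mathcal{A}}) - h(\{\vect{N}_k\}_{k\in\mathcal{R}_\mathcal{A}}\,|\,\vect{V}_0,\ldots,\vect{V}_g)$. The care-point you raise about $h(\vect{V}_0,\ldots,\vect{V}_g)$ cancelling is resolved exactly as the paper implicitly does: the genie signals are taken to be functions of the noises alone, hence independent of the messages, so the chain-rule term $I(M_1,\ldots,M_K;\vect{V}_0,\ldots,\vect{V}_g)$ vanishes and only the conditional noise entropy survives.
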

 \begin{proof}
 To prove our desired upper bound we introduce a \emph{Cognitive MAC},
whose capacity region $\Ca_{\textnormal{MAC}}$ includes the capacity region of the original network,
\begin{IEEEeqnarray}{rCl}\label{eq:supseteq}
 \Ca &\subseteq&
 \Ca_{\textnormal{MAC}},
\end{IEEEeqnarray} 
and whose multiplexing gain $\eta_{\textnormal{MAC}}$
 is upper bounded as
\begin{equation}\label{eq:uppermg}
\eta_{\textnormal{MAC}}  \leq |\mathcal{R}_\mathcal{A}|.
\end{equation}
Combining~\eqref{eq:supseteq} and \eqref{eq:uppermg} establishes the desired lemma. 

The Cognitive MAC  is obtained from the original network by 
 revealing the genie-information  $\vect{V}_{0}, \ldots, \vect{V}_{g}$ to the receivers in Group $\mathcal{A}$ and by requiring that all the receivers that are in Group $\mathcal{A}$ jointly decode  \emph{all  messages} $M_1,\ldots, M_K$, whereas all other receivers do not have to decode anything. Since the only remaining receivers in Group~$\mathcal{A}$ can all cooperate in their decoding, the Cognitive MAC is indeed a MAC with only one receiver.

We now prove Inclusion~\eqref{eq:supseteq} using a dynamic version of Sato's MAC-bound idea \cite{Sato-IT-1981}. Specifically, we show that  every coding scheme for the original network can be modified to a coding scheme for the Cognitive MAC  such
  that whenever the original scheme is successful (i.e, all messages
  are decoded correctly), then so is the modified scheme. 
Fix a coding scheme for the
  original network. The transmitters of
  the Cognitive MAC   apply the same encodings as in the original scheme. The only receiver of the Cognitive MAC, i.e., the Group~$\mathcal{A}$ receiver, performs the decoding in $q+1$ rounds $0, \ldots, q$. In round $i=0$, it decodes the  messages $\{M_k\}_{k\in\mathcal{A}}$ in the same way as in the given original
  scheme. In rounds $i=1,\ldots, q$,
  \begin{itemize}
  \item it attempts to reconstruct the channel outputs  $\{\vect{Y}_k\}_{k\in\mathcal{R}_{\mathcal{B}_i}}$
  observed by the receivers in Group~$\mathcal{B}_i$ using the previously  decoded messages $\{M_k\}_{k\in\mathcal{A}_i}$, the observed or previously reconstructed channel outputs $\{\vect{Y}_{k}\}_{k\in\mathcal{R}_{\mathcal{A}_i}}$,  and the genie-information $\vect{V}_0,\ldots, \vect{V}_g$; then
  \item  it  decodes the messages $\{M_k\}_{k\in\mathcal{B}_i}$ based on its reconstructions of the outputs $\{\vect{Y}_k\}_{k\in\mathcal{R}_{\mathcal{B}_i}}$ in the same way as the receivers in Group~$\mathcal{B}_i$ did in the original
  scheme. 
 \end{itemize} 
 By Assumption~\eqref{eq:assum_fun},  the round-$i$ reconstruction step is successful if all previous rounds'~$0,\ldots, i-1$ reconstruction and decoding steps were successful. Thus, the additional reconstruction steps in the Cognitive MAC decoding do not introduce additional error events compared to the
original decoding procedure, and Inclusion~\eqref{eq:supseteq} follows. 

We are left with showing that the multiplexing gain of the Cognitive MAC  is upper bounded by $|\mathcal{R}_{\mathcal{A}}|$. 
Since the  Group A receiver is required to decode all $K$
messages $M_1,\ldots, M_K$,  by Fano's inequality, reliable
communication is possible only if 
\begin{IEEEeqnarray}{rCl}
n\sum_{k=1}^K R_k & \leq & I\big(\{ \vect{Y}_{k}\}_{k\in\mathcal{R}_\mathcal{A}}, \vect{V}_{1},\ldots,
\vect{V}_{g}; M_1,\ldots, M_K\big)\nonumber \\
& = & I\big(\{ \vect{Y}_{k}\}_{k\in\mathcal{R}_\mathcal{A}}; M_1,\ldots, M_K | \vect{V}_{1},\ldots,
\vect{V}_{g} \big)\nonumber \\
& \leq & h\big(\{ \vect{Y}_{k}\}_{k\in\mathcal{R}_\mathcal{A}}\big)- h\big( \{\vect{N}_{k}\}_{k\in\mathcal{R}_\mathcal{A}}| \vect{V}_1, \ldots, \vect{V}_{g}\big).\nonumber \\\label{eq:MGup}
\end{IEEEeqnarray}
The multiplexing gain of $h\big(\{ \vect{Y}_{k}\}_{k\in\mathcal{R}_\mathcal{A}}\big)$ is bounded by $|\mathcal{R}_\mathcal{A}|$. Moreover, by assumption,  $h\big( \{\vect{N}_{k}\}_{k\in\mathcal{R}_\mathcal{A}}| \vect{V}_1, \ldots, \vect{V}_{g}\big)$ is finite and bounded in $P$. We therefore obtain from~\eqref{eq:MGup}
\begin{equation}\label{eq:limlog}
\varlimsup_{P\rightarrow \infty} \frac{\sum_{k=1}^K R_k}{\frac{1}{2}\log(P)} \leq |\mathcal{R}_\mathcal{A}|,
\end{equation}
which gives the desired bound~\eqref{eq:uppermg}.
 \end{proof}

}


 
 \section{Proof of Theorem~\ref{th:mg}}\label{sec:thmg}

Define   
\begin{IEEEeqnarray}{rCl}
\label{eq:defgamma}
\gamma& \triangleq& \left\lceil\frac{ K-t_\ell-r_\ell-1}{t_\ell+t_r+r_\ell+r_r+2} \right\rceil \\
\beta  & \triangleq &
t_\ell+t_r+r_\ell+r_r+2,\label{eq:beta}\\
\kappa & \triangleq & (K \mod \beta).
\end{IEEEeqnarray}

\subsection{Achievability Proof of Theorem~\ref{th:mg}}\label{sec:ach}

We derive a lower bound by giving an appropriate coding scheme. The idea is to silence some of the transmitters, which decomposes our asymmetric network  into several subnets (subnetworks), and  to apply a scheme based
on Costa's dirty-paper coding\footnote{Alternatively, also the simpler 
    \emph{partial interference cancellation} scheme in
    \cite{Lapidoth-Shamai-Wigger-ITW07}, which is based on linear beam-forming, could be used instead of the
    dirty-paper coding.} and
on successive interference cancellation in each of the subnets.

\subsubsection{Splitting the Network into Subnets}

We silence transmitters $j \beta$, for
$j \in\{1,\dots,\left\lfloor K/\beta\right\rfloor\}$; moreover, if $\kappa >(t_\ell+r_\ell+1)$ we also silence Transmitter $K$.
This splits the network into $\lceil{K/\beta}\rceil$ non-interfering
subnets. The first $\lfloor{K/\beta}\rfloor$ subnets all have the same topology; they
consist of $(t_\ell+t_r+r_\ell+r_r+1)$ active transmit antennas and $(t_\ell+t_r+r_\ell+r_r+2)$
receive antennas.
We refer to these subnets as \emph{generic} subnets. If
$K$ is not a multiple of $\beta$, there is an additional last subnet
with
\[\begin{cases} \kappa \textnormal{ active transmit antennas,} & 
\textnormal{if } \kappa \leq (t_\ell+r_\ell+1),\\ (\kappa-1) \textnormal{ active
  transmit antennas,} & \textnormal{if }\kappa > (t_\ell+r_\ell+1),
\end{cases}\]
and with $\kappa$ receive antennas. We refer to such a subnet as a \emph{reduced} subnet.

As we shall see, in our scheme  each transmitter ignores its side-information about the messages pertaining to transmitters in other subnets.
Likewise, each receiver ignores its side-information about the outputs of antennas  belonging to receivers in other subnets. Therefore, we can
describe our scheme for each subnet separately.

The scheme employed over a subnet depends on whether the scheme is generic or reduced and on the parameter $r_r\geq 0$. We describe the different schemes in the following subsections.

\subsubsection{Scheme over a Generic Subnet when $r_r>0$} \label{sec:genpos}
For
simplicity, we assume that the parameters $K,t_\ell,t_r,r_\ell,r_r$ are such that
the first subnet is generic and  describe the scheme for this first
subnet. 

In the special case $r_\ell=2$, $t_\ell=2$, $t_r=1$, and $r_r=1$ the scheme is illustrated in
Figure~\ref{fig:schemeasym}.
\begin{figure*}[htb]
\psfrag{MH1}[bc][bl]{$\hat M_1$}
\psfrag{MH2}[bc][bl]{$\hat M_2$}
\psfrag{MH3}[bc][bl]{$\hat M_3$}
\psfrag{MH4}[bc][bl]{$\hat M_4$}
\psfrag{MH5}[bc][bl]{$\hat M_5$}
\psfrag{MH6}[bc][bl]{$\hat M_6$}
\psfrag{MH7}[bc][bl]{$\hat M_7$}
\psfrag{MH8}{}
\psfrag{MK1}[bc][bl]{$M_1$}
\psfrag{MK2}[bc][bl]{$M_2$}
\psfrag{MK3}[bc][bl]{$M_3$}
\psfrag{MK4}[bc][bl]{$M_4$}
\psfrag{MK5}[bc][bl]{$M_5$}
\psfrag{MK6}[bc][bl]{$M_6$}
\psfrag{MK7}[bc][bl]{$M_7$}
\psfrag{MK8}{}
\psfrag{XK1}{$X_1$}
\psfrag{XK2}{$X_2$}
\psfrag{XK3}{$X_3$}
\psfrag{XK4}{$X_4$}
\psfrag{XK5}{$X_5$}
\psfrag{XK6}{$X_6$}
\psfrag{XK7}{$X_7$}
\psfrag{YK1}[bc][bl]{$Y_1$}
\psfrag{YK2}[bc][bl]{$Y_2$}
\psfrag{YK3}[bc][bl]{$Y_3$}
\psfrag{YK4}[bc][bl]{$Y_4$}
\psfrag{YK5}[bc][bl]{$Y_5$}
\psfrag{YK6}{}
\psfrag{YK7}[bc][bl]{$Y_7$}
\psfrag{YK8}[bc][bl]{$Y_8$}
\psfrag{Receiver5}{Receiver~$5$}
\psfrag{DEC}{\footnotesize{Decoding by}}
\psfrag{IC}{\footnotesize{interference cancellation}}
\psfrag{DEC1}[bc][tc]{\footnotesize{Decoding by}}
\psfrag{IC1}[bl][tl]{\footnotesize{interference cancellation}}
\psfrag{parameters}{}
\psfrag{S}[tc][tr]{\footnotesize{Signal based on}}
\psfrag{TM}[tc][tl]{\footnotesize{transmitter's message only}}
\psfrag{only1}[tc][bl]{}
\psfrag{S2}[tc][tr]{\footnotesize{Signal based on}}
\psfrag{TM2}[tc][tl]{\footnotesize{transmitter's message only}}
\psfrag{only2}[tc][bl]{}
\psfrag{Silenced}{\footnotesize{Silenced}}
\psfrag{DPC1}[bl][bc]{\footnotesize{DPC to cancel}}
\psfrag{DPC}[bl][bl]{\footnotesize{DPC to cancel}}
\psfrag{LS}[bl][bl]{\footnotesize{signals from left}}
\psfrag{RS}[bl][bl]{\footnotesize{signals from right}}
\begin{center}
\includegraphics[scale=0.6]{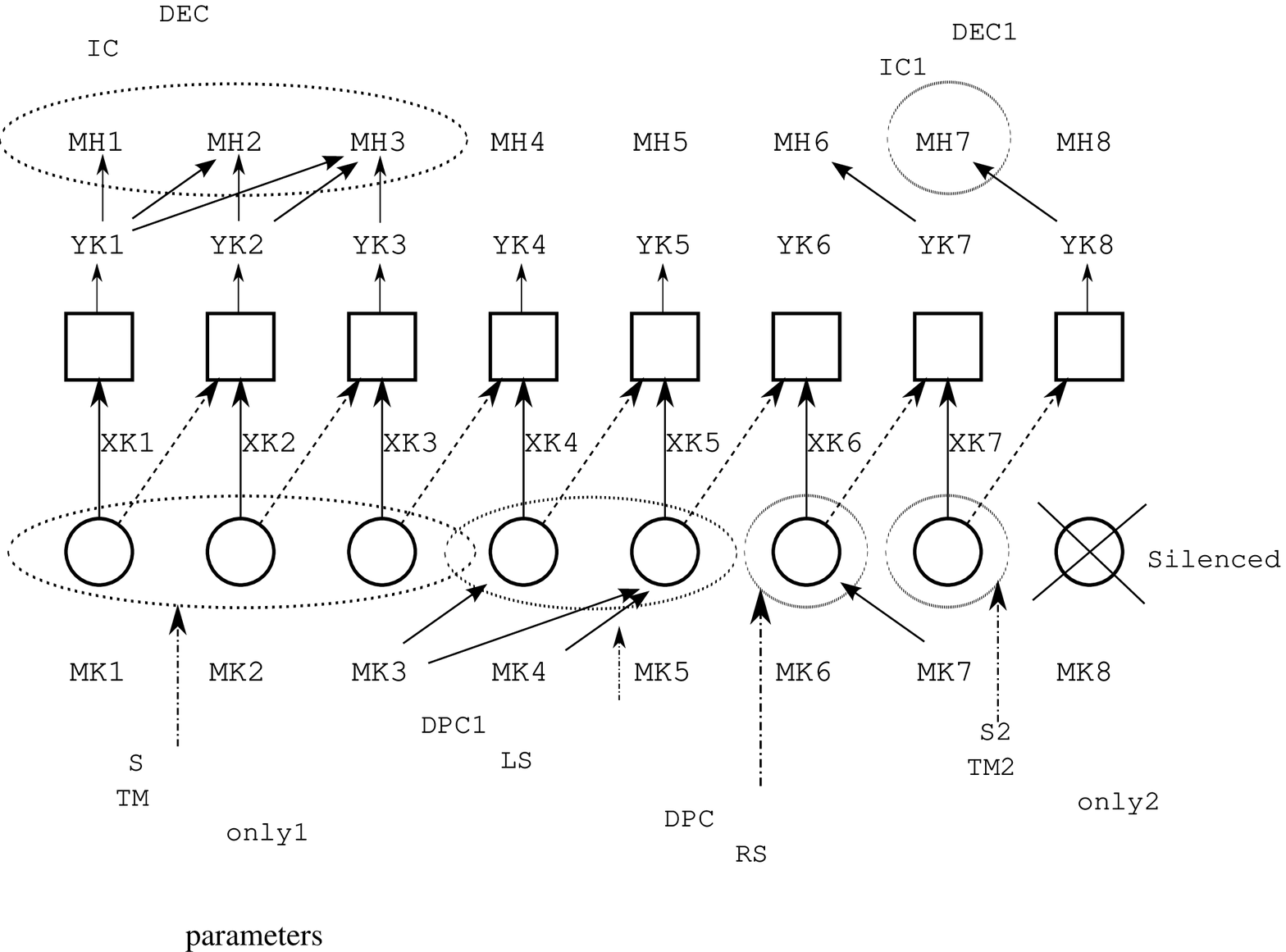}
\vspace{-0.9cm}

\caption{Scheme in a generic subnet for parameters $t_\ell=2$, $t_r=1$, $r_\ell=2$, and $r_r=1$.}
\label{fig:schemeasym}
\end{center}
\end{figure*}
In general, in the first subnet, we wish to transmit
Messages $M_1,\ldots,
M_{r_\ell+t_\ell+t_r+r_r+1}$.
Define the sets (some of which may be empty)
\begin{align*}
\mathcal{G}_1&=\{1,\ldots, r_\ell+1\}\\
\mathcal{G}_2&=\{r_\ell+2,\ldots, r_\ell+t_\ell+1\}\\
\mathcal{G}_3&=\{r_\ell+t_\ell+2,\ldots, r_\ell+t_\ell+t_r+1\}\\
\mathcal{G}_4&=\{r_\ell+t_\ell+t_r+2,\ldots, r_\ell+t_\ell+t_r+r_r+1\}.
\end{align*}

Messages~$\{M_k\}_{k\in\mathcal{G}_1}$ are transmitted as follows.
\begin{itemize}
\item For each $k\in\mathcal{G}_1$ we construct a single-user Gaussian code~$\mathcal{C}_k$ of power $P$, blocklength $n$, and rate $R_k=\frac{1}{2} \log(1+P)$.\footnote{In order to satisfy the block-power constraint imposed on the input sequences, the power of these Gaussian codebooks should be chosen slightly smaller than $P$. Similarly, for the probability of error tending to 0 as $n\to \infty$ the rate $R_k$ should be slightly smaller than $1/2\log(1+P)$. However, these are technicalities which we ignore for readability.} The code $\mathcal{C}_k$ is revealed to Transmitter~$k$ and to Receivers~$k, \ldots, r_\ell+1$.
\item Each Transmitter~$k\in\mathcal{G}_1$ ignores the side-information about other transmitters' messages
  and codes for a Gaussian single-user channel.  That is, it picks the codeword from codebook~$\mathcal{C}_k$ that corresponds to its message $M_k$ and  sends this codeword over the channel.
\item Decoding is performed using \emph{successive interference cancellation}, starting by decoding  Message $M_1$ based on the outputs of the first antenna $Y_1^n$. 

Specifically, each  
  Receiver~$k\in\mathcal{G}_1$ decodes as follows. Let $\hat{X}_0^n$ be an all-zero sequence of length $n$. Receiver~$k$ initializes $j$ to 1 and while $j \leq k$:
  \begin{itemize}
  \item It computes the difference 
  \begin{equation}\label{eq:diffj}
  Y_j^n - \alpha \hat X_{j-1}^n,
  \end{equation} and decodes Message~$M_j$ based on this difference using an optimal ML-decoder.   
 Let $\hat{M}_j$ denote the decoded message.\footnote{Notice that all receivers~$k=j, \ldots, r_\ell+1$ decode Message $M_j$ in the same way, and thus they produce the same estimate $\hat{M}_j$.}
  \item It picks the codeword $x_j^n(\hat{M}_j)$ from codebook $\mathcal{C}_j$ that corresponds to the guess $\hat{M}_j$ and produces this codeword as its reconstruction of the input $\hat{X}_{j}^n$:
  \begin{equation}
  \hat{X}_j^n = x_j^n(\hat{M}_j).
  \end{equation}
  \item It increases the index $j$ by $1$.
  \end{itemize}
  \item    Notice that  each Receiver~$k\in\mathcal{G}_1$ has access to the output signals $Y_1^n,\ldots Y_k^n$ because $k\leq r_\ell+1$, and thus the described decoding can indeed be applied.
    \item For each $k\in\mathcal{G}_1$, if Message $M_{k-1}$ was decoded correctly, i.e., $\hat{M}_{k-1}=M_{k-1}$,  we have 
    \begin{equation}
    Y_k^n- \alpha \hat{X}_{k-1}^n = X_k^n+N_k^n.
    \end{equation}
Thus, in this case, Message $M_k$ is decoded based on the interference-free outputs $X_k^n+N_k^n$, and, by construction of the code $\mathcal{C}_k$, the average probability of error  
   \begin{equation}\label{eq:err1}
   \Pr[\hat{M}_k=M_k] \to 0 \quad \textnormal{ as }\quad n \to \infty.
      \end{equation}

\end{itemize}
If $t_\ell\geq 1$, Messages~$\{M_k\}_{k\in\mathcal{G}_2}$ are transmitted as follows.
\begin{itemize}
\item For each $k\in\mathcal{G}_2$, we construct a dirty-paper code $\mathcal{C}_k$ that is of power~$P$, blocklength $n$, and rate $R_k=\frac{1}{2}\log(1+P)$, and that is designed for noise variance~$1$ and interference variance~$\alpha^2 P$ (which is the variance of $\alpha X_{k-1}$). The code $\mathcal{C}_k$ is revealed to Transmitters~$k,\ldots, r_\ell+t_\ell+1$ and to Receiver~$k$.
\item Each Transmitter~$k\in\mathcal{G}_2$ 
  computes the interference term
  $\alpha\bvect{X}_{k-1}$ and  uses the dirty-paper code $\mathcal{C}_k$ to encode its 
  message $M_k$ and mitigate
  this interference $\alpha\bvect{X}_{k-1}$. It then sends the resulting sequence over the channel.

\item Each Receiver~$k\in\mathcal{G}_2$ ignores all the side-information about other receivers' outputs. It    decodes its desired message $M_k$ solely based on its own outputs
   \begin{equation}\label{eq:outputYk}
   \bvect{Y}_k= \bvect{X}_{k} + \alpha \bvect{X}_{k-1} + \bvect{N}_{k}
   \end{equation}
  applying  dirty-paper decoding with code $\mathcal{C}_k$.
   \item Transmitter~$k\in\mathcal{G}_2$ can compute $\alpha\bvect{X}_{k-1}$ because 
in our scheme 
  $\bvect{X}_{k-1}$ depends only on messages $M_{r_\ell+1},\ldots, M_{k-1}$, and these messages are known  to Transmitter~$k$ because  
  $(k-(r_\ell+1))\leq t_\ell$ for all $k\in\mathcal{G}_2$.
   \item By construction, the sequence $\bvect{X}_{k} $, which encodes Message $M_k$, can perfectly mitigate the interference $\alpha X_{k-1}^n$, and    the average probability of error 
      \begin{equation}\label{eq:err2}
   \Pr[\hat{M}_k=M_k] \to 0 \quad \textnormal{ as } \quad n \to \infty.
      \end{equation}
   
\end{itemize}

If $t_r\geq 1$, Messages~$\{M_{k}\}_{k\in\mathcal{G}_3}$ are transmitted as follows.
\begin{itemize}
\item For each $k\in\mathcal{G}_3$, we construct a dirty-paper code $\mathcal{C}_k$ of power $\alpha^2 P$ (the power of $\alpha X_k$), blocklength $n$, and rate $R_k=\frac{1}{2} \log(1+\alpha^2 P)$, and that is designed for noise variance $1$ and interference variance $P$ (the variance of $X_{k+1}^n$). The code $\mathcal{C}_k$ is revealed to Transmitters~$r_\ell+t_\ell+2, \ldots, k$ and to Receiver~$k$.
\item
  Each Receiver
  $k\in\mathcal{G}_3$ decodes its desired message $M_k$ based on the outputs of the antenna to its right
  \begin{equation}\label{eq:Ykone}
  \bvect{Y}_{k+1}=\bvect{X}_{k+1}+ \alpha \bvect{{X}}_k +  \bvect{N}_{k+1},
  \end{equation} to which it has access because $r_r\geq 1$. The exact decoding procedure is explained shortly.
    
 \item  Each Transmitter~$k\in\mathcal{G}_3$ 
  computes the ``interference" sequence
  $\bvect{X}_{k+1}$ and applies the dirty-paper code $\mathcal{C}_k$ to encode Message $M_k$ and mitigate this ``interference" $\bvect{X}_{k+1}$. Denoting the produced sequence  by  $\tilde{X}_k^n$, Transmitter~$k$ sends 
 \begin{equation} \label{eq:sentsequence}
 \bvect{X}_k=\frac{1}{\alpha}\bvect{\tilde{X}}_k.
 \end{equation}

(The scaling by $1/\alpha$ in \eqref{eq:sentsequence} reverses the amplification by $\alpha$ the
  sequence $\bvect{X}_k$ experiences on its path to  Receiver~$(k+1)$, see~\eqref{eq:Ykone}.)

\item Each Receiver~$k\in\mathcal{G}_3$  
applies the dirty-paper decoding of code $\mathcal{C}_{k}$ to the outputs 
  \begin{IEEEeqnarray}{rCl}
\bvect{Y}_{k+1}& =& \alpha X_{k}^n + X_{k+1}^n +N_{k+1}^n \\
& = & \bvect{\tilde{X}}_k + \bvect{X}_{k+1}+\bvect{N}_{k+1}.\label{eq:Ykone2}
   \end{IEEEeqnarray} 
   
   \item   Notice that Transmitter~$k\in\mathcal{G}_3$  can compute the "interference" $\bvect{X}_{k+1}$ non-causally, because this latter
   only depends on messages $M_{k+1},\ldots,$
  $M_{r_\ell+t_\ell+t_r+2}$ which are known to Transmitter~$k$.
  
  Also, by construction of the code $\mathcal{C}_k$, the sequence $\bvect{\tilde{X}}_k$ is  average block-power constrained to $\alpha^2 P$ and thus, by~\eqref{eq:sentsequence},  the transmitted sequence $\bvect{X}_k$   is average block-power constrained to $P$.
  
   \item By construction,  the sequence $ \bvect{\tilde{X}}_k $, which encodes  Message $M_k$, can perfectly mitigate the ``interference" $ X_{k+1}^n$, and
   the average probability of error  
      \begin{equation}\label{eq:err3}
   \Pr[\hat{M}_k=M_k] \to 0 \quad \textnormal{ as } \quad n \to \infty.
      \end{equation}
\end{itemize}

Messages $\{M_{k}\}_{k\in\mathcal{G}_4}$ are transmitted as follows.
\begin{itemize}
\item For each $k\in\mathcal{G}_4$, we  construct a single-user Gaussian codebook $\mathcal{C}_k$ of power $\alpha^2 P$, blocklength $n$, and rate $R_k=\frac{1}{2} \log(1+\alpha^2 P)$. The codebook $\mathcal{C}_k$ is revealed to Transmitter~$k$ and to Receivers~$k, \ldots, r_\ell+t_\ell+t_r+r_r+1$.
\item Each Transmitter~$k\in\mathcal{G}_4$ ignores the side-information about other transmitters' messages
  and codes for a Gaussian single-user channel.  That is, it picks the codeword from  code~$\mathcal{C}_k$ that corresponds to its message $M_k$ and sends this codeword over the channel.
\item Decoding is performed using {successive interference cancellation}, starting by decoding Message~$M_{r_\ell+t_\ell+t_r+r_r+1}$ based on the outputs of the last antenna $Y_{r_\ell+t_\ell+t_r+r_r+2}^n$. 

Specifically, Receiver~$k\in\mathcal{G}_4$ decodes its desired Message $M_{k}$ as follows.  
 Let $\hat{X}_{r_\ell+t_\ell+t_r+r_r+3}^n$ be an all-zero sequence of length $n$. 

Receiver~$k$ initializes $j$ to $r_\ell+t_\ell+t_r+r_r+1$, and  while $j \geq k$:
  \begin{itemize}
  \item It computes the difference 
  \begin{equation}\label{eq:diffjone}
  Y_{j+1}^n -  \hat X_{j+1}^n,
  \end{equation} and decodes Message~$M_j$ based on this difference using an optimal ML-decoder. 
  
 Let $\hat{M}_j$ denote the resulting guess of Message $M_j$.
  \item It reconstructs the input sequence ${X}_j^n$ by picking the codeword $x_j^n(\hat{M}_j)$ from codebook $\mathcal{C}_j$ that corresponds to Message $\hat{M}_j$:
  \begin{equation}
  \hat{X}_j^n = x_j^n(\hat{M}_j).
  \end{equation}
  \item It decreases $j$ by $1$.
  \end{itemize}

\item Notice that Receiver~$k\in\mathcal{G}_4$ has access to the output signals $Y_k^n,\ldots Y_{r_\ell+t_\ell+t_r+r_r+2}^n$ because $k\geq  r_\ell+t_\ell+t_r+2$.
    \item For each $k \in\mathcal{G}_4$, if the previous message $M_{k-1}$ has been decoded correctly, i.e, $\hat{M}_{k-1}=M_{k-1}$,  we have 
    \begin{equation}
    Y_{k+1}^n-  \hat{X}_{k+1}^n =\alpha X_k^n+N_{k+1}^{n}.
    \end{equation}
    Thus, in this case, Message $M_k$ is decoded based on the interference-free outputs $\alpha X_k^n+N_{k+1}^{n}$, and, by construction of the code $\mathcal{C}_k$, the
   average probability of error 
   \begin{equation}\label{eq:err4}
   \Pr[\hat{M}_k=M_k] \to 0 \quad \textnormal{ as }\quad n \to \infty.
      \end{equation}
\end{itemize}
To summarize, in the described scheme we sent messages $M_1,\ldots, M_{r_\ell+t_\ell+t_r+r_r+1}$ with vanishingly small average probability of error, see~\eqref{eq:err1}, \eqref{eq:err2}, \eqref{eq:err3}, and \eqref{eq:err4}, and at rates
\begin{IEEEeqnarray}{rCl}
R_1 = \ldots = R_{r_\ell+t_\ell+1} &=&\frac{1}{2}\log(1+P) \\
R_{r_\ell+t_\ell+2} = \ldots= R_{r_\ell+t_\ell+t_r+r_r+1} & = & \frac{1}{2}\log(1+\alpha^2P). \IEEEeqnarraynumspace
\end{IEEEeqnarray}

\begin{conclusion}\label{i2equal0}
Our scheme for $r_r\geq 0$  achieves a multiplexing gain of $(t_\ell+r_\ell+r_r+t_r+1)$ over  a generic subnet. It uses all  $(t_\ell+r_\ell+r_r+t_r+1)$ active transmit antennas of the subnet and all $(t_\ell+r_\ell+r_r+t_r+2)$ receive antennas.
 \end{conclusion}

\subsubsection{Scheme over a Generic Subnet when $r_r=0$} \label{sec:genzero}
We again assume that the first subnet is generic and focus on this first subnet.
When $r_r=0$  we  
transmit Messages $M_{1},\ldots, M_{r_\ell+t_\ell+1}$ and
$M_{r_\ell+t_\ell+3},\ldots, M_{r_\ell+t_\ell+t_r+2}$ over the first subnet. 

Messages~$\{M_k\}_{k\in (\mathcal{G}_1\cup \mathcal{G}_2)}$ are transmitted in the same way as in the previous section~\ref{sec:genpos}.  
Messages~$\{M_{k+1}\}_{k\in \mathcal{G}_3}$ are transmitted in a similar way as  Messages~$\{M_k\}_{k\in\mathcal{G}_3}$ in the previous section~\ref{sec:genpos}, except that now each Transmitter~$k\in\mathcal{G}_3$ sends Message $M_{k+1}$ (as opposed to Message $M_k$) and accordingly, each output sequence $Y_{k+1}^n$ is used by Receiver~$k+1$ to decode Message $M_{k+1}$ (as opposed to  Receiver~$k$ decoding Message~$M_k$ based on $Y_{k+1}^n$). More specifically: 
\begin{itemize}
\item For each $k\in\mathcal{G}_3$, we construct a dirty-paper code $\mathcal{C}_{k+1}$ that is of power $\alpha^2 P$ (the power of $\alpha X_k$), blocklength $n$, and rate $R_{k+1}=\frac{1}{2} \log(1+\alpha^2 P)$, and that is designed for noise variance $1$ and interference variance $P$ (the variance of $X_{k+1}^n$). The code $\mathcal{C}_{k+1}$ is revealed to Transmitters~$r_\ell+t_\ell+2,\ldots, k$ and to Receiver~$k+1$.
 \item   
 Transmitter~$k\in \mathcal{G}_3$ applies the dirty-paper code $\mathcal{C}_{k+1}$ to encode Message $M_{k+1}$ and mitigate the ``interference" $\bvect{X}_{k+1}$. 
 Denoting the sequence produced by the dirty-paper code by $\tilde{X}_k^n$, Transmitter~$k$ sends
 \begin{equation}
 \bvect{X}_k=\frac{1}{\alpha}\bvect{\tilde{X}}_k.
 \end{equation}

\item Each Receiver~$k+1$, for $k\in\mathcal{G}_3$, ignores its side-information about outputs observed at other antennas. It decodes its desired Message~$M_{k+1}$ solely
based on the outputs at its own antenna
   \begin{IEEEeqnarray}{rCl}
\bvect{Y}_{k+1}&=& \alpha X_k^n + X_{k+1}^n + N_{k+1}^n \\
& = &\bvect{\tilde{X}}_k + \bvect{X}_{k+1}+\bvect{N}_{k+1}
   \end{IEEEeqnarray}
using the dirty-paper  decoding of  code $\mathcal{C}_{k+1}$.
   
  \item Notice that Transmitter~$k\in\mathcal{G}_3$ can
  compute the ``interference" sequence $X_{k+1}^n$ because this latter 
   only depends on messages $M_{k+2},\ldots,$
  $M_{r_\ell+t_\ell+t_r+2}$ which are known to Transmitter~$k$.
   \item By construction, the sequence $\tilde{X}_k^n$, which encodes Message $M_{k+1}$, can completely mitigate the ``interference" $X_{k+1}^n$, and     the average probability of error 
   \begin{equation}\label{eq:err5}
   \Pr[\hat{M}_{k+1}\neq M_{k+1}] \to 0 \quad \textnormal{ as } \quad n \to \infty.
   \end{equation}
\end{itemize}
To summarize, in the described scheme we transmit Messages $M_1,\ldots, M_{r_\ell+t_\ell+1}$ and $M_{r_\ell+t_\ell+3},\ldots, M_{r_\ell+t_\ell+t_r+2}$ with vanishingly small average probability of error, see~\eqref{eq:err1}, \eqref{eq:err2}, and \eqref{eq:err5}, and  at rates 
\begin{IEEEeqnarray}{rCl}
R_1 = \ldots = R_{r_\ell+t_\ell+1} &=&\frac{1}{2}\log(1+P) \\
R_{r_\ell+t_\ell+3} = \ldots= R_{r_\ell+t_\ell+t_r+2} & = & \frac{1}{2}\log(1+\alpha^2P). \IEEEeqnarraynumspace
\end{IEEEeqnarray}

\begin{conclusion}\label{i2equal0r}
Our scheme for $r_r =0$ and $t_r\geq 1$ achieves a multiplexing gain of $(r_\ell+t_\ell+t_r+1)$ over a generic subnet. If $t_r\geq 1$, it uses all  $(r_\ell+t_\ell+t_r+r_r+1)$ active transmit antennas and all $(r_\ell+t_\ell+t_r+2)$ receive antennas of the subnet. If $t_r=0$ it uses all  $(r_\ell+t_\ell+1)$ active transmit antennas; but it only uses the first $(r_\ell+t_\ell+1)$  receive antennas  and ignores the last antenna of the subnet. 
 \end{conclusion}
 
 \subsubsection{Scheme over a Reduced Subnet}

Let
\begin{subequations}\label{eq:param}
 \begin{align}
 r_\ell'&\triangleq \min\left[\left(\kappa-1\right),r_\ell\right]\\
 t_\ell'&\triangleq \min\left[\left(\kappa-r_\ell-1\right)_+,t_\ell\right]\\
 t_r'&\triangleq \min\left[\left(\kappa-r_\ell-t_\ell-2\right)_+,t_r\right]\\
 r_r'&\triangleq \min\left[\left(\kappa-r_\ell-t_\ell-t_r-2\right)_+,r_r\right]
 \end{align}
 \end{subequations}
 where $(x)_{+}$ is defined as $\max\{x,0\}$. 
 In  a reduced subnet we apply one of the two schemes described for the generic subnet but now with  reduced side-information parameters $r_\ell', t_\ell', t_r', r_r'$. If $r_r'>0$,  we apply the scheme in Subsection~\ref{sec:genpos} otherwise we apply the scheme in Subsection~\ref{sec:genzero}.
 Notice that, by definition, $r_\ell' \leq r_\ell$, $t_\ell' \leq t_\ell$, $t_r \leq t_r'$, and $r_r' \leq r_r$, and thus the transmitters and receivers have enough side-information to apply the described scheme with these parameters.
  
When $\kappa \leq (t_\ell+r_\ell+1)$, then the reduced subnet consists of an equal number $\kappa$ of active transmit and receive antennas because  the last transmit antenna has not been silenced. In this case, also $t_r'=r_r'=0$ and by Conclusion~\ref{i2equal0r}, the  scheme in Subsection~\ref{sec:genzero}  achieves multiplexing gain $\kappa$ over such a subnet.
 
 When $\kappa > (t_\ell+r_\ell+1)$, the subnet consists of $\kappa-1$ active transmit antennas and $\kappa$ receive antennas. By Conclusions~\ref{i2equal0} and \ref{i2equal0r}, one of the schemes in Subsections~\ref{sec:genpos} or \ref{sec:genzero} achieves multiplexing gain  $\kappa-1$ over such a subnet. 
 
 To summarize, we  achieve a multiplexing gain of
 \begin{equation}\label{eq:redmg}
 \begin{cases} \kappa, & \textnormal{ if } \kappa \leq t_\ell+r_\ell+1\\
  \kappa-1, & \textnormal{ if } \kappa > t_\ell+r_\ell+1
  \end{cases}
 \end{equation}
 over a reduced subnet of size $\kappa$.
 \subsubsection{Performance Analysis over the Entire Network}
  
Over the first $\lfloor K/\beta\rfloor$ generic subnets we  achieve a multiplexing gain of $\beta-1$ and, if it exists, then over the last reduced subnet we achieve a multiplexing gain of either $\kappa$ or $\kappa-1$, see~\eqref{eq:redmg}. Over the entire network we thus achieve a multiplexing gain of 
 \begin{equation}
 K-\gamma=  \begin{cases} K-\lfloor K/\beta\rfloor,   & \textnormal{ if } \kappa \leq t_\ell+r_\ell+1\\
  K-\lfloor K/\beta\rfloor-1,& \textnormal{ if } \kappa > t_\ell+r_\ell+1.
  \end{cases}
 \end{equation}
 
This proves the desired lower bound.
 
\begin{remark}\label{eq:cyclo}
In the described scheme a subset of $\gamma$ messages is completely
ignored and not sent over the network. Using time-sharing we can obtain a fair scheme that  
sends all messages  at almost equal rates and achieves a multiplexing gain of at least $K-\gamma-1$.  
More specifically, the idea is to time-share $\beta$
schemes where in 
the $i$-th scheme, $i\in\{1,\ldots, \beta\}$, we silence transmitters $\{i+ j
\beta\}_{j\in \left \{1,\ldots, ,\left \lfloor \frac{K-i}{\beta}
    \right \rfloor\right\}}$, and if $(K \mod \beta) \geq
(i+t_\ell+r_\ell+1)$, then we also silence the last transmitter~$K$. This splits the network into $\gamma$ or $\gamma+1$ subnets:
a possibly reduced first subnet, $\gamma-2$ or  $\gamma-1$ generic
subnets, and a possibly reduced last subnet. In each of the subnets, depending on whether it is generic or reduced, one of the schemes described above   is used.
\end{remark}

\mw{
\subsection{Converse to Theorem~\ref{th:mg}}\label{sec:convTh1}

Apply the Dynamic-MAC Lemma~\ref{lem:dynMAC} to the following choices: 
\begin{itemize}
\item $q=1$;
\item $g=\gamma-1$;
\item  $\mathcal{A} \triangleq \bigcup_{m=0}^{g} \mathcal{A}(m)$,
where 
 for $m=0,\ldots, g-1$,
\begin{equation*}
\mathcal{A}(m)\triangleq \{m\beta + r_\ell+2 , \ldots,(m+1)\beta-r_r\}
\end{equation*}
and
\begin{equation*}
\mathcal{A}(g)\triangleq  \{g\beta + r_\ell+2 , \ldots,K\}.
\end{equation*}
\item $\mathcal{B}_1 \triangleq \mathcal{K} \setminus \mathcal{A}$; 

\item  genie-information
 \begin{IEEEeqnarray}{rCl}\label{gen0}
\vect{V}_{0}& \triangleq & \vect{N}_1 + \sum_{\nu=1}^{r_\ell+t_\ell+1}
\left(-\frac{1}{\alpha}\right)^\nu \vect{N}_{1+ \nu},
\end{IEEEeqnarray}
and, for $m\in
\{1,\ldots,g\}$:
\begin{IEEEeqnarray}{rCl}\label{gen1}
\vect{V}_{m}& \triangleq &\vect{N}_{1+m\beta}+\sum_{\nu=1}^{r_\ell+t_\ell+1}
\left(-\frac{1}{\alpha}\right)^\nu \vect{N}_{1+m\beta+\nu}
\nonumber \\ && +  \sum_{\nu=1}^{t_r+r_r} \left(-\alpha\right)^{\nu}
  \vect{N}_{1+m\beta - \nu}.
\end{IEEEeqnarray}
\end{itemize}
Notice that by our choice of $\mathcal{A}$, the set difference
\begin{equation}\label{eq:inter}
\mathcal{K} \backslash  \mathcal{R}_{\mathcal{A}} = \{1 + m\beta\}_{m=0}^g.
\end{equation}
Since for each $m=0,\ldots, g$ the genie-information $\vect{V}_m$ contains an additive noise term $N_{1+m\beta}$, which is not present in all other genie-informations $\{\vect{V}_{m'}\}_{m'\neq m}$,  \eqref{eq:inter} implies that the differential entropy in~\eqref{eq:assum_noise} is finite. Moreover, the differential entropy does not depend on $P$ because neither does the genie-information. 
 In the following, we show that also the second assumption~\eqref{eq:assum_fun} of Lemma~\ref{lem:dynMAC} is satisfied and that thus we can apply the lemma for the described choice. This then proves the desired converse because, by~\eqref{eq:inter}, $|\mathcal{R}_{\mathcal{A}}|=K-g-1=K-\gamma$.  
 
 By~\eqref{eq:inter}, the set $\{M_{k}\}_{k\in\mathcal{A}}$ 
  includes all messages $\{{M}_{r_\ell+2+\nu+m\beta}\}_{\substack{0\leq \nu\leq t_\ell+t_r\\0\leq m \leq
  \gamma-1}} $, where out of range indices should be ignored. From $\{M_{k}\}_{k\in\mathcal{A}}$ it is thus  possible to reconstruct 
 the input sequences $\{\vect{X}_{t_\ell+r_\ell+2+m\beta}\}_{m=0}^{ g}$:
\begin{IEEEeqnarray*}{rCl}
\lefteqn{\vect{{X}}_{r_\ell+t_\ell+2+m \beta}} \quad \\
&=&f^{(n)}_{r_\ell+t_\ell+2+m\beta}({M}_{r_\ell+2+m
  \beta}, \ldots, {M}_{r_\ell+t_\ell+t_r+2+m\beta}).
\end{IEEEeqnarray*}
Using these reconstructed sequences, the output sequences
observed at the receivers in Group~$\mathcal{A}$, and the genie-information
$\{\vect{V}_{m}\}_{m=0}^{g}$,
it is then possible to reconstruct all channel outputs not observed by the receivers in Group~$\mathcal{A}$, \eqref{eq:inter}:
\begin{IEEEeqnarray*}{rCl}\vect{Y}_1&=& 
 -\sum_{\nu=1}^{r_\ell+t_\ell+1}
  \left(-\frac{1}{\alpha}\right)^{\nu} \bfY_{1+\nu} \nonumber \\ && +\left(-\frac{1}{\alpha}\right)^{r_\ell+t_\ell+1}
 \vect{{X}}_{r_\ell+t_\ell+2}  
  + \vect{V}_0
\end{IEEEeqnarray*}
and, for $m\in\{1,\ldots, g\}$:
\begin{IEEEeqnarray*}{rCl}
\lefteqn{\vect{{Y}}_{1+m\beta}}\\
& = &
- \sum_{\nu=1}^{r_\ell+t_\ell+1} 
 \left(-\frac{1}{\alpha}\right)^{\nu} \bfY_{1+m\beta+\nu} -\sum_{\nu=1}^{t_r+r_r} \left( -\alpha\right)^{\nu}
\vect{Y}_{1+m \beta -\nu} \\ &&+\left(-\frac{1}{\alpha}\right)^{r_\ell+t_\ell+1}
 \vect{{X}}_{r_\ell+t_\ell+2+m \beta}\nonumber\\
 & & -
 \left(-\alpha \right)^{t_r+r_r+1}
\vect{X}_{r_\ell+t_\ell+2+(m-1)\beta}
+ \vect{V}_{m}. 
\end{IEEEeqnarray*}
This establishes that Assumption~\eqref{eq:assum_fun} holds, and concludes the proof.
}
 
 
\section{Achievability Proof of Theorem~\ref{th:symsym}}\label{sec:symach}

For each of the four lower bounds \ref{e1})--\ref{e4}) in Theorem~\ref{th:symsym}, i.e., Inequalities \eqref{eq:S1}--\eqref{eq:S3},  we present a scheme achieving this lower bound.
The four schemes are similar: they all rely on the idea of switching off 
some of the transmitter/receiver pairs, and on using the strategy over the resulting subnets. (Here, by silencing transmitter/receiver pairs we intend that we silence the antennas at the transmitters and ignore the corresponding antennas at the receivers.) This splits the networks into non-interfering subnets.
In each scheme we silence a different set of transmitter/receiver pairs. As we will see we do this in a way that splits the network into subnets that have at most $t_\ell+r_\ell+1$ active transmitter/receiver pairs.

We first describe the strategy used to communicate over the subnets (Section~\ref{sec:subnetscheme}). Then, we present the set of transmitter/receiver pairs that needs to be silenced in each of the four schemes, so that they achieve the lower bounds in \ref{e1})--\ref{e4}) (Sections~\ref{sec:S1}--\ref{sec:S3}).

\subsection{Strategy used in the Subnets}\label{sec:subnetscheme}

Consider a subnet with $\kappa$ transmitter/receiver pairs, where  
$\kappa\leq t_\ell+r_\ell+1$.
We first present a coding strategy that achieves multiplexing gain $\textnormal{rank}(\mat{H}_{\kappa}(\alpha))$ when 
\begin{equation}\label{eq:assumK}
\kappa=t_\ell +r_\ell+1.
\end{equation}
Then we describe how to modify this strategy to achieve a multiplexing gain of $\textnormal{rank}(\mat{H}_{\kappa}(\alpha))$ when $\kappa< t_\ell+r_\ell+1$.

Depending on which of the following three cases applies, we use a different scheme to communicate over the subnet. 
\begin{itemize}
\item[1.)] If the transmitters and the receivers have the same amount of side-information: 
\begin{equation}r_\ell +r_r=t_\ell+t_r \label{ass:1}\end{equation}
we use Multi-Input/Multi-Output (MIMO)  point-to-point scheme.
\item[2.)] If the transmitters have more side-information than the receivers: \begin{equation} r_\ell+ r_r<t_\ell +t_r\label{ass:2}\end{equation}
we use a MIMO broadcast scheme.
\item[3.)] If the receivers have more side-information than the transmitters: \begin{equation}r_\ell+r_r >t_\ell +t_r\label{ass:3}\end{equation}
we use a MIMO multi-access scheme.
\end{itemize}

We first describe the MIMO point-to-point scheme for case 1.). In this case
\eqref{eq:equal} and \eqref{ass:1} imply that 
\begin{equation}
t_\ell=r_r\quad \textnormal{and} \quad t_r =r_\ell.
\end{equation}
Therefore, since  $\kappa=r_\ell+t_\ell+1$, \eqref{eq:assumK}, all $\kappa$ transmitters are cognizant of Message $M_{t_r+1}$ and Receiver~$(t_r+1)$ has access to 
all $\kappa$ antennas in the subnet. Thus, all the transmitters can act as a single transmitter that transmits Message $M_{t_r+1}$ to Receiver~$(t_r+1)$ which can decode the Message based on all the antennas in the subnet.  Using an optimal MIMO point-to-point scheme for this transmission achieves a multiplexing gain of
$\textnormal{rank}(\mat{H}_{\kappa}(\alpha))$ over the subnet.

We next describe the MIMO broadcast scheme for case 2.).  Notice that~\eqref{eq:equal} and \eqref{ass:2} imply that 
\begin{equation}\label{eq:delta1}
r_\ell < t_r.
\end{equation}
By \eqref{eq:equal} and \eqref{eq:assumK}, all the transmitters are cognizant of Messages 
$M_{r_\ell+1},\ldots, M_{t_{r}+1}$  and Receivers $(r_\ell+1),
\ldots, (t_r+1)$ \emph{jointly} have access to all the
$\kappa$ antennas in the subnet.  Thus, all
the transmitters in the subnet can act as a big common transmitter that transmits Messages
$M_{r_\ell+1},\ldots, M_{t_{r}+1}$ to the independent Receivers
$(r_\ell+1), \ldots, (t_r+1)$. 
where Receiver~$(r_\ell+1)$ decodes based  on antennas $1, \ldots, r_\ell+1$ (and ignores the other antennas), Receivers~$(r_\ell+2),
\ldots, t_\ell$ decode based only on their own
antennas, and Receiver~$(t_r+1)$ decodes based  on antennas $t_r+1, \ldots, t_r+r_r+1$.\footnote{Notice that
  the described assignment of antennas to receivers is only one
  possible assignment that leads to the desired multiplexing gain. Other assignments are possible.} Using an  optimal MIMO broadcast scheme for this transmission we can  
  achieve a multiplexing gain of $\textnormal{rank}(\mat{H}_{\kappa}(\alpha))$ over the subnet.
  
For parameters $t_\ell=2$, $t_r=3$, $r_\ell=1$, and $r_r=0$ the scheme is illustrated in Figure~\ref{fig:BC}. %
\begin{figure}[htb]
\psfrag{MH1}{}
\psfrag{MH2}[bc][bl]{$\hat M_2$}
\psfrag{MH3}[bc][bl]{$\hat M_3$}
\psfrag{MH4}[bc][bl]{$\hat M_4$}

\psfrag{MK1}{}
\psfrag{MK2}[bc][bl]{$M_2$}
\psfrag{MK3}[bc][bl]{$M_3$}
\psfrag{MK4}[bc][bl]{$M_4$}

\psfrag{XK1}{$X_1$}
\psfrag{XK2}{$X_2$}
\psfrag{XK3}{$X_3$}
\psfrag{XK4}{$X_4$}

\psfrag{YK1}[cc][cl]{$Y_1$}
\psfrag{YK2}[cc][cl]{$Y_2$}
\psfrag{YK3}[cc][cl]{$Y_3$}
\psfrag{YK4}[cc][cl]{$Y_4$}
\psfrag{parameters}{}
\psfrag{MK}{\footnotesize{Messages known to all}}
\psfrag{T}{\footnotesize{transmitters}}
\begin{center}
\includegraphics[scale=0.60]{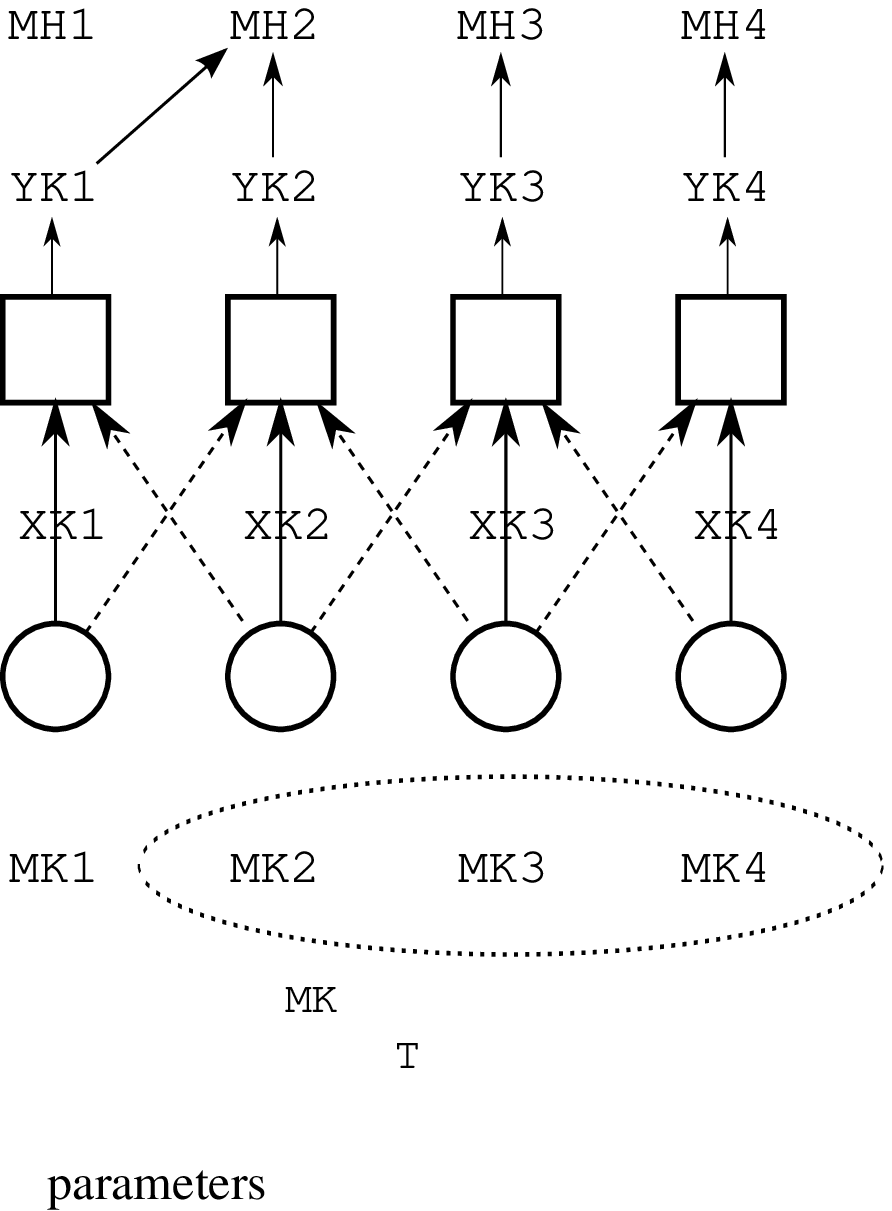}
\vspace{-1cm}

\caption{Broadcast scheme employed in a subnet for parameters $\kappa=4$, $t_\ell=2$, $t_r=3$, $r_\ell=1$, and $r_r=0$}
\label{fig:BC}
\end{center}
\end{figure}

We finally describe the MIMO multi-access scheme for case 3.). Notice that~\eqref{eq:equal} and \eqref{ass:3} imply that 
\begin{equation}\label{eq:delta2}
t_r < r_\ell.
\end{equation}
By \eqref{eq:equal} and \eqref{eq:assumK}, each transmitter knows at least one of the Messages $M_{t_r+1}, \ldots, M_{r_\ell+1}$, and Receivers $(t_r+1),\ldots,(r_\ell+1)$ all have access to all $\kappa$ receive antennas in the subnet. In our scheme the first $t_r+1$ transmitters $1,\ldots, t_r+1$ act as a big common transmitter that transmits Message $M_{t_r+1}$. Similarly, the last $t_\ell+1$ transmitters $r_\ell+1,\cdots,r_\ell+t_\ell+1$ act as a big common transmitter that transmits Message $M_{r_\ell+1}$.  Transmitters $t_r+2, \ldots, r_\ell$ act as single transmitters that transmit their own messages.  
Receivers~$(t_r+1),\ldots,(r_\ell+1)$ act as a single big common receiver that decodes Messages $M_{r_\ell+1},\ldots, M_{r_\ell+t_\ell+1}$ based on all the antennas in the network.
Applying an optimal MIMO MAC scheme for this transmission achieves multiplexing gain
 $\textnormal{rank}(\mat{H}_{\kappa}(\alpha))$ over the subnet.
 
 For  parameters  $t_\ell=2$, $t_r=0$, $r_\ell=1$, and $r_r=3$ the scheme is illustrated in Figure~\ref{fig:MAC}.
\begin{figure}[htb]
\psfrag{MK1}[tc][tl]{$M_1$}
\psfrag{MK2}[tc][tl]{$M_2$}
\psfrag{MK3}{}
\psfrag{MK4}{}

\psfrag{MH1}[cc][cl]{$\hat M_1$}
\psfrag{MH2}[cc][cl]{$\hat M_2$}
\psfrag{MH3}{}
\psfrag{MH4}{}

\psfrag{XK1}{$X_1$}
\psfrag{XK2}{$X_2$}
\psfrag{XK3}{$X_3$}
\psfrag{XK4}{$X_4$}

\psfrag{YK1}[cc][cl]{$Y_1$}
\psfrag{YK2}[cc][cl]{$Y_2$}
\psfrag{YK3}[cc][cl]{$Y_3$}
\psfrag{YK4}[cc][cl]{$Y_4$}

\psfrag{parameters}{$\kappa=4$, $t_\ell=2$, $t_r=0$, $r_\ell=1$, and $r_r=3$}
\psfrag{AT}{\footnotesize{at least one of those}}
\psfrag{ET}{\footnotesize{Each transmitter knows}}
\psfrag{TA}{\footnotesize{They have access to all}}
\psfrag{An}{\footnotesize{the antennas}}
\psfrag{Decoding by interference cancellation}{\footnotesize{Decoding by interference cancellation}}
\psfrag{parameters}{}
\psfrag{Signal based on the transmitter's message only}{ \footnotesize{ Signal based on the transmitter's message only}}
\psfrag{Silenced}{\footnotesize{ Silenced}}
\psfrag{DPC to cancel the signal from the left}{\footnotesize{ DPC to cancel the signal from the left}}
\psfrag{DPC to cancel the signal from the right}{\footnotesize{ DPC to cancel the signal from the right}}
\psfrag{Receivingantennas}[l][l]{\footnotesize{Receive antennas (with AWGN)}}
\psfrag{Transmitters}{\footnotesize{Transmitters}}
\begin{center}
\includegraphics[scale=0.6]{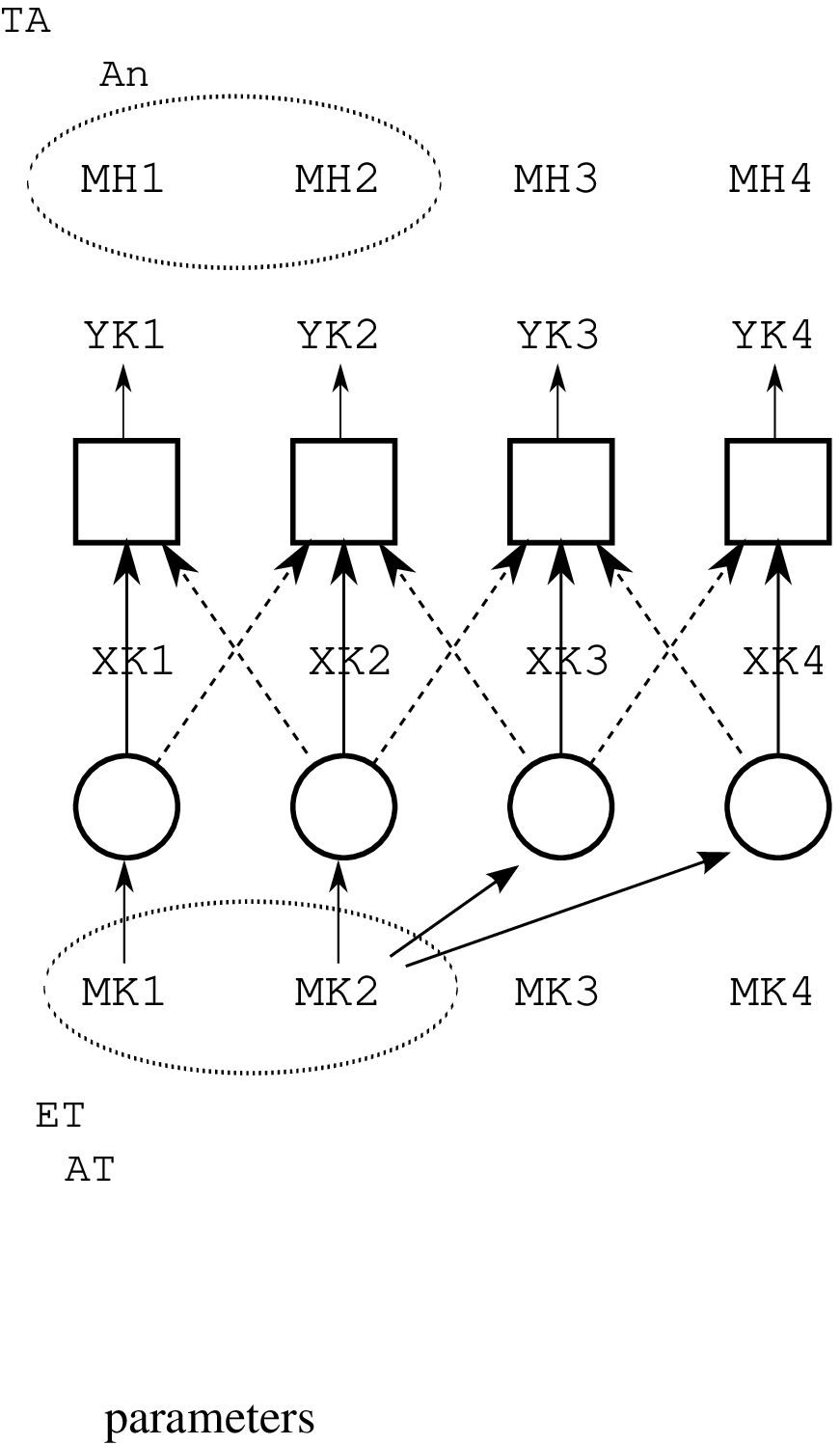}
\vspace{-1.7cm}

\caption{Multi-access scheme employed  in a subnets for parameters $\kappa=4$, $t_\ell=2$, $t_r=0$, $r_\ell=1$, and $r_r=3$.}
\label{fig:MAC}
\end{center}
\end{figure}

We conclude that with the above described schemes we can achieve a multiplexing gain of  $\textnormal{rank}(\mat{H}_{\kappa}(\alpha))$ when $\kappa=t_\ell+r_\ell+1$,  irrespective of the specific values of $t_\ell$ and $r_\ell$. 

We now consider the case where 
\begin{equation}\label{eq:qq}\kappa< t_\ell+r_\ell+1.
\end{equation} 
In this case  
we choose parameters  $t_\ell'\leq t_\ell$, $t_r'\leq t_r$, $r_\ell'\leq r_\ell$, and $r_r'\leq r_r$ such that 
\begin{equation}\kappa=t_\ell'+r_\ell'+1=t_r'+r_r'+1,
\end{equation} and depending on the choice of $t_\ell', t_r', r_\ell', r_r'$ we  apply one of the  three schemes above. This way, we achieve multiplexing gain $\textnormal{rank}(\mat{H}_{\kappa}(\alpha))$ over the subnet also when \eqref{eq:qq} holds.

We obtain the following proposition. 
\begin{proposition}\label{prop5}
For every subnet with $\kappa \leq t_\ell+r_\ell+1$ transmitter/receiver pairs one of the three schemes described above acheives a multiplexing gain of  $\textnormal{rank}(\mat{H}_{\kappa}(\alpha))$.
\end{proposition}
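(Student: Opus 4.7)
The plan is to verify that in each of the three regimes (transmitters and receivers with equal, less, or more side-information), the scheme converts the subnet into a standard multi-antenna channel whose multiplexing gain is exactly $\textnormal{rank}(\mat{H}_\kappa(\alpha))$, and then handle the strict inequality case by a trivial reduction.

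First I would treat the equality case $\kappa=t_\ell+r_\ell+1$. In the MIMO point-to-point regime (case 1), because $t_\ell=r_r$ and $t_r=r_\ell$, the single transmitter/receiver pair $(t_r{+}1)$ effectively has $\kappa$ transmit and $\kappa$ receive antennas, with effective channel matrix $\mat{H}_\kappa(\alpha)$; its sum-rate multiplexing gain is well known to equal $\textnormal{rank}(\mat{H}_\kappa(\alpha))$. In the broadcast regime (case 2), all $\kappa$ transmitters act as one $\kappa$-antenna super-transmitter that is cognizant of messages $M_{r_\ell+1},\ldots,M_{t_r+1}$, while the antenna partition described in the text assigns all $\kappa$ receive antennas (disjointly) among the corresponding receivers. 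This is a Gaussian MIMO broadcast channel with effective channel matrix $\mat{H}_\kappa(\alpha)$, and for such a channel the sum-rate multiplexing gain equals $\textnormal{rank}(\mat{H}_\kappa(\alpha))$ by the capacity result for the Gaussian MIMO BC (e.g., via dirty-paper coding and the BC--MAC duality). The MAC regime (case 3) is dual: disjoint groups of transmitters act as super-transmitters of the independent messages $M_{t_r+1},\ldots,M_{r_\ell+1}$, while receivers $(t_r{+}1),\ldots,(r_\ell{+}1)$ can be merged into a single super-receiver using all $\kappa$ antennas; again the sum-rate multiplexing gain of the resulting Gaussian MIMO MAC with channel matrix $\mat{H}_\kappa(\alpha)$ is $\textnormal{rank}(\mat{H}_\kappa(\alpha))$.

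For the strict inequality case $\kappa<t_\ell+r_\ell+1$, I would simply choose reduced parameters $t_\ell'\leq t_\ell,\ t_r'\leq t_r,\ r_\ell'\leq r_\ell,\ r_r'\leq r_r$ satisfying $t_\ell'+r_\ell'+1=t_r'+r_r'+1=\kappa$ (such a choice always exists since $\kappa\geq 1$), then apply whichever of the three schemes above corresponds to the sign of $(r_\ell'+r_r')-(t_\ell'+t_r')$. Discarding side-information can only hurt, so the achievability is preserved, and the argument above delivers multiplexing gain $\textnormal{rank}(\mat{H}_\kappa(\alpha))$.

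The main obstacle, and really the only nontrivial point, is justifying that the Gaussian MIMO broadcast and multi-access schemes in cases 2 and 3 attain the full rank of $\mat{H}_\kappa(\alpha)$ rather than merely $\min\{\textnormal{\#tx antennas},\textnormal{\#rx antennas}\}$. For the MIMO BC this is the standard degrees-of-freedom result obtained from Costa's dirty-paper coding (or, equivalently, from the Weingarten--Steinberg--Shamai capacity characterization and BC--MAC duality); for the MIMO MAC it follows directly from the well-known sum-capacity expression $\tfrac{1}{2}\log\det(\mat{I}+P\,\mat{H}\trans{\mat{H}})$, whose high-SNR slope is $\textnormal{rank}(\mat{H})$. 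With these two facts invoked, the proposition follows immediately from the case analysis above.
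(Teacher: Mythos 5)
Your proposal matches the paper's own argument in structure and substance: the same three-way case split (MIMO point-to-point, MIMO broadcast via dirty-paper coding, MIMO multi-access) applied when $\kappa=t_\ell+r_\ell+1$, with the same reduction to smaller side-information parameters $t_\ell',t_r',r_\ell',r_r'$ when $\kappa<t_\ell+r_\ell+1$. You are also right to flag that the only nontrivial point is that the sum-DoF of the MIMO BC/MAC is the \emph{rank} of the aggregated channel matrix rather than the min of antenna counts; the paper leaves this implicit, while you correctly justify it via the standard Gaussian MIMO BC/MAC capacity results, so your write-up is the same proof, made slightly more explicit.
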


\subsection{Auxiliary Results}\label{sec:aux}
The following auxiliary results will be used in the proofs ahead.

\begin{lemma}\label{lem:alpha}
Let a real number $\alpha$ and a positive integer $p$ be given such that $\deta{\mat{H}_{p}(\alpha)}=0$. Then, the following statements hold.
\begin{enumerate}
\item The integer $p\geq 2$.
\item The determinants $\deta{\mat{H}_{p-1}(\alpha)}$, $\deta{\mat{H}_{p+1}(\alpha)}$, and $\deta{\mat{H}_{p+2}(\alpha)}$ are all non-zero. Moreover, if $p>2$ (and thus $\mat{H}_{p-2}(\alpha)$ is defined) also $\deta{\mat{H}_{p-2}(\alpha)}$ is non-zero. 
\end{enumerate}
\end{lemma}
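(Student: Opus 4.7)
My plan is to prove everything from a single three-term recursion for the determinants $D_p \triangleq \deta{\mat{H}_p(\alpha)}$. Cofactor expansion along the last row of $\mat{H}_p(\alpha)$ gives, for $p\geq 2$,
\begin{equation*}
D_p \;=\; D_{p-1} \;-\; \alpha^2\, D_{p-2},
\end{equation*}
with the initial conditions $D_0=1$ (empty determinant) and $D_1=1$. This is the single analytic fact I would establish at the outset; everything else is an easy consequence. Note also that $\alpha\neq 0$ under the hypothesis $D_p=0$, for otherwise the recursion would give $D_k=1$ for every $k\geq 0$, contradicting $D_p=0$.

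For Part~1, the recursion is not even needed: $D_1=1\neq 0$, so $D_p=0$ forces $p\geq 2$.

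For Part~2, I would deduce the four nonvanishing statements in the following order, exploiting the recursion in both directions and the fact that $\alpha^2>0$.

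First, $D_{p-1}\neq 0$. If instead $D_{p-1}=0$, the recursion at level $p$ gives $\alpha^2 D_{p-2}=0$, hence $D_{p-2}=0$; iterating the recursion downward (each time, $D_k=0$ and $D_{k-1}=0$ yield $D_{k-2}=0$) forces $D_0=0$ or $D_1=0$, contradicting $D_0=D_1=1$. Next, under the assumption $p>2$ so that $D_{p-2}$ is defined with $p-2\geq 1$, the recursion $0=D_p=D_{p-1}-\alpha^2 D_{p-2}$ shows that $D_{p-2}=0$ would imply $D_{p-1}=0$, contradicting what we just proved. Then the recursion at level $p+1$ gives
\begin{equation*}
D_{p+1} \;=\; D_p - \alpha^2 D_{p-1} \;=\; -\alpha^2 D_{p-1},
\end{equation*}
which is nonzero since $\alpha\neq 0$ and $D_{p-1}\neq 0$. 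Finally, the recursion at level $p+2$ gives $D_{p+2}=D_{p+1}-\alpha^2 D_p=D_{p+1}\neq 0$.

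The only mild subtlety is making sure the backward-induction argument for $D_{p-1}\neq 0$ is set up so that it reaches the base values $D_0=D_1=1$ (and hence a contradiction) regardless of the parity of $p$; once the recursion is in hand this is automatic, so I do not anticipate any real obstacle. The entire argument is one recursion plus four one-line applications of it.
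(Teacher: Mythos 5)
Your proof is correct and rests on the same key fact as the paper's: the three-term recursion $D_p = D_{p-1} - \alpha^2 D_{p-2}$ with $D_0 = D_1 = 1$, together with the observation $\alpha\neq 0$. Your derivation is slightly more streamlined — you establish $D_{p-1}\neq 0$ by the descending chain and then read off $D_{p-2}, D_{p+1}, D_{p+2}$ with one application of the recursion each, whereas the paper phrases the same argument via "consecutive pairs are both zero or both nonzero" — but the routes are essentially identical.
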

\begin{proof}
See Appendix~\ref{sec:alpha}. 
\end{proof}
\mw{This lemma generalizes to nonequal nonzero cross-gains in the following way. 
For each positive integer $p\leq K$, let $\mat{H}_{\textnormal{Net},p}$ denote the $p$-th principal minor of $\mat{H}_{\textnormal{Net},p}$. Then, Lemma~\ref{lem:alpha} remains valid if the matrices $\mat{H}_q(\alpha)$ are replaced by  $\mat{H}_{\textnormal{gen},q}$ for $q\in\{p-2, p-1, p, p+1, p\}$. This can be verified by inspecting the proof. (The main change concerns~\eqref{eq:detrecursion}, where $\alpha^2$ needs to be replaced by the product $\alpha_{k,\ell}\cdot \alpha_{k-1,r}$, for some $k\in\mathcal{K}$, which by assumption is again nonzero. All  other steps remain unchanged.) Therefore, the lemma can also be used to extend our results to nonequal nonzero cross-gains.}
\begin{corollary}\label{lem:alpha2}
For every real number $\alpha$ and positive integer $p$, the rank of the matrix $\mat{H}_{p}(\alpha)$ is either $p$ or $p-1$.
\end{corollary}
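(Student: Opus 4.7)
The plan is to argue by a simple dichotomy on whether $\deta{\mat{H}_p(\alpha)}$ vanishes, and to invoke Lemma~\ref{lem:alpha} in the degenerate case to exhibit a full-rank principal submatrix of size $p-1$.

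First, if $\deta{\mat{H}_p(\alpha)}\neq 0$, then $\mat{H}_p(\alpha)$ is invertible and its rank equals $p$, so there is nothing more to prove. Suppose now that $\deta{\mat{H}_p(\alpha)}=0$. Then Lemma~\ref{lem:alpha} applies and yields, in particular, that $p\geq 2$ and that $\deta{\mat{H}_{p-1}(\alpha)}\neq 0$. The key observation is that $\mat{H}_{p-1}(\alpha)$ appears as a principal $(p-1)\times(p-1)$ submatrix of $\mat{H}_p(\alpha)$: removing the last row and last column of $\mat{H}_p(\alpha)$ leaves exactly $\mat{H}_{p-1}(\alpha)$, because the tridiagonal structure with 1's on the diagonal and $\alpha$'s on the adjacent diagonals is preserved. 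Hence the existence of this invertible submatrix shows $\textnormal{rank}(\mat{H}_p(\alpha))\geq p-1$. Combined with the singularity $\deta{\mat{H}_p(\alpha)}=0$, which forces $\textnormal{rank}(\mat{H}_p(\alpha))\leq p-1$, we conclude that the rank equals exactly $p-1$.

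There is no real obstacle here: all the work has already been done in Lemma~\ref{lem:alpha}. The only thing to keep in mind is to identify $\mat{H}_{p-1}(\alpha)$ as a principal submatrix of $\mat{H}_p(\alpha)$ (so that a nonzero determinant of the former immediately implies the rank lower bound on the latter), which is immediate from the definition of $\mat{H}_p(\alpha)$ as a tridiagonal matrix with constant diagonal entries.
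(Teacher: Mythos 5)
Your proof is correct and follows exactly the paper's one-line argument: note that $\mat{H}_{p-1}(\alpha)$ is a principal submatrix of $\mat{H}_p(\alpha)$ and invoke Lemma~\ref{lem:alpha} to certify its nonsingularity when $\deta{\mat{H}_p(\alpha)}=0$; you simply spell out the dichotomy and the rank bound more explicitly.
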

\begin{proof}
Follows by noting that $H_{p-1}(\alpha)$ is a sub-matrix of $H_p(\alpha)$ and by Lemma~\ref{lem:alpha}.
\end{proof}

\subsection{ Achieving the Lower Bound in~\eqref{eq:S1}}\label{sec:S1}

Recall that \eqref{eq:S1} holds under the assumption that $K\leq t_\ell+r_\ell+1$. 
In this case, we do not silence any transmitter/receiver pairs but we directly
apply one of the threes schemes in the previous Subsection~\ref{sec:subnetscheme}. By Proposition~\ref{prop5} this way we can
achieve a multiplexing gain of
$\textnormal{rank}(\mat{H}_{K}(\alpha))$, which trivially equals $K$
if $\deta{\mat{H}_{K}(\alpha)}\neq 0$ and by Corollary~\ref{lem:alpha2} equals $K-1$ otherwise.

\newcommand{\rank}[1]{\textnormal{rank}\left(#1\right)}

\subsection{ Achieving the Lower Bound in~\eqref{eq:S2a}}\label{sec:S2a}
Recall that \eqref{eq:S2a} holds under the assumption that $K> (t_\ell+r_\ell+2)$ and  $\deta{\mat{H}_{t_\ell+r_\ell+1}(\alpha)}\neq0$. 
We define 
\begin{IEEEeqnarray}{rCl}\label{eq:tka}
\tilde{\kappa}&\triangleq& K \mod (t_\ell+r_\ell+2)\\
\tilde{\gamma}&\triangleq &\left \lfloor \frac{K}{t_\ell+r_\ell+2}\right \rfloor\label{eq:gammat}
\end{IEEEeqnarray}
and notice that by assumption $\tilde{\gamma}\geq 1$.
 
We switch off the transmitter/receiver pairs $\{g(t_\ell+r_\ell+2)\}_{g=1}^{\tilde{\gamma}}$, i.e., in total $\tilde{\gamma}$ transmitter/receiver pairs. This decomposes the network into $\tilde{\gamma}$ subnets with $(t_\ell+r_\ell+1)$ transmitter/receiver pairs and possibly a smaller last network with $\tilde{\kappa}\leq (t_\ell+r_\ell+1)$ transmitter/receiver pairs. Thus, in each subnet we can apply one of the schemes described in Subsection~\ref{sec:subnetscheme}. By Proposition~\ref{prop5}, this achieves multiplexing gain $\rank{\mat{H}_{t_\ell+r_\ell+1} (\alpha)}$
over the first $\tilde{\gamma}$ subnets  and multiplexing gain $\rank{\mat{H}_{\tilde{\kappa}} (\alpha)}$
 over the last smaller network (if it exists). By assumption $\deta{\mat{H}_{t_\ell+r_\ell+1}(\alpha)}\neq0$ and thus $\rank{\mat{H}_{t_\ell+r_\ell+1}(\alpha)}=(t_\ell+r_\ell+1)$; moreover, by Corollary~\ref{lem:alpha2},   $\rank{\mat{H}_{\tilde{\kappa}}(\alpha)}$ is either equal to $\tilde{\kappa}$ or to $\tilde{\kappa}-1$. Thus, we achieve  at least the  
 desired multiplexing gain of $K-\left \lfloor \frac{K}{t_\ell+r_\ell+2}\right \rfloor-1$. In fact, whenever $\tilde{\kappa}=0$ or $\deta{\mat{H}_{\tilde{\kappa}}(\alpha)}\neq0$, then we can even achieve a multiplexing gain of $K-\left \lfloor \frac{K}{t_\ell+r_\ell+2}\right \rfloor$.

\subsection{Achieving the Lower Bound in~\eqref{eq:S2}}\label{sec:S2}
Recall that \eqref{eq:S2} holds under the assumption that $K> (t_\ell+r_\ell+2)$; that  $\deta{\mat{H}_{t_\ell+r_\ell+1}(\alpha)}\neq0$; and that $\deta{\mat{H}_{t_\ell+r_\ell}(\alpha)}\neq0$.

We distinguish two  cases depending on $\tilde{\kappa}$ as defined in \eqref{eq:tka}:
\begin{enumerate}
\item $\rank{\mat{H}_{\tilde{\kappa}}(\alpha)} =\tilde{\kappa}$; 
\item $\rank{\mat{H}_{\tilde{\kappa}}(\alpha)} <\tilde{\kappa}$.
\end{enumerate}

In case 1) we use the same scheme as in the previous Subsection~\ref{sec:S2a}. As described above, this scheme achieves a multiplexing gain of $\rank{\mat{H}_{t_\ell+r_\ell+1} (\alpha)}$
over each of the first $\left \lfloor \frac{K}{t_\ell+r_\ell+2}\right \rfloor$ subnets  and a multiplexing gain of $\rank{\mat{H}_{\tilde{\kappa}} (\alpha)}$
 over the last smaller network. Since we assumed that $\deta{\mat{H}_{t_\ell+r_\ell+1}(\alpha)} \neq 0$ and that $\rank{\mat{H}_{\tilde{\kappa}}(\alpha)} =\tilde{\kappa}$, we conclude we achieve the desired multiplexing gain of $K-\left \lfloor \frac{K}{t_\ell+r_\ell+2}\right \rfloor$ over the entire network.
 
We now treat case 2). 
Notice that in this case $\tilde{\kappa}< t_\ell+r_\ell$ because we assumed that $\deta{\mat{H}_{t_\ell+r_\ell+1}(\alpha)}\neq  0$ and that $\deta{\mat{H}_{t_\ell+r_\ell}(\alpha)}\neq0$.

We switch off transmitter/receiver pairs $\{g(t_\ell+r_\ell+2)\}_{g=1}^{\tilde{\gamma}-1}$ and transmitter/receiver pair $\tilde{\gamma}(t_\ell+r_\ell+2)-1$, where $\tilde{\gamma}$ is defined in \eqref{eq:gammat}. This way, the first  $\tilde{\gamma}-1$ subnets are of size $t_\ell+r_\ell+1$, the next subnet is of size $(t_\ell+r_\ell)$, and the last is of size $\tilde{\kappa}+1$ (where $\tilde{\kappa}$ is defined in \eqref{eq:tka}). 
Thus, all the subnets consist of at most $t_\ell+r_\ell+1$ transmitter/receiver pairs, and we can apply one of the three schemes described in Subsection~\ref{sec:subnetscheme}.

Since $\deta{\mat{H}_{t_\ell+r_\ell+1}(\alpha)}\neq 0$, by Proposition~\ref{prop5},  we achieve a multiplexing gain of $t_\ell+r_\ell+1$ over each of the first $\tilde{\gamma}-1$ subnets. Moreover, since we assumed that $\deta{\mat{H}_{t_\ell+r_\ell}(\alpha)}\neq0$, we further achieve a multiplexing gain of $(t_\ell+r_\ell)$ over the $\tilde\gamma$-th subnet. Finally, since we assumed that $\deta{\mat{H}_{\tilde{\kappa}}(\alpha)}=0$, by Lemma~\ref{lem:alpha}, $\deta{\mat{H}_{\tilde{\kappa}+1}(\alpha)}\neq0$, and thus we achieve a multiplexing gain of $\tilde{\kappa}+1$ over the last subnet. We conclude that our scheme achieves full multiplexing gain (i.e., multiplexing gain equal to the number of transmitter/receiver pairs) in each subnet and hence a multiplexing gain of $K- \left \lfloor    \frac{K }{t_\ell+r_\ell+2} \right \rfloor$ over the entire network.

\subsection{ Achieving the Lower Bound in~\eqref{eq:S3}}\label{sec:S3}
Recall that \eqref{eq:S3} holds under the assumptions that $K> t_\ell+r_\ell+2$ and $\deta{\mat{H}_{t_\ell+r_\ell+1}} =0$. 

We switch off every $(t_\ell+r_\ell+1)$-th transmitter/receiver pair, i.e., in total $\left \lfloor 
\frac{K}{t_\ell+r_\ell+1}\right \rfloor$ transmitter/receiver pairs, and, depending on the values of $t_\ell, t_r, r_\ell, r_r$, we apply one of the three schemes in Subsection~\ref{sec:subnetscheme} over the resulting subnets. Following similar lines as in the previous proof, it can be shown that all the resulting subnets have full-rank channel matrices and thus by Proposition~\ref{prop5} a multiplexing gain of $K- \left \lfloor \frac{K}{t_\ell+r_\ell+1}\right \rfloor$ is achieved over the entire network. The details of the proof are omitted.


\section{Proof of Proposition~\ref{prop:genlb}}\label{app:lowerbound}

We first prove the lower bound in \ref{2}), followed by the lower bounds in \ref{3}), \ref{1}), and \ref{4}).

\subsection{Proof of  Lower Bound  \ref{2}), i.e.,  \eqref{eq:lowergen4}} \label{sec:lowergen4}

If $t_\ell=0$, then \eqref{eq:lowergen4} follows from lower bound \eqref{eq:lowergen2}. 
Moreover, if $t_\ell+r_\ell \leq 1$, then there is nothing to prove, as the multiplexing gain cannot be negative.

Thus, in the following we assume that $t_\ell+r_\ell \geq 2$ and $t_\ell\geq 1$, and present a scheme that achieves the lower bound in \eqref{eq:lowergen4} under this assumption.
Our scheme is similar to the scheme for the asymmetric network described in Section~\ref{sec:ach} when this latter is specialized to $t_r=r_r=0$. (In particular our scheme here disregards 
the right side-information available to the transmitters and the receivers.) 

The idea is again to silence some of the transmitters, which decomposes our asymmetric network  into several subnets, and  to apply a scheme based
on Costa's dirty-paper coding and successive interference cancellation to communicate over the subnets. However, here,  due to the two-sided interference, \emph{pairs} of consecutive transmitters are silenced and  
the dirty-paper coding and the successive interference cancellation strategies  are used to "cancel" 
\emph{two} interference signals.

Define\begin{IEEEeqnarray}{rCl}
\beta_2 & \triangleq&  (t_\ell+r_\ell+1)\\
\gamma_2 & \triangleq & \left \lfloor \frac{K}{\beta_2} \right \rfloor
\end{IEEEeqnarray} 
and recall that in Proposition~\ref{prop:genlb} we defined $\kappa_2  \triangleq K \mod \beta_2$ and 
\begin{equation}
\theta_2 \triangleq \begin{cases} 2, & \textnormal{if } \kappa_2 \geq 2 \\ 1, &\textnormal{if } 
\kappa_2=1 \\ 0, & \textnormal{if } \kappa_2=0.\end{cases}
\end{equation}
\subsubsection{Splitting the Network into Subnets}


We  silence transmitters   $\{ m \beta_2 +1\}_{m=0}^{\gamma_2-1}$ and 
transmitters $\{m\beta_2\}_{m=1}^{\gamma_2}$. Moreover, if
$\theta_2=1$  we also silence transmitter $(\gamma_2 \beta_2+1)$ and
if $\theta_2=2$ then also transmitters $(\gamma_2\beta_2+1)$ and
$K$. Notice that in total we silence $2 \gamma_2+\theta_2$ transmitters.
Silencing the chosen subset of transmitters splits the network into $\gamma_2$ 
non-interfering subnets if $\theta_2=0$ and into $\gamma_2+1$
non-interfering subnets if $\theta_2\geq 1$. In both cases, the first
$\gamma_2$ subnets all have the same topology and consist of $\beta_2-2$ active transmit antennas and
of $\beta_2$ receive antennas. In fact, 
 the $m$-th subnet, for $m \in \left\{ 1, \ldots, \gamma_2\right
\}$, consists of transmit antennas $( (m-1) \beta_2 +2), \ldots,
(m\beta_2 -1)$ and receive antennas $((m-1) \beta_2 +1), \ldots,
m\beta_2 $. We call these subnets generic. If $\theta_2 \geq 1$, then
there is an additional last smaller subnet which consists of
$\max\{\kappa_2-2,0\}$ active transmit antennas and $\kappa_2$ receive
antennas. More precisely, it consists   of  transmit
antennas $ (K- \kappa_2+ 2), \ldots, (K-1)$ (i.e., of no transmit
antennas if $\kappa \leq 2$) and of receive antennas $( K- \kappa_2+ 1), \ldots, K$.


The scheme employed over a subnet depends on whether the scheme is generic or reduced and on the parameter $r_\ell\geq 0$. We describe the schemes in the following subsections.

%

\subsubsection{Scheme over a Generic Subnet when $r_\ell\geq 1$}\label{sec:k1}

We assume that the first subnet is generic and describe the scheme for
this first subnet. 

We transmit Messages $M_{2}, \ldots,
M_{r_\ell+t_\ell}$ over the subnet.
Define the sets
  \begin{IEEEeqnarray}{rCl}
 \mathcal{F}_1 &\triangleq&  \{2, \ldots, r_\ell+1 \}   \label{eq:F1}\\
 \mathcal{F}_2 &  \triangleq& \{ r_\ell+2, \ldots, r_\ell+ t_\ell\}.\label{eq:F2}
   \end{IEEEeqnarray}
Messages $\{M_k\}_{k\in\mathcal{F}_1}$ are transmitted as follows. 
 \begin{itemize}
 \item For each $k\in\mathcal{F}_1$ we construct a single-user Gaussian codebook $\mathcal{C}_k$ of power $P$, blocklength $n$, and rate $R_k=\frac{1}{2} \log(1+P)$. The code $\mathcal{C}_k$ is revealed to Transmitter~$k$ and to Receivers~$2,\ldots, k$.
\item Each Transmitter~$k\in\mathcal{F}_1$ ignores the side-information about other transmitters' messages
  and codes for a Gaussian single-user channel.  That is, it picks the codeword from codebook~$\mathcal{C}_k$  that corresponds to its message $M_k$ and  sends this codeword over the channel.

\item Receiver~$k\in\mathcal{F}_1$, uses successive
  interference cancellation to decode  its desired Message $M_{k}$. 
 Let $\hat{X}_0^n$ and $\hat{X}_1^n$ be two all-zero sequences of length $n$. Receiver~$k$ initializes $j$ to $2$, and while $j \leq k$:
  \begin{itemize}
  \item It decodes Message~$M_{j}$ based on the difference 
  \begin{equation}\label{eq:Yj1diff}
  Y_{j-1}^n - \alpha \hat X_{j-2}^n - \hat{X}_{j-1}^n 
  \end{equation}using an optimal ML-decoder. 
 Let $\hat{M}_{j}$ denote the resulting guess.
  \item It picks the codeword $x_j^n(\hat{M}_j)$ from codebook $\mathcal{C}_j$ that corresponds to the guess $\hat{M}_j$ and produces this codeword as its reconstruction of the input $\hat{X}_{j}^n$:
  \begin{equation}
  \hat{X}_{j}^n = x_j^n(\hat{M}_j).
  \end{equation}
  \item It increases the index $j$ by $1$.
  \end{itemize}
  \item    Notice that Receiver~$k\in\mathcal{F}_1$ has access to the output signals $Y_1^n,\ldots Y_k^n$ because $k\leq r_\ell+1$. 
    \item For each $k \in\mathcal{F}_1$, if the previous two messages were decoded correctly,  $\hat{M}_{k-2}=M_{k-2}$ and $\hat{M}_{k-1}=M_{k-1}$, 
    \begin{equation}
    Y_{k-1}^n- \alpha \hat{X}_{k-2}^n- \hat{X}_{k-1}^n = \alpha X_k^n+N_{k-1}^n.
    \end{equation}
    Thus, in this case, Message $M_k$ is decoded based on the interference-free outputs $\alpha X_k^n+N_{k-1}^n$, and, by construction of the code $\mathcal{C}_k$, 
    the average probability of error  
    \begin{equation}\label{eq:errf1}
    \Pr[\hat{M}_k=M_k] \to 0 \quad \textnormal{ as } n\to \infty.
    \end{equation}
 \end{itemize}
 
If $t_\ell\geq 2$,  Messages $\{M_{k}\}_{k\in\mathcal{F}_2}$ are transmitted as follows: 
 \begin{itemize}
\item For each $k\in\mathcal{F}_2$, construct a dirty-paper code $\mathcal{C}_k$ of power~$P$ and rate $R_k=\frac{1}{2}\log(1+\alpha^2 P)$  for noise variance~$1$ and interference variance~$(\alpha^2 P+ P)$ (which is the variance of $\alpha X_{k-2}+ X_{k-1}$). The code $\mathcal{C}_k$ is revealed to Transmitters~$k,\ldots, r_\ell+t_\ell$ and  to Receiver~$k$.
\item Each Transmitter~$k\in\mathcal{F}_2$
  computes the "interference term"
  $\alpha X_{k-2}^n + {X}^n_{k-1}$
and  applies the dirty-paper code $\mathcal{C}_k$  to encode its message $M_k$ and mitigate  the ``interference''
  $\alpha X_{k-2}^n + {X}^n_{k-1}$. 
Denoting the resulting sequence by $\tilde{X}_{k}^n$, the transmitter sends the scaled
  version 
  \begin{equation} 
  X_k^n = \frac{1}{\alpha} \tilde{X}_k^n.
  \end{equation}

\item Each Receiver~$k\in\mathcal{G}_2$ considers only the outputs at the antenna of its left neighbor, $Y_{k-1}^n$. It  uses code $\mathcal{C}_k$ to  apply dirty-paper decoding based on the outputs
   \begin{IEEEeqnarray}{rCl}
   \bvect{Y}_{k-1}& = & \alpha X_{k-2}^n + X_{k-1}^n + \alpha X_{k}^n+ \bvect{N}_{k}\\
   & = & \tilde X_{k}^n +  \underbrace{\alpha X_{k-2}^n + X_{k-1}^n}_{\textnormal{``interference"}}+ \bvect{N}_{k}.
   \end{IEEEeqnarray}
   
\item Notice that Transmitter~$k\in\mathcal{G}_2$ can compute the sequences
  ${X}^n_{k-2}$ and $X_{k-1}^n$, because in our scheme they only depend on Messages~$M_{r_\ell},\ldots, M_{k-2}$ and $M_{r_\ell},\ldots, M_{k-1}$, respectively.

   \item By construction, the sequence $\tilde X_k^n$, which encodes Message $M_k$,  can completely mitigate the ``interference"   $\alpha X_{k-2}^n + {X}^n_{k-1}$, and 
   the average probability of error 
       \begin{equation}\label{eq:errf2}
    \Pr[\hat{M}_k=M_k] \to 0 \quad \textnormal{ as } n\to \infty.
    \end{equation}
\end{itemize}
To summarize, with the described scheme, we sent Messages~$M_2,\ldots, M_{r_\ell+t_\ell}$ with vanishingly small probability of error, see~\eqref{eq:errf1} and \eqref{eq:errf2}, and at rates 
\begin{IEEEeqnarray}{rCl}
R_2 = \ldots = R_{r_\ell+t_\ell} &=&\frac{1}{2}\log(1+\alpha^2 P). 
\end{IEEEeqnarray}  

\subsubsection{Scheme over a Generic Subnet when $r_\ell=0$}\label{sec:k2}
In this case the set $\mathcal{F}_1$ is empty  
whereas by the assumption
$t_\ell+r_\ell\geq 2$, $t_\ell\geq 2$ and the set $\mathcal{F}_2$ is non-empty. 
We transmit Messages $\{M_{k-1}\}_{k\in\mathcal{F}_2}$  over the
subnet.

Specifically, each Transmitter $k\in \mathcal{F}_2$ employs the dirty-paper scheme as described in th previous subsection~\ref{sec:k1}, except 
that now, instead of
sending its own message $M_k$, it sends its left-neighbor's
message $M_{k-1}$ (to which it has access because $t_\ell \geq 1$). Accordingly, the outputs $Y_{k-1}^n$, for $k\in\mathcal{F}_2$, are now used by Receiver~$k-1$ to decode its desired message $M_{k-1}$. 

Here, for each $k\in\mathcal{F}_2$, the probability of error of Message $M_{k-1}$  equals  the probability of error of Message~$M_k$ in the previous subsection~\ref{sec:k1}. Thus, by~\eqref{eq:errf2}, 
     for all $k\in\mathcal{F}_2$:  \begin{equation}\label{eq:errf2zero}
    \Pr[\hat{M}_{k-1}=M_{k-1}] \to 0 \quad \textnormal{ as } n\to \infty.
    \end{equation}

We conclude that with the described scheme, the messages $M_1,\ldots, M_{t_\ell-1}$ are communicated with vanishingly small probability of error and at rates 
\begin{IEEEeqnarray}{rCl}
R_1 = \ldots = R_{r_\ell+t_\ell-1} &=&\frac{1}{2}\log(1+\alpha^2 P).   
\end{IEEEeqnarray}  
 \begin{conclusion}\label{rem:127}
Our schemes for generic subnets described here and in the previous subsection~\ref{sec:k1}  achieve a multiplexing gain of $r_\ell+t_\ell-1$ over a generic subnet when $r_\ell\geq 1$ and when $r_\ell=0$, respectively. Both schemes  use all the $(t_\ell+r_\ell-1)$ active
   transmit antennas of the subnet; but they use only the first $(t_\ell+r_\ell-1)$
   receive antennas and ignore the last two receive antennas of the subnet. 
      \end{conclusion}
\subsubsection{Scheme over a Reduced Subnet}

Over the reduced subnet  we use one of the two schemes for generic subnets of Subsections~\ref{sec:k1} and \ref{sec:k2}, but with reduced side-information parameters 
\begin{subequations}
 \begin{align}
 r_\ell'&\triangleq \min\left[\left(\kappa_2-1\right),r_\ell\right]\\
 t_\ell'&\triangleq \min\left[\left(\kappa_2-r_\ell-1\right)_+,t_\ell\right].
  \end{align}
 \end{subequations}
By Conclusion~\ref{rem:127}, this achieves a multiplexing gain of $\max\{\kappa_2-2,0\}$ over a reduced subnet.

\subsubsection{Analysis of the Performance over the Entire Network}
Over the first $\lfloor K/\beta_2\rfloor$ generic subnets we achieve a multiplexing gain of $\beta_2-2$ and, if it exists, then over the last reduced subnet we achieve a multiplexing gain of $\max\{\kappa_2-2,0\}$. Thus, over the entire network we achieve a multiplexing gain of 
\begin{equation}
K- 2\gamma_2-\theta_2 = \begin{cases} K- 2 \lfloor K/\beta_2\rfloor - 2, & \textnormal{ if } \kappa_2 \geq 2\\K- 2 \lfloor K/\beta_2\rfloor-\kappa_2 & \textnormal{ if } \kappa_2 < 2.
\end{cases}
\end{equation}
This establishes the desired lower bound.


\subsection{Proof of  Lower Bound~\ref{3})} \label{sec:lowergen3}

By symmetry, this lower bound follows directly from \eqref{eq:lowergen4}.
In particular, if $t_r\geq 1$ and $t_r+r_r \geq 2$, a scheme that is symmetric to the scheme described in the previous subsection~\ref{sec:lowergen4} achieves the desired multiplexing gain in \ref{3}). 
We briefly sketch this  scheme 
because we will use it to prove the lower bound in \ref{1}), \eqref{eq:lowergen1}, in
Subsection~\ref{sec:lowergen1} ahead. 

Define 
\begin{IEEEeqnarray}{rCl}
\beta_2' &\triangleq &(t_r+r_r+1),\\
\gamma_2' & \triangleq & \left \lfloor \frac{K}{\beta_2'} \right \rfloor, \\
\kappa_2' & \triangleq & K \mod \beta_2',
\end{IEEEeqnarray}
and\begin{equation}
\theta_2' \triangleq \begin{cases} 2, & \textnormal{if } \kappa_2' \geq 2 \\ 1, &\textnormal{if } 
\kappa_2'=1 \\ 0, & \textnormal{if } \kappa_2'=0.\end{cases}
\end{equation}

\subsubsection{Splitting the Network into Subnets}
We silence transmitters
$\{m\beta_2'+1\}_{m=0}^{\gamma_2'-1}$ and transmitters
$\{m\beta_2'\}_{m=1}^{\gamma_2'}$. Moreover, if $\theta_2'=1$ then we
also silence transmitter $(\gamma_2' \beta_2'+1)$ and if $\theta_2'=2$
then also transmitters $(\gamma_2'\beta_2'+1)$ and $K$. This splits
the network into $\gamma_2'$ generic subnets with $\beta_2'-2$ active
transmit antennas and $\beta_2'$ receive antennas, and if
$\theta_2'\in\{1,2\}$ then there is an additional last reduced subnet
with $\max\{\kappa_2'-2,0\}$ active transmit antennas and
$\kappa_2'$ receive antennas. 

The scheme that we employ in the subnets depends on whether the subnet is generic or reduced and on 
the parameter $r_r\geq 0$. 

\subsubsection{Scheme over a Generic Subnet when $r_r\geq 1$}\label{sec:h1}
Define the sets
$\mathcal{F}_3$ and $\mathcal{F}_4$ as:
\begin{IEEEeqnarray*}{rCl}
 \mathcal{F}_3 &\triangleq& \{ 2, \ldots, t_r\} \\
 \mathcal{F}_4 &  \triangleq& \{t_r+1, \ldots, t_r+r_r \}.
   \end{IEEEeqnarray*}
Assume that the first subnet is generic. Then, over this first subnet we transmit messages $M_{2}, \ldots, M_{t_r+r_r}$. 

Messages $\{M_{k}\}_{k\in\mathcal{F}_3}$ are transmitted in a similar way as  Messages $\{M_{k}\}_{k\in\mathcal{G}_3}$ in the scheme in Subsection~\ref{sec:ach},  and Message $\{M_{k}\}_{k\in\mathcal{F}_4}$ are transmitted in a similar way as  Messages~$\{M_{k}\}_{k\in\mathcal{G}_4}$ in that scheme. The only difference is that here, each  dirty-paper code~$\mathcal{C}_k$, for $k\in \mathcal{F}_3$, has to be designed for an interference variance $(\alpha^2P+P)$ so that it can mitigate the ``interference" $X_{k+1}^n+ \alpha X_{k+2}^n$;  likewise, during the successive interference cancellation steps, each Receiver~$k\in\mathcal{F}_4$ has to cancel  the two ``interference" terms $X_{k+1}^n$ and $\alpha X_{k+2}^n$. 

For brevity, we omit the details of the scheme and of the analysis. It can be shown that the scheme achieves a multiplexing gain of $t_r+r_r-1$ over the generic subnet.

\subsubsection{Scheme over a Generic Subnet when $r_r=0$}
In this case, the set $\mathcal{F}_4$ is empty whereas, by the assumption $t_r+r_r \geq 2$, the set $\mathcal{F}_3$ is nonempty. We transmit messages $M_{3}, \ldots, M_{t_r+r_r+1}$ over the subnet. 

Messages $\{M_{k+1}\}_{k\in\mathcal{F}_3}$ are transmitted in the same way as messages $\{M_{k+1}\}_{k\in\mathcal{G}_3}$ in Subsection~\ref{sec:ach}. For brevity, we omit details and analysis. It can be shown that such a scheme achieves a multiplexing gain of $t_r+r_r-1$ over the generic subnet.

   \begin{conclusion}\label{rem:148} 
 Our schemes in the previous subsection~\ref{sec:h1} and here achieve a multiplexing gain of $r_r+t_r-1$ over a generic subnet when $r_r\geq 1$ and when $r_r=0$, respectively. Both schemes use all  $(t_r+r_r-1)$ active
   transmit antennas of the subnet; but they use only the last $(t_r+r_r-1)$
   receive antennas and ignore the first two receive antennas of the subnet. 
      \end{conclusion}

\subsubsection{Scheme over a Reduced Subnet}
Over a reduced subnet we employ the schemes for a generic subnet described above, but with reduced side-information parameters
\begin{subequations}
 \begin{align}
 t_r'&\triangleq \min\left[\left(\kappa_2'-2\right)_+,t_r\right]\\
 r_r'&\triangleq \min\left[\left(\kappa_2'-t_r-2\right)_+,r_r\right]
 \end{align}
 \end{subequations} 
By Conclusion~\ref{rem:148}, such a scheme achieves a multiplexing gain of $\max\{\kappa_2'-2, 0\}$ over the reduced subnet.

\subsection{Proof of  Lower Bound \ref{1}), i.e.,  \eqref{eq:lowergen1}} \label{sec:lowergen1}
If $t_\ell+r_\ell=0$ or $t_r+r_r=0$, then the proof follows directly from the lower bounds in \ref{2})
or \ref{3}). 
If $t_\ell+t_r+r_\ell+r_r \leq  2$, there is nothing to prove.

Thus in the following we assume that $t_\ell+t_r+r_\ell+r_r \geq 3$
and $(t_\ell+r_\ell), (t_r+r_r)\geq 1$. 
Define
\begin{IEEEeqnarray}{rCl}
\beta_1 & \triangleq&  (t_\ell+t_r+r_\ell+r_r)\\
\gamma_1 & \triangleq & \left \lfloor\frac{K}{\beta_1} \right \rfloor,
\end{IEEEeqnarray} 
and recall that  in Proposition~\ref{prop:genlb} we defined $\kappa_1  \triangleq K \mod \beta_1$ and 
\begin{equation}
\theta_1 \triangleq \begin{cases} 2, & \textnormal{if } \kappa_1 \geq 2 \\ 1, &\textnormal{if } 
\kappa_1=1 \\ 0, & \textnormal{if } \kappa_1=0.\end{cases}
\end{equation}
\subsubsection{Splitting the Network into Subnets}

We silence transmitters   $\{ m \beta_1 +1\}_{m=0}^{\gamma_1-1}$ and
transmitters $\{m\beta_1\}_{m=1}^{\gamma_1}$. Moreover, if
$\theta_1=1$, then we also silence transmitter $(\gamma_1 \beta_1
+1)$, and if $\theta_1=2$, then also transmitters $(\gamma_1 \beta_1
+1)$ and $K$. Thus, in total we silence $2 \gamma_1 + \theta_1$
transmitters. 
 This splits the network into $\gamma_1$ or $\gamma_1+1$
non-interfering subnets: the first $\gamma_1$ generic subnets consist of 
$(\beta_1-2)$ transmit antennas and $\beta_1$
receive antennas, and if there is an
additional last subnet then it is smaller and consists of 
$\max\{\kappa_1-2,0\}$ transmit antennas and of $\kappa_1$
receive antennas. 

The scheme employed in each subnet depends on whether the subnet is generic or reduced. 

\subsubsection{Scheme over a Generic Subnet}

We assume that the first subnet is generic and describe the scheme for this first subnet. To this end, define the groups 
  \begin{IEEEeqnarray*}{rCl}
 \mathcal{F}_{1/2} &\triangleq&  \{2, \ldots, r_\ell+t_\ell\}\\
 \mathcal{F}_{3/4} &  \triangleq& \{( r_\ell+t_\ell+1), \ldots, (r_\ell+ t_\ell+t_r+r_r-1)\}. 
 \end{IEEEeqnarray*}
 
Our scheme is a combination of the two schemes for generic subnets described in Sections~\ref{sec:lowergen4} and \ref{sec:lowergen3}.  Over the left 
part of the subnet that consists of transmit antennas~$k \in \mathcal{F}_{1/2}$ and receive antennas $1,\ldots, (t_\ell+r_\ell-1)$ we use the scheme in Section~\ref{sec:lowergen4}. 
Over the right 
part of the subnet that consists of transmit antennas~$k \in \mathcal{F}_{3/4}$ and receive antennas
 $(r_\ell+t_\ell+2),\ldots, (t_\ell+r_\ell+t_r+r_r)$ we use the scheme in Section~\ref{sec:lowergen3} where the set $\mathcal{F}_3$ needs to be replaced by $\{( r_\ell+t_\ell+1), \ldots, (t_\ell+r_\ell+t_r-1)\}$ and the set $\mathcal{F}_4$  by $\{(r_\ell+t_\ell+t_r), \ldots, (r_\ell+t_\ell+t_r+r_r-1)\}$ . Thus, the combined scheme utilizes all the transmit antennas in the subnet but only receive antennas~$1,\ldots, r_\ell+t_\ell-1$ and $r_\ell+t_\ell+2, \ldots, r_\ell+t_\ell+t_r+r_r+2$, i.e., it ignores the two receive antennas
 $(t_\ell+r_\ell)$ and $(t_\ell+r_\ell+1)$, see also Conclusions~\ref{rem:127} and \ref{rem:148}.

 Since the transmit antennas $k\in\mathcal{F}_{1/2}$ in the ``left-hand" scheme do not influence the signals
 observed at receive antennas $(r_\ell+t_\ell+2),\ldots, (t_\ell+r_\ell+t_r+r_r)$ employed in the ``left-hand" scheme, and the
 signals sent  at transmit antennas~$k\in \mathcal{F}_{3/4}$ in the ``right-hand" scheme do not influence the signals observed at receive antennas
 $1, \ldots, (t_\ell+r_\ell-1)$ employed in the
 ``left-hand" scheme,  the performance of the two
schemes can be analyzed separately. 
 By Conclusions~\ref{rem:127} and \ref{rem:148} we  achieve a multiplexing gain of $r_\ell+t_\ell-1$ over the left part of the subnet and a multiplexing gain of $t_r+r_r-1$ over the right part of the subnet. 
Thus, 
 we achieve a multiplexing gain $r_\ell+t_\ell+t_r+r_r-2$ over the entire subnet. 
 
\subsubsection{Scheme over a Reduced Subnet} We employ the same scheme as over a generic subnet but with reduced side-information parameters. Details and analysis are omitted for brevity. Such a scheme can achieve a multiplexing gain of $\max\{\kappa_1-2, 0\}$ over a reduced subnet. 
 
\subsubsection{Analysis of Performance over the Entire Network}
Over the first $\lfloor K/\beta_1\rfloor$ generic subnets we achieve a multiplexing gain of $\beta_1-2$ and, if it exists, then over the last reduced subnet we achieve a multiplexing gain of $\max\{\kappa_1-2,0\}$. Thus, over the entire network we achieve a multiplexing gain of 
\begin{equation}
K- 2\gamma_1-\theta_1 = \begin{cases} K- 2 \lfloor K/\beta_1\rfloor - 2, & \textnormal{ if } \kappa_1 \geq 2\\K- 2 \lfloor K/\beta_1\rfloor-\kappa_1 & \textnormal{ if } \kappa_1 < 2.
\end{cases}
\end{equation}
This establishes the desired lower bound.

\subsection{Proof of  Lower Bound  \ref{4}), i.e.,  \eqref{eq:lowergen2}} \label{sec:lowergen2}

In our scheme the transmitters ignore their side-information.
Define
\begin{IEEEeqnarray}{rCl}
\beta_3 & \triangleq&  (r_\ell+r_r+3)\\
\gamma_3 & \triangleq & \left \lfloor \frac{K}{\beta_3} \right \rfloor, 
\end{IEEEeqnarray} 
and recall that in Proposition~\ref{prop:genlb} we defined $\kappa_3  \triangleq K \mod \beta_3$ and 
\begin{equation}
\theta_3 \triangleq \begin{cases} 2, & \textnormal{if } \kappa_3 \geq 2 \\ 1, &\textnormal{if } 
\kappa_3=1 \\ 0, & \textnormal{if } \kappa_3=0.\end{cases}
\end{equation}
\subsubsection{Splitting the Network into Subnets}
We silence transmitters   $\{ m \beta_3 +1\}_{m=0}^{\gamma_3-1}$ and
transmitters $\{m\beta_3\}_{m=1}^{\gamma_3}$. Moreover, if
$\theta_3=1$, we also silence transmitter $\beta_3 \gamma_3 +1$, and
if $\theta_3=2$, we also silence transmitters $\beta_3 \gamma_3 +1$ and
$K$. Notice that in total we have silenced $2 \gamma_3 + \theta_3$
transmitters.

This splits the network into $\gamma_3$ or $\gamma_3+1$
non-interfering subnets: the first $\gamma_3$ subnets consist of
$\beta_3-2$ active transmit antennas and $\beta_3$ receive
antennas (we call these subnets generic), and if an additional last subnet exists it is smaller and
consists of $\max\{\kappa_3-2,0\}$ transmit and $\kappa_3$
receive antennas. 

The scheme employed over the subnets depends on whether the subnet is generic or reduced. 

\subsubsection{Scheme over a Generic Subnet}

 We assume that the first subnet is generic and describe our scheme for this first subnet. 

Define
 \begin{IEEEeqnarray*}{rCl}
 \mathcal{H}_1 &\triangleq&  \{2, \ldots, r_\ell+1\}\\
\mathcal{H}_2 & \triangleq & \{r_\ell+2\}\\  
\mathcal{H}_3 &  \triangleq &\{ r_\ell+3, \ldots, r_\ell+ r_r+2\} .
  \end{IEEEeqnarray*}
 We only sketch the scheme. 
 \begin{itemize} 
 \item Messages $M_2,\ldots, M_{r_\ell+r_r+2}$ are transmitted over the subnet.
 \item 
For each $k\in(\mathcal{H}_1\cup \mathcal{H}_2 \cup \mathcal{H}_3)$, Transmitter~$k$
encodes its Message $M_k$ using a Gaussian
 point-to-point code. 
 \item For each $k\in\mathcal{H}_1$, Receiver~$k$ decodes its
 message using  
 successive interference cancellation from the \emph{left}, starting with the \emph{first} antenna
  in the subnet. 
  These messages can be decoded with arbitrary small probability of error (for sufficiently large blocklengths), 
whenever
\begin{equation}\label{eq:Rates1}
R_k \leq \frac{1}{2}\log \left( 1+ \alpha^2 P\right), \qquad \forall k\in\mathcal{H}_1.
\end{equation}
\item  Similarly, for each $k\in\mathcal{H}_3$, Receiver~$k$ decodes its
 message using  
 successive interference cancellation but now from the \emph{right} and starting with the \emph{last} antenna
  in the subnet. These messages can be decoded with arbitrary small probability of error (for sufficiently large blocklengths), 
whenever
\begin{equation}\label{eq:Rates2}
 R_{k}\leq \frac{1}{2}\log \left( 1+ \alpha^2 P\right), \qquad \forall k \in \mathcal{H}_3.
\end{equation}

\item  Receiver~$r_\ell+2$, which
  has access to antennas $2, \ldots, (r_\ell+ r_r+2)$, decodes its desired Message $M_{r_\ell+2}$ by decoding \emph{all} the
  transmitted messages $M_{2}, \ldots, M_{r_\ell+r_r+2}$ using an optimal MIMO decoder \cite{Telatar-ETT-99}.
In this step, we have arbitrary small probability of error, whenever
\begin{equation}\label{eq:sumofrates}
\sum_{i=2}^{r_\ell+r_r+2} R_i\leq \frac{1}{2} \log \left( \deta{ \mat{I} + P\trans{\mat{H}}_{r_\ell+r_r+1}\mat{H}_{r_\ell+r_r+1}}\right);
\end{equation} 
where here for ease of notation we wrote $\mat{H}_{r_\ell+r_r+1}$ instead of $\mat{H}_{r_\ell+r_r+1}(\alpha)$.
Notice that since the channel matrix $H_{r_\ell+r_r+1}(\alpha)$ is non-singular
  and does not depend on the power $P$, by \cite{Telatar-ETT-99}:
  \begin{eqnarray}\label{eq:MIMO}
\lefteqn{  \varlimsup_{P\rightarrow \infty} \frac{
  \frac{1}{2} \log \left( \deta{ \mat{I} + P\trans{\mat{H}}_{r_\ell+r_r+1}\mat{H}_{r_\ell+r_r+1} }\right)}{\frac{1}{2}\log (P)}} \qquad \nonumber  \\ & = &r_\ell+r_r+1.\hspace{4cm}
  \end{eqnarray}
\end{itemize}
Combining \eqref{eq:Rates1}--\eqref{eq:MIMO}, we conclude that the described scheme can achieve a multiplexing gain of $r_\ell+r_r+1$ over the entire subnet. 

\subsubsection{Scheme over a Reduced Subnet} We employ the same scheme as over a generic subnet but with reduced side-information parameters.  Such a scheme  can achieve a multiplexing gain of $\max\{\kappa_3-2, 0\}$ over a reduced subnet. Details and analysis omitted.
 
\subsubsection{Analysis of Performance over the Entire Network}
Over the first $\lfloor K/\beta_3\rfloor$ generic subnets we achieve a multiplexing gain of $\beta_3-2$ and, if it exists, then over the last reduced subnet we achieve a multiplexing gain of $\max\{\kappa_3-2,0\}$. Thus, over the entire network we achieve a multiplexing gain of 
\begin{equation}
K- 2\gamma_3-\theta_3 = \begin{cases} K- 2 \lfloor K/\beta_3\rfloor - 2, & \textnormal{ if } \kappa_3 \geq 2\\K- 2 \lfloor K/\beta_3\rfloor-\kappa_3 & \textnormal{ if } \kappa_3 < 2.
\end{cases}
\end{equation}
This establishes the desired lower bound.


\mw{
\section{Proof of Proposition~\ref{th:genub}}\label{sec:ub}

\subsection{Proof of Upper Bound~\ref{1a}), i.e., \eqref{eq:genub1}} \label{sec:ub1}

Define 
\begin{IEEEeqnarray*}{rCl}
\beta_4 & \triangleq & t_\ell+t_r+r_\ell+ r_r+4,\\
\gamma_4 & \triangleq & \left \lfloor \frac{K}{\beta_4} \right \rfloor,
 \end{IEEEeqnarray*}
 and recall that $\kappa_4  \triangleq  K- \gamma_4 \beta_4$ and that $\theta_4$ equals $1$ if $\kappa_4 \geq \min\{t_\ell+r_\ell+1, t_r+r_r+1\}$, and it equals 0 otherwise.
 
 The proof is based on the Dynamic-MAC Lemma~\ref{lem:dynMAC}. To describe the choice of parameters for which we wish to apply this lemma, we need the following definitions. Define for every positive integer $p\geq 2$ and every non-zero number $\alpha$ the matrix 
$\mat{M}_{p}(\alpha)$ as the $p\times p$ matrix with diagonal elements $\alpha$, first upper off-diagonal elements~$1$, second upper off-diagonal elements~$\alpha$, and all other elements~$0$. 
That means, the row-$j_r$ column-$j_c$ entry of the matrix $\mat{M}_p(\alpha)$ equals $\alpha$ 
if $j_r=j_c$ or $j_r=j_c-2$, it equals $1$ if $j_r=j_c-1$, and it equals 0 otherwise.  Let 
$\mat{M}_{p}^{\textnormal{inv}}(\alpha)$ denote the inverse matrix of $\mat{M}_p(\alpha)$. This 
inverse always exists because $\deta{\mat{M}}_p(\alpha)=\alpha^p$, which by our assumption 
$\alpha\neq 0$ is nonzero. 
As we will see shortly, our main interest lies in the inverses $\mat{M}_{t_\ell+r_\ell+1}^{\textnormal{inv}}(\alpha)$ and $\mat{M}_{t_r+r_r+1}^{\textnormal{inv}}(\alpha)$. To simplify notation, we therefore denote 
the row-$j_r$ column-$j_c$ entry of $\mat{M}_{t_\ell+r_\ell+1}^{\textnormal{inv}}(\alpha)$ by $a_{j_r,j_c}$ and the row-$j_r$ column-$j_c$ entry of $\mat{M}_{t_r+r_r+1}^{\textnormal{inv}}(\alpha)$ 
by $b_{j_r,j_c}$. 
 
 We treat the cases $\theta_4=0$ and $\theta_4=1$ separately. If $\theta_4=1$, then we apply Lemma~\ref{lem:dynMAC} to the following choices: 
 \begin{itemize}
 \item $q=1$;
 \item $g=2\gamma_4$; 
 \item $\mathcal{A} = \bigcup_{m=0}^{\gamma_4-1} \mathcal{A}'(m)$,  where for $m \in\{0,\ldots,  \gamma_4-2\} $:
  \begin{equation}\label{eq:Amp} 
  \mathcal{A}'(m)\triangleq \{ (m \beta_4 + r_\ell +2), \ldots, (m \beta_4 + r_\ell+t_\ell+t_r+3)\},
  \end{equation} 
  and     \begin{IEEEeqnarray}{rCl}
  \lefteqn{
    \mathcal{A}'(\gamma_4-1)}\nonumber \\  &\triangleq &\{((\gamma_4 -1) \beta_4 + r_\ell + 2), \ldots, (\gamma_4 \beta_4-r_r +3)\}\nonumber \\ &  & \hspace{1.9cm} \cup \{ (\gamma_4 \beta_4 + r_\ell +2) ,\ldots, K\};
  \end{IEEEeqnarray}
  \item  $\mathcal{B}_1= \mathcal{K} \backslash \mathcal{A}$;
  \item for $i$ even and $0\leq i \leq g$:
\begin{IEEEeqnarray}{rCl}\label{eq:genieodd}
\vect{V}_{i} & = & \sum_{j=1}^{t_r+r_r+1}  \alpha b_{1, j} \bfN_{\frac{i}{2} \beta_4 -j} \nonumber \\ & & + \sum_{j=1}^{t_\ell+r_\ell+1}  (a_{1, j} + \alpha a_{2,j})  \bfN_{\frac{i}{2} \beta_4 +1+j} \nonumber \\  & & -\bfN_{\frac{i}{2} \beta_4+1},
\end{IEEEeqnarray}
 and for $i$ odd and $1\leq i \leq g-1$:
 \begin{IEEEeqnarray}{rCl}\label{eq:genieeven}
\bfV_{i}& = & \sum_{j=1}^{t_r+r_r+1} ( b_{1, j} + \alpha b_{2,j})  \bfN_{\frac{i-1}{2} \beta_4 -j}  \nonumber \\ && + \sum_{j=1}^{t_\ell+r_\ell+1}  \alpha a_{1, j}   \bfN_{\frac{i-1}{2} \beta_4 +1+j} - \bfN_{\frac{i-1}{2} \beta_4}. \IEEEeqnarraynumspace
\end{IEEEeqnarray}
  \end{itemize}
Thus, if $\theta_4=1$, 
 \begin{equation}\label{inter2}
  \mathcal{K} \backslash \mathcal{R}_{\mathcal{A}} = 
  \big\{ m \beta_4+1, (m+1)\beta_4\big\} _{m=0}^{\gamma_4-1} \cup \{ \gamma_4\beta_4+1\}.
  \end{equation}

If $\theta_4=0$, we apply  Lemma~\ref{lem:dynMAC} to the choices
\begin{itemize}
\item $q=1$;
\item $g=2\gamma_4-1$;
\item $\mathcal{A} = \bigcup_{m=0}^{\gamma_4-1} \mathcal{A}'(m)$, 
where  $\{\mathcal{A}'(m)\}_{m=0}^{\gamma_4-2}$ are defined in \eqref{eq:Amp} and where
\begin{equation}
 \mathcal{A}'(\gamma_4-1)\triangleq \{ ((\gamma_4-1)\beta_4+r_\ell+2),\ldots, (K-r_r-1)\};
\end{equation} 
\item $\mathcal{B}_1 = \mathcal{K} \backslash \mathcal{A}$;
\item $\{\bfV_{m}\}_{m=0}^{2(\gamma_4-1)}$ are given by \eqref{eq:genieodd} and \eqref{eq:genieeven} and  
 \begin{IEEEeqnarray}{rCl}\label{eq:genie2gamma}
\bfV_{2\gamma_4-1}& = & \sum_{j=1}^{t_r+r_r+1} ( b_{1, j} + \alpha b_{2,j})  \bfN_{K-j}  -\bfN_{K}.\IEEEeqnarraynumspace
\end{IEEEeqnarray}
\end{itemize}
Thus, if $\theta_4=0$, 
 \begin{IEEEeqnarray}{rCl}\label{inter2b}
 \lefteqn{
  \mathcal{K} \backslash \mathcal{R}_{\mathcal{A}} } \nonumber \\
  & = &\big\{ m \beta_4+1, (m+1)\beta_4\big\} _{m=0}^{\gamma_4-2} \cup \big\{(\gamma_4-1)\beta_4+1, K \big\}.\nonumber \\
  \end{IEEEeqnarray}

One readily verifies that both for $\theta_4=0$ and $\theta_4=1$ the differential entropy $h\big( \{\vect{N}_k\}_{k\in\mathcal{R}_\mathcal{A}}| \vect{V}_0,\ldots, \vect{V}_q\big)$ is finite and does not depend on the power constraint $P$, since neither does the genie-information. 
In Appendix~\ref{app:rem2} we show that also Assumption~\eqref{eq:assum_fun} in the Dynamic MAC Lemma is satisfied, and hence the lemma applies. It gives the desired upper bound, because by~\eqref{inter2} and \eqref{inter2b},
\begin{equation}
| \mathcal{R}_{\mathcal{A}} |= 2\gamma_4+ \theta_4.
\end{equation}

\subsection{Proof of Upper Bound~\ref{3a}), i.e., \eqref{eq:genub2}}\label{sec:ub2}

The proof is again based on the Dynamic-MAC Lemma~\ref{lem:dynMAC}. 
We first give some definitions. 

Define 
\begin{IEEEeqnarray}{rCl}
\beta_5 & \triangleq & t_\ell+t_r+r_\ell+r_r+3 \\ 
\gamma_5 & \triangleq & \left \lfloor  \frac{K}{ \beta_5} \right \rfloor
\end{IEEEeqnarray}
and recall that $\kappa_5  \triangleq  K- \beta_5 \gamma_5  $ and $\theta_5$ equals 1 if $\kappa_5 \geq t_r +r_r+ 2$ and 0 otherwise.

 For $j_r, j_c \in\{1, \ldots, t_\ell+r+\ell+1\}$, denote the row-$j_r$ column-$j_c$ entry of the matrix $\mat{H}_{t_\ell+r_\ell+1}(\alpha)$ by $h_{j_r,j_c}$. 
Also, choose a set of real numbers $\{d_2,\ldots, d_{t_\ell+r_\ell+1}\}$ so that 
\begin{equation}
h_{1, j_c} = \sum_{j_r=2}^{t_\ell+r_\ell+1} d_{j_r} h_{j_r, j_c}, \quad j_c \in\{1,\ldots, t_\ell+r_\ell+1\}. 
\end{equation}
Such a choice always exists because of the assumption $\det(\mat{H}_{t_\ell+r_\ell+1})=0$.

We treat the cases $\theta_5=1$ and $\theta_5=0$ separately. If $\theta_5=1$, we apply the Dynamic-MAC Lemma to the choices: 
\begin{itemize}
\item $q=2\gamma_5+1$;
\item $g=2 \gamma_5$;
\item $\mathcal{A}= \bigcup_{m=0}^{\gamma_5} \mathcal{A}''(m)$, where 
\begin{equation}\label{eq:Ampp0}
\mathcal{A}''(0)\triangleq\{1,\dots, t_r +1 \},
\end{equation}
for $m\in \{1,\ldots, \gamma_5-1\}$:
\begin{equation}\label{eq:Ampp}
\mathcal{A}''(m)\triangleq\{ m\beta_5 -t_\ell +1 ,\dots, m \beta_5+ t_r +1 \},
\end{equation}
and 
\begin{equation}
\mathcal{A}''(\gamma_5) \triangleq \{ (\gamma_5 \beta_5 -t_\ell+1),\ldots, (K-r_r-1)\};
\end{equation}
\item for 
 $i$ odd and $1 \leq i \leq 2 \gamma_5-1$, 
 \begin{IEEEeqnarray}{rCl}\label{eq:Bodd}
\mathcal{B}_i=  \left\{ \frac{(i-1)}{2} \beta_5 +t_r +r_r +r_\ell+3 \right\};\IEEEeqnarraynumspace
\end{IEEEeqnarray}
 for  $i$ even and $2 \leq i \leq 2 \gamma_5$:
\begin{IEEEeqnarray}{rCl}\label{eq:Beven}
\mathcal{B}_i= & &\left \{ \left( \left( \frac{i}{2} -1\right)\beta_5 +t_r +2 \right), \ldots,  \right. \\ & &\qquad  \left.  \left(\left(\frac{i}{2} -1\right)\beta_5 +t_r +r_r+r_\ell + 2 \right) \right\},  \nonumber \\ 
\end{IEEEeqnarray}
and
\begin{IEEEeqnarray}{rCl}
\mathcal{B}_{2\gamma_5+1}=\{( K-r_r), \ldots, K\};
\end{IEEEeqnarray}
\item  for $i$ even and $0 \leq i \leq 2 (\gamma_5-1)$:\footnote{Recall that $a_{j_r,j_c}$ denotes the row-$j_r$ column-$j_c$ entry of the matrix $\mat{M}_{t_\ell+r_\ell+1}^{\textnormal{inv}}(\alpha)$ defined in the previous Subsection~\ref{sec:ub1}; and where similarly  $b_{j_r,j_c}$ denotes the row-$j_r$ column $j_c$ entry of the matrix $\mat{M}_{t_r+r_r++1}^{\textnormal{inv}}(\alpha)$ also defined in Subsection~\ref{sec:ub1}.}
\begin{IEEEeqnarray}{rCl} \label{eq:genie1}
\vect{V}_{i} &= &\sum_{j=2}^{ t_r+r_r+1} d_j \bfN_{\frac{i}{2} \beta_4 + t_r+r_r+2+j} \nonumber \\  & &  + \alpha   \sum_{ j_c= 1}^{t_r+r_r+1} b_{1,j_c} \bfN_{\frac{i}{2} \beta_4 + t_r+r_r+1-j_c} \nonumber \\  & & - \bfN_{\frac{i}{2} \beta_4 + t_r+r_r+3},
\end{IEEEeqnarray}
for $i$ odd and $1 \leq i \leq 2 \gamma_5-1$: 
\begin{IEEEeqnarray}{rCl}
\vect{V}_{i}& = & \sum_{ j_c= 1}^{t_r+r_r+1} (\alpha b_{2, j_c} + b_{1,j_c} ) \bfN_{\frac{i-3}{2} \beta_4 +t_r+r_r+1-j_c} \nonumber \\  & & +   \sum_{j_c=1}^{t_\ell+r_\ell+1} \alpha a_{1, j_c} \bfN_{\frac{i-3}{2} \beta_4 +t_r+r_r+3+j_c} \nonumber \\ & &- N_{\frac{i-3}{2} \beta_4 + t_r+r_r+2},\label{genie9}
\end{IEEEeqnarray}
and 
\begin{IEEEeqnarray}{rCl}
\vect{V}_{2\gamma_5} & \triangleq & \sum_{ j_c= 1}^{t_r+r_r+1} (\alpha b_{2, j_c} + b_{1,j_c} ) \bfN_{K-1-j_c} - N_{K}.\nonumber \\\IEEEeqnarraynumspace \label{eq:genie10}
\end{IEEEeqnarray}
\end{itemize}

Thus, if $\theta_5=1$, 
\begin{IEEEeqnarray}{rCl}\label{inter3}
\lefteqn{ \mathcal{K}\backslash\mathcal{R}_{\mathcal{A}} } \nonumber \\ & =& \big\{(m \beta_5+ t_r+r_r +2) , (m\beta_5+t_r+r_r+3)\big\}_{m=0}^{\gamma_5-1} \nonumber \\ & &\hspace{2mm}\cup \{K\}.
\end{IEEEeqnarray}

If  $\theta_5=0$, we apply the Dynamic-MAC Lemma to the following choices:
\begin{itemize}
\item $q=2 \gamma_5$;
\item $g=2\gamma_5-1$;
\item $\mathcal{A}= \bigcup_{m=0}^{\gamma_5} \mathcal{A}''(m)$, where $\big\{\mathcal{A}''(m)\big\}_{m=0}^{\gamma_5-1}$ are defined in~\eqref{eq:Ampp0} and \eqref{eq:Ampp}, and where
\begin{equation*}
\mathcal{A}''(\gamma_5) \triangleq \{ (\gamma_5 \beta_5 -t_\ell+1),\ldots, K\};
\end{equation*}
\item the sets $\big\{\mathcal{B}_i\big\}_{i=1}^{2\gamma_5}$ are defined in~\eqref{eq:Bodd} and \eqref{eq:Beven}; 
\item $\{\bfV_{m}\}_{m=0}^{\gamma_5-1}$ is defined in~\eqref{eq:genie1} and \eqref{genie9}.
\end{itemize}

Thus, if $\theta_5=0$, 
\begin{IEEEeqnarray}{rCl}\label{inter3b}
\lefteqn{ \mathcal{K} \backslash\mathcal{R}_{\mathcal{A}} } \nonumber \\ &=&  \big\{(m \beta_5+ t_r+r_r +2) , (m\beta_5+t_r+r_r+3) \big\}_{m=0}^{\gamma_5-1}. \nonumber \\
\end{IEEEeqnarray}

One readily verifies that both for $\theta_5=0$ and $\theta_5=1$ the differential entropy $h\big( \{\vect{N}_k\}_{k\in\mathcal{R}_\mathcal{A}}| \vect{V}_0,\ldots, \vect{V}_q\big)$ is finite and does not depend on the power constraint $P$, since neither does the genie-information. 
In Appendix~\ref{app:dymup} we show that  also Assumption~\eqref{eq:assum_fun} of the Dynamic-MAC Lemma is satisfied, and hence the lemma applies. It gives the desired upper bound, because by~\eqref{inter3} and \eqref{inter3b},
\begin{equation}
| \mathcal{R}_{\mathcal{A}} |= 2\gamma_5+ \theta_5.
\end{equation}

}


\section*{Acknowledgments} 
 This work was supported by the  Israel Science
Foundation (ISF), by the European Commission in the framework
of the FP7 Network of Excellence in Wireless COMmunications NEWCOM\#, and by the city of Paris under the program ``Emergences".
\appendices

\section{Proof of Lemma~\ref{lem:alpha}}\label{sec:alpha}

By definition, $\deta{\mat{H}_{1}(\alpha)}=1$. Therefore, the integer $p$ has to be at least 2 and Statement 1.) in the lemma follows.

Statement 2.) can be proved as follows. We define $\mat{H}_0(\alpha)\triangleq 1$ and note that also  $\mat{H}_1(\alpha)=1,$ irrespective of $\alpha$. We then have for each positive integer $q\geq 2$:
\begin{equation}\label{eq:detrecursion}
\deta{\mat{H}_{q}(\alpha)} = \deta{\mat{H}_{q-1}(\alpha)} - \alpha^2 \deta{\mat{H}_{q-2}(\alpha)}.
\end{equation}
Thus, $\deta{\mat{H}_{p}(\alpha)}=0$ implies  that the two determinants $\deta{\mat{H}_{p-1}(\alpha)}$ and $\deta{\mat{H}_{p-2}(\alpha)}$ are either both~0 or both non-zero, and similarly, that the two determinants $\deta{\mat{H}_{p+1}(\alpha)}$ and $\deta{\mat{H}_{p+2}(\alpha)}$ are either both 0 or both non-zero. Applying this argument iteratively, we see that the determinants
$\deta{\mat{H}_{p-2}(\alpha)}$ and $\deta{\mat{H}_{p-1}(\alpha)}$ can only be 0 if all "previous" determinants $\deta{\mat{H}_{0}(\alpha)}, \ldots, \deta{\mat{H}_{p-3}(\alpha)}$ are zero. Similarly, for the determinants $\deta{\mat{H}_{p+1}(\alpha)}$ and $\deta{\mat{H}_{p+2}(\alpha)}$. However, since $\deta{\mat{H}_0(\alpha)}=\deta{\mat{H}_1(\alpha)}=1$, we conclude that  $\deta{\mat{H}_{p-2}(\alpha)}, \deta{\mat{H}_{p-1}(\alpha)}, \deta{\mat{H}_{p+1}(\alpha)}$, and $\deta{\mat{H}_{p+2}(\alpha)}$ must be non-zero, which proves Statement 2.)

\mw{
\section{Proof that Assumption~\eqref{eq:assum_fun} holds in Section~\ref{sec:ub1}} \label{app:rem2} 

 By~\eqref{inter2} and \eqref{inter2b} it suffices to show that if $\theta_4=0$, then the output sequences $\{{\bfY}_{m\beta_4+1},{\bfY}_{(m+1)\beta_4}\}_{m=0}^{\gamma_4-2}$,
${\bfY}_{(\gamma_4-1)\beta_4+1},$ and ${\bfY}_{K}$ can be reconstructed, and if $\theta_4=1$, then the output sequences  
$\{\bf{Y}_{m\beta_4+1},{Y}^n_{(m+1)\beta_4}\}_{m=0}^{\gamma_4-1}$ and $\bfY_{\gamma_4 \beta_4 +1}$ can be reconstructed. 
   
Notice first that using the given encoding functions $f_1,\ldots, f_n$
the input sequences
$\{{\bfX}_{m\beta_4+t_\ell+r_\ell+2},{\bfX}_{m\beta_4+t_\ell+r_\ell+3}
\}_{m=0}^{ \gamma_4-1}$ can be computed from Messages
$\{M_{k}\}_{k\in\mathcal{A}}$. Moreover, if $\theta_4=0$ then
additionally also the input sequences $\bfX_{ (\gamma_4-1) +
  r_\ell+t_\ell+4}, \ldots, \bfX_{K- r_r - t_r -1}$ can be computed
from $\{M_{k}\}_{k\in\mathcal{A}}$, and if $\theta_4=1$ additionally also the input sequences
${\bfX}_{\gamma_4\beta_4+t_\ell+r_\ell+2},{\bfX}_{K-r_r-t_r-1}$ can be
computed from $\{M_{k}\}_{k\in\mathcal{A}}$. 
The  result is then proved by showing that each of the desired
output sequences can be expressed as a linear combination of the
genie-information, these reconstructed inputs, and the outputs observed
by the group-$\mathcal{A}$ receivers. 

We start with $\bfY_{\beta_4}$. 
Notice that by the channel law~\eqref{eq:symchannel}, the linear systems \eqref{eq:lin10} and \eqref{eq:lin11} on top of the next page hold for every time-instant $t\in\{1,\ldots, n\}$.
\begin{figure*}
\begin{IEEEeqnarray}{rCl}\label{eq:lin10}
\begin{pmatrix}
  Y_{\beta_4- 1,t} \\ Y_{\beta_4-2,t} \\ \vdots \\ Y_{t_\ell+r_\ell+4,t} \\ Y_{t_\ell+r_\ell+3,t}
\end{pmatrix}  = \mat{M}_{t_r+r_r+1}(\alpha) \begin{pmatrix} X_{\beta_4,t}\\  X_{\beta_4-1,t} \\ \vdots\\ X_{t_\ell+r_\ell+5,t}\\  X_{t_\ell+r_\ell+4,t}  \end{pmatrix} + \begin{pmatrix}  0 \\ 0 \\ \vdots \\ \alpha X_{t_\ell+r_\ell+3} \\   \alpha X_{t_\ell+r_\ell+2,t} + X_{t_\ell+r_\ell+3,t}  \end{pmatrix} + \begin{pmatrix}
N_{\beta_4-1,t}\\ N_{\beta_4-2,t}  \\ \vdots \\ N_{t_\ell+r_\ell+4,t} \\N_{t_\ell+r_\ell+3,t} 
\end{pmatrix}  
\end{IEEEeqnarray}
\begin{IEEEeqnarray}{rCl}\label{eq:lin11}
\begin{pmatrix}
Y_{\beta_4+2,t} \\ Y_{\beta_4+3,t} \\ \vdots \\  Y_{\beta_4+t_\ell+r_\ell+1,t} \\ Y_{\beta_4+t_\ell+r_\ell+2,t} 
\end{pmatrix}  = \mat{M}_{t_\ell+r_\ell+1}(\alpha) \begin{pmatrix} X_{\beta_4+1,t}\\  X_{\beta_4+2,t} \\ \vdots\\ X_{\beta_4+t_\ell+r_\ell,t}\\  X_{\beta_4+t_\ell+r_\ell+1,t}  \end{pmatrix} + \begin{pmatrix}  0 \\ 0 \\ \vdots \\ \alpha X_{\beta_4+t_\ell+r_\ell+2} \\   X_{\beta_4+t_\ell+r_\ell+2,t} + \alpha X_{\beta_4+t_\ell+r_\ell+3,t}  \end{pmatrix} + \begin{pmatrix}
N_{\beta_4+2,t}\\ N_{\beta_4+3,t}  \\ \vdots \\ N_{\beta_4+t_\ell+r_\ell+1,t} \\N_{\beta_4+t_\ell+r_\ell+2,t} 
\end{pmatrix}  \nonumber \\
\end{IEEEeqnarray}
\hrulefill
\end{figure*}
Recalling that $a_{j_r,j_c}$ denotes the row-$j_r$ column-$j_c$ entry
of the inverse matrix
$\mat{M}_{t_\ell+r_\ell+1}^{\textnormal{inv}}(\alpha)$ and that
$b_{j_r,j_c}$ denotes the row-$j_r$ column-$j_c$ entry of the
inverse matrix $\mat{M}_{t_r+r_r+1}^{\textnormal{inv}}(\alpha)$, it is
easily checked that \eqref{eq:lin10} implies:
\begin{IEEEeqnarray}{rCl} 
\lefteqn{
 \sum_{j=1}^{t_r+r_r+1}  b_{2,j}   \vect{Y}_{\beta_4 -j} - (b_{2, t_r+r_r} \alpha +  b_{2, t_r+r_r+1})\vect{X}_{t_\ell+r_\ell+3}} \nonumber \qquad \\ & & \hspace{1cm}  - b_{2, t_r+r_r+1} \alpha \vect{X}_{t_\ell+r_\ell+2 }\nonumber \\& = & \bfX_{\beta_4-1} +   \sum_{j=1}^{t_r+r_r+1}  b_{2,j}  \vect{N}_{\beta_4 -j}; \hspace{3cm}\label{eq:X1}
 \end{IEEEeqnarray}
and
 \begin{IEEEeqnarray}{rCl}
 \lefteqn{
 \sum_{j=1}^{t_r+r_r+1}  b_{1, j}  \vect{Y}_{\beta_4 -j}  -
( b_{1, t_r+r_r} \alpha +  b_{1, t_r+r_r+1} ) \vect{X}_{t_\ell+r_\ell+3}} \qquad \nonumber \\ & & \hspace{1cm}   -  b_{1, t_r+r_r+1}  \alpha\vect{X}_{t_\ell+r_\ell+2 }\nonumber \\& = & \bfX_{\beta_4  } +  \sum_{j=1}^{t_r+r_r+1}  b_{1,j} \vect{N}_{\beta_4 -j};    \hspace{3cm}\label{eq:X2}
 \end{IEEEeqnarray}
 and that \eqref{eq:lin11} implies:
 \begin{IEEEeqnarray}{rCl}
\lefteqn{\sum_{j=1}^{t_\ell+r_\ell+1}  a_{1, j}    \vect{Y}_{ \beta_4 +1+j} -   a_{1, t_r+r_r+1} \alpha \vect{X}_{\beta_4+t_\ell+r_\ell+3}  } \qquad\nonumber \\ & & -( a_{1,t_r+r_r} +   a_{1,t_r+r_r+1} \alpha) \vect{X}_{\beta_4+t_\ell+r_\ell+2}   \nonumber \\  & = & \bfX_{\beta_4+1}+\sum_{j=1}^{t_\ell+r_\ell+1}  a_{1, j}    \vect{N}_{ \beta_4 +1+j}. \label{eq:X3}
\end{IEEEeqnarray}
Since the genie-information has been chosen so that
\begin{IEEEeqnarray}{rCl}\label{eq:Ybeta4}
\vect{Y}_{\beta_4} & = &     \alpha \left( \bfX_{\beta_4-1} +   \sum_{j=1}^{t_r+r_r+1}  b_{2,j}  \vect{N}_{\beta_4 -j}\right) \nonumber \\ & & + \left(\bfX_{\beta_4  } +  \sum_{j=1}^{t_r+r_r+1}  b_{1,j} \vect{N}_{\beta_4 -j} \right) \nonumber \\ & &+ \alpha \left(  \bfX_{\beta_4+1}+\sum_{j=1}^{t_\ell+r_\ell+1}  a_{1, j}    \vect{N}_{ \beta_4 +1+j}\right) - \bfV_{1}\nonumber \\
 \end{IEEEeqnarray}
 the desired linear combination representing $\bfY_{\beta_4}$ is
  obtained by combining the linear combinations on the left-hand
 sides of Equations \eqref{eq:X1}--\eqref{eq:X3} with the
 genie-information $\bfV_1$.

We next consider $\bfY_{\beta_4+1}$. 
By~\eqref{eq:lin11}, 
 \begin{IEEEeqnarray}{rCl}
\lefteqn{\sum_{j=1}^{t_\ell+r_\ell+1}  a_{1, j}    \vect{Y}_{ \beta_4 +1+j} -   a_{2, t_r+r_r+1} \alpha \vect{X}_{\beta_4+t_\ell+r_\ell+3}  } \qquad\nonumber \\ & & -( a_{2,t_r+r_r} +   a_{2,t_r+r_r+1} \alpha) \vect{X}_{\beta_4+t_\ell+r_\ell+2}   \nonumber \\  & = & \bfX_{\beta_4+2}+\sum_{j=1}^{t_\ell+r_\ell+1}  a_{2, j}    \vect{N}_{ \beta_4 +1+j}.\label{eq:X4}
\end{IEEEeqnarray}
Since the genie-information $\bfV_2$ has been chosen so that
\begin{IEEEeqnarray}{rCl}\label{eq:Ybeta41}
\vect{Y}_{\beta_4+1} & = &     \alpha \left( \bfX_{\beta_4} +
  \sum_{j=1}^{t_r+r_r+1}  b_{1,j}  \vect{N}_{\beta_4 -j}\right)
\nonumber \\ & & + \left(\bfX_{\beta_4+1  } +
  \sum_{j=1}^{t_\ell+r_\ell+1}  a_{1,j} \vect{N}_{\beta_4 +1+j}
\right) \nonumber \\ & &+\alpha \left(
  \bfX_{\beta_4+2}+\sum_{j=1}^{t_\ell+r_\ell+1}  a_{2, j}
  \vect{N}_{ \beta_4 +1+j}\right) - \bfV_{2}\nonumber \\ 
\end{IEEEeqnarray}
the desired linear combination representing $\bfY_{\beta_4+1}$ is
 obtained by combining the left-hand sides of \eqref{eq:X2},
\eqref{eq:X3}, and \eqref{eq:X4} with the genie-information $\bfV_2$.

The desired linear combinations representing the outputs
$\{\bfY_{m\beta_4}\}_{m=2}^{\gamma_4-1+ \theta_4}$
can be obtained from the
equations that result when in \eqref{eq:X1}--\eqref{eq:Ybeta4}  each vector
$\bfX_k$, for $k\in\{1,\ldots, K\}$, is replaced by
$\bfX_{k+(m-1)\beta_4}$, each vector $\bfY_k$ by
$\bfY_{k+(m-1)\beta_4}$,  each vector $\bfN_k$ by
$\bfN_{k+(m-1)\beta_4}$, and the genie-information $\bfV_{1}$ is
replaced by $\bfV_{2m-1}$. 

The  linear combinations representing the outputs 
$\{\bfY_{m\beta_4+1}\}_{m=2}^{\gamma_4-1+\theta_4}$
are obtained from
the equations that result when in \eqref{eq:X2}, \eqref{eq:X3},
\eqref{eq:X4}, and \eqref{eq:Ybeta41} the vectors
$\bfX_k$, $\bfY_k$, and $\bfN_k$, for $k\in\{1,\ldots, K\}$, are
replaced by the vectors
$\bfX_{k+(m-1)\beta_4}$,$\bfY_{k+(m-1)\beta_4}$, and
$\bfN_{k+(m-1)\beta_4}$ and the genie-information $\bfV_{2}$ is
replaced by $\bfV_{2m}$. When $m=0$ all the out-of-range indices should be ignored, that means, $\bfX_{k}$, $\bfY_k$,
$\bfN_k$ are assumed to be deterministically 0 for all $k\leq 0$. 

Finally, if $\theta_4=0$, then the desired  linear combination representing $\bfY_K$ can
be obtained 
by combining the
equations that result when in Equations
\eqref{eq:X1}, \eqref{eq:X2}, and \eqref{eq:Ybeta4} the vectors
$\bfX_k$, $\bfY_k$, and $\bfN_k$ are
replaced by the vectors
$\bfX_{K-\beta_4}$,$\bfY_{K-\beta_4}$, and
$\bfN_{K-\beta_4}$ and the genie-information $\bfV_{1}$ is
replaced by $\bfV_{2\gamma_4-1}$. Again, all out-of-range indices
should be ingored, i.e., $\bfX_{k}$, $\bfY_k$,
$\bfN_k$ are assumed to be deterministically 0 for all $k> K$.


\section{Proof that Assumption~\eqref{eq:assum_fun} holds in Section~\ref{sec:ub2}}\label{app:dymup}

Notice that for $i\in\{1,\ldots, 2\gamma_5\}$ odd, 
\begin{equation}
\mathcal{R}_{\mathcal{B}_i}  \backslash (\mathcal{R}_{\mathcal{B}_i} \cap \mathcal{R}_{\mathcal{A}_i}) = \left\{  \frac{(i-1)}{2} \beta_5+t_r+r_r+3\right\}, 
\end{equation}
 and for $i$ even,
\begin{equation}
\mathcal{R}_{\mathcal{B}_i}  \backslash (\mathcal{R}_{\mathcal{B}_i} \cap \mathcal{R}_{\mathcal{A}_i}) = \left\{\frac{(i-1)}{2} \beta_5 +t_r+r_r+2+2\right\},
\end{equation}
and moreover, if $\theta_5=1$, 
\begin{equation}
\mathcal{R}_{\mathcal{B}_{2\gamma_5+1}}  \backslash (\mathcal{R}_{\mathcal{B}_{2\gamma_5+1}} \cap \mathcal{R}_{\mathcal{A}_{2\gamma_5+1}}) = \{K\}. 
\end{equation}
Thus, for $i\leq 2\gamma_5-1$ odd we need to show that the output sequence $\bfY_{\frac{(i-1)}{2} \beta_5+t_r+r_r+3}$ can be  reconstructed from the messages $\{M_{k} \}_{k\in\mathcal{A}_i}$, the outputs $\{\bfY_{k}\}_{k\in\mathcal{R}_{\mathcal{A}_i}}$, and the genie-information $\{\bfV_m\}_{m=0}^g$. Similarly, for $i$ even we need to show that  $\bfY_{ \frac{(i-1)}{2} \beta_5 +t_r+r_r+2}$ can be reconstructed, and for $i=2\gamma_5+1$, we need to show that $\bfY_K$ can be reconstructed. 

%


Using the encoding functions $f_1,\ldots, f_n$, for each
$i$ that is odd and satisfies $1\leq i \leq 2\gamma_5-1$  the inputs $\bfX_{\frac{i-1}{2} \beta_5}$, $\bfX_{\frac{i-1}{2}
    \beta_5+1}$, and $\bfX_{\frac{i+1}{2} \beta_5}$ can be computed from
      the messages $\{M_{k}\}_{k\in\mathcal{A}_{i}}$.  For each $i$
      that is even and that satisfies $2\leq i \leq 2\gamma_5$ the inputs $\bfX_{\frac{i-1}{2}
        \beta_5}$,$\bfX_{\frac{i-1}{2} \beta_5+1}$, $\bfX_{\frac{i+1}{2}
            \beta_5}$, and $\bfX_{\frac{i+1}{2} \beta_5+1}$ can be
              computed from messages $\{M_{k}\}_{k\in
                \mathcal{A}_{i}}$. Finally, if $\theta_5=1$, then
                inputs $X_{K-t_\ell-r_\ell-2}$ and
                $X_{K-t_\ell-r_\ell-1}$ can be computed from the messages   $\{M_{k}\}_{k\in
                \mathcal{A}_{2\gamma_5+1}}$. 

We start with $i=1$ and outputs $\bfY_{t_r+r_r+ 3}$.  
By the channel law \eqref{eq:symchannel}, the linear system \eqref{eq:linsym} on top of the next page holds for every time $t\in\{1,\ldots, n\}$. 
\begin{figure*}
\begin{equation}\label{eq:linsym}
\begin{pmatrix}{Y}_{t_r+r_r+3,t}\\ {Y}_{t_r+r_r+4,t}\\\vdots\\ {Y}_{\beta_5-1,t}\\ {Y}_{\beta_5,t}\end{pmatrix}=
H_{t_\ell+r_\ell+1} (\alpha)\begin{pmatrix} {X}_{t_r+r_r+3,t}\\  {X}_{t_r+r_r+4,t}\\\vdots\\ {X}_{\beta_5-1,t}\\  {X}_{\beta_5,t}\end{pmatrix}+
\begin{pmatrix}\alpha  {X}_{t_r+r_r+2,t}\\0\\\vdots\\0\\\alpha {X}_{\beta_5+1,t}\end{pmatrix}+
\begin{pmatrix}\vect{N}_{t_r+r_r+3,t}\\\vect{N}_{t_r+r_r+4,t}\\\vdots\\\vect{N}_{\beta_5-1,t}\\\vect{N}_{\beta_5}\end{pmatrix}
\end{equation}
\end{figure*}
Recalling the definition of the parameters $\{d_2,\ldots, d_{t_\ell+r_\ell+1}\}$ in Section~\ref{sec:ub2} and because $\deta{\mat{H}_{t_\ell+r_\ell+1}(\alpha)} =0$,  \eqref{eq:linsym} implies:
 \begin{IEEEeqnarray}{rCl} \label{eq:outputlin}
{\bfY}_{t_r+r_r+3} & = &\sum_{j=2}^{t_\ell+r_\ell+1} d_j \left(\bfY_{t_r+r_r+2+j}-\bfN_{t_r+r_r+2+j} \right) \nonumber  \\ & & - \alpha d_{t_\ell+r_\ell+1} \bfX_{\beta_5+1} + \alpha \bfX_{t_r+r_r+2} +\bfN_{t_r+r_r+3}. \nonumber \\
 \end{IEEEeqnarray}
We next notice that by the channel law~\eqref{eq:symchannel}, for every
time $t\in\{1,\ldots, n\}$, the linear system in \eqref{eq:lin1c}
holds, where the matrix $\mat{M}_{t_r+r_r+1}(\alpha)$ is defined in Section~\ref{sec:ub1}. 
\begin{figure*}
\begin{IEEEeqnarray}{rCl}\label{eq:lin1c}
\begin{pmatrix}
  Y_{t_r+r_r+1,t} \\ Y_{t_r+r_r,t} \\ \vdots \\ Y_{2,t} \\ Y_{1,t}
\end{pmatrix}  = \mat{M}_{t_r+r_r+1}(\alpha) \begin{pmatrix} X_{t_r+r_r+2,t}\\  X_{t_r+r_r+1,t} \\ \vdots\\ X_{3,t}\\  X_{2,t}  \end{pmatrix} + \begin{pmatrix}  0 \\ 0 \\ \vdots \\ \alpha X_{1,t} \\  X_{1,t}  \end{pmatrix} + \begin{pmatrix}
N_{t_r+r_r+1,t}\\ N_{t_r+r_r,t}  \\ \vdots \\ N_{2,t} \\N_{1,t} 
\end{pmatrix}  
\end{IEEEeqnarray}
\hrulefill
\end{figure*}
Recalling that $b_{j_r, j_c}$ denotes the row-$j_r$ column-$j_c$ entry
of the inverse $\mat{M}_{t_r+r_r+1}^{\textnormal{inv}}(\alpha)$,
Equation~\eqref{eq:lin1c} implies:
\begin{IEEEeqnarray}{rCl}\label{eq:lin2} 
\lefteqn{ \sum_{ j_c= 1}^{t_r+r_r+1} b_{1,j_c}  \bfY_{t_r+r_r+2-j_c}  -( b_{1, t_r+r_r+1} + \alpha b_{1,t_r+r_r})\bfX_1}\nonumber \hspace{1cm}\\ 
 & = & \bfX_{t_r+r_r+2 } +  \sum_{ j_c= 1}^{t_r+r_r+1} b_{1,j_c} \bfN_{t_r+r_r+2-j_c} .  \IEEEeqnarraynumspace
\end{IEEEeqnarray}
Finally, by the definition of the genie-information $\bfV_0$,  
 combining \eqref{eq:outputlin} with \eqref{eq:lin2} yields the desired
linear combination
 \begin{IEEEeqnarray}{rCl} \label{eq:outputlin2}
\lefteqn{{\bfY}_{t_r+r_r+3}}\quad  \nonumber \\  & = &\sum_{j=2}^{t_\ell+r_\ell+1} d_j \bfY_{t_r+r_r+2+j}  + \alpha  \sum_{ j_c= 1}^{t_r+r_r+1} b_{1,j_c}  \bfY_{t_r+r_r+2-j_c} \nonumber  \\ & &  -( b_{1, t_r+r_r+1} + \alpha b_{1,t_r+r_r})\bfX_1 \nonumber \\ & & - \alpha d_{t_\ell+r_\ell+1} \bfX_{\beta_5+1}  -\bfV_0.
 \end{IEEEeqnarray}

For each $i$ odd and $3\leq i\leq 2\gamma_5-1$ the desired linear
combination representing $\bfY_{\frac{i-1}{2} \beta_5+ t_r+r_r+3}$ can
be found in a similar way. 
Specifically, using Equations similar to
\eqref{eq:linsym}--\eqref{eq:outputlin2} one can show that 
 \begin{IEEEeqnarray}{rCl} \label{eq:outputlingen}
{\bfY}_{\frac{i-1}{2} \beta_5+t_r+r_r+3}
& = &\sum_{j=2}^{t_\ell+r_\ell+1} d_j \bfY_{\frac{i-1}{2}
  \beta_5+t_r+r_r+2+j} \nonumber \\  & &  + \alpha  \sum_{ j_c= 1}^{t_r+r_r+1} b_{1,j_c}
\bfY_{\frac{i-1}{2} \beta_5+t_r+r_r+1-j_c} \nonumber  \\ & &  -( b_{1,
  t_r+r_r+1} + \alpha b_{1,t_r+r_r})\bfX_{\frac{i-1}{2} \beta_5+1}
\nonumber \\ & &- 
\alpha b_{1,t_r+r_r+1} \bfX_{\frac{i-1}{2}\beta_5}    \nonumber \\ & & - \alpha \beta_{t_\ell+r_\ell+1} \bfX_{\frac{i+1}{2}
  \beta_5+1}  -\bfV_{i-1}.
 \end{IEEEeqnarray}

We next consider the case where $i$ is even and $2 \leq i \leq 2\gamma_5$, where we wish to reconstruct 
$\bfY_{(\frac{i}{2} -1)\beta_5+ t_r+r_r+2}$. 
The construction of the desired linear combination is similar to 
Appendix~\ref{app:rem2}, that means it is  based on equations
that are similar to equations
\eqref{eq:X1}--\eqref{eq:Ybeta4}. Obviously, \eqref{eq:lin10} remains
valid if for each $k\in\{1,\ldots, K\}$ the symbols $X_{k,t}$, $Y_{k,t}$, and $N_{k,t}$
are replaced by $X_{k+(\frac{i}{2} -1)\beta_5-(t_\ell+r_\ell+2),t}$,
$Y_{k+(\frac{i}{2} -1)\beta_5-(t_\ell+r_\ell+2),t}$, and $N_{k+(\frac{i}{2} -1)\beta_5-(t_\ell+r_\ell+2),t}$, and
therefore similar to \eqref{eq:X1} and \eqref{eq:X2} we obtain:
\begin{IEEEeqnarray}{rCl} 
\lefteqn{
 \sum_{j=1}^{t_r+r_r+1}  b_{2,j}   \vect{Y}_{(\frac{i}{2} -1)\beta_5+t_\ell+r_\ell+2-j} - b_{2, t_r+r_r+1} \alpha
\vect{X}_{(\frac{i}{2} -1)\beta_5+1  } } \qquad \nonumber \\  & &  -
 (b_{2, t_r+r_r} \alpha +  b_{2, t_r+r_r+1})\vect{X}_{(\frac{i}{2} -1)\beta_5+1} \nonumber
\\& = & \bfX_{t_r+r_r+1} +   \sum_{j=1}^{t_r+r_r+1}  b_{2,j}  \vect{N}_{(\frac{i}{2} -1)\beta_5+1t_r+r_r+2 -j}   \hspace{0.7cm}\label{eq:X1f}
 \end{IEEEeqnarray}
and
 \begin{IEEEeqnarray}{rCl}
 \lefteqn{
 \sum_{j=1}^{t_r+r_r+1}  b_{1, j}  \vect{Y}_{(\frac{i}{2} -1)\beta_5+t_\ell+r_\ell+2 -j}   -  b_{1, t_r+r_r+1} \alpha\vect{X}_{(\frac{i}{2} -1)\beta_5+1   }} \quad \nonumber \\ 
& & - ( b_{1, t_r+r_r} \alpha +  b_{1, t_r+r_r+1} ) \vect{X}_{(\frac{i}{2} -1)\beta_5+1} 
 \nonumber
 \\& = & \bfX_{(\frac{i}{2} -1)\beta_5+t_r+r_r+2 } +  \sum_{j=1}^{t_r+r_r+1}  b_{1,j} \vect{N}_{(\frac{i}{2} -1)\beta_5+t_r+r_r+2 -j}.  \nonumber \\ \label{eq:X2f}
 \end{IEEEeqnarray}
Since also \eqref{eq:lin11} remains valid if for each $k\in\mathcal{K}$ the symbols $X_{k,t}$, $Y_{k,t}$, and $N_{k,t}$
are replaced by $X_{k+(\frac{i}{2} -1)\beta_5-(t_\ell+r_\ell+3),t}$,
$Y_{k+(\frac{i}{2} -1)\beta_5-(t_\ell+r_\ell+3),t}$, and $N_{k+(\frac{i}{2} -1)\beta_5-(t_\ell+r_\ell+3),t}$, we
obtain similarly to \eqref{eq:X3}:
 \begin{IEEEeqnarray}{rCl}
\lefteqn{\sum_{j=1}^{t_\ell+r_\ell+1}  a_{1, j}    \vect{Y}_{ (\frac{i}{2} -1)\beta_5+t_r+r_r +2+j} -   a_{1, t_r+r_r+1} \alpha \vect{X}_{(\frac{i}{2} +1)\beta_5+1}  } \quad\nonumber \\ & & -( a_{1,t_r+r_r} +   a_{1,t_r+r_r+1} \alpha) \vect{X}_{(\frac{i}{2} -1)\beta_5}   \nonumber \\  & = & \bfX_{(\frac{i}{2} -1)\beta_5+t_r+r_r+3}+\sum_{j=1}^{t_\ell+r_\ell+1}  a_{1, j}    \vect{N}_{(\frac{i}{2} -1)\beta_5+ t_r+r_r+2+j}. \label{eq:X3ff}\nonumber \\ 
\end{IEEEeqnarray}
Now, since the genie-information $\bfV_{i-1}$ has been chosen so that Equality~\eqref{eq:Ybeta4df} on top of the next page holds, 
\begin{figure*}
\begin{IEEEeqnarray}{rCl}\label{eq:Ybeta4df}
\vect{Y}_{(\frac{i}{2} -1)\beta_5+  t_r+r_r+2}  & = &     \alpha \left( \bfX_{(\frac{i}{2} -1)\beta_5+t_r+r_r+1} +   \sum_{j=1}^{t_r+r_r+1}  b_{2,j}  \vect{N}_{(\frac{i}{2} -1)\beta_5+t_r+r_r+2-j}\right) \nonumber \\ & & + \left(\bfX_{(\frac{i}{2} -1)\beta_5+t_r+r_r+2  } +  \sum_{j=1}^{t_r+r_r+1}  b_{1,j} \vect{N}_{(\frac{i}{2} -1)\beta_5+t_r+r_r+2 -j} \right) \nonumber \\ && + \alpha \left(  \bfX_{(\frac{i}{2} -1)\beta_5+t_r+r_r+3}+\sum_{j=1}^{t_\ell+r_\ell+1}  a_{1, j}    \vect{N}_{ (\frac{i}{2} -1)\beta_5+t_r+r_r+1+j}\right) - \bfV_{i-1}\IEEEeqnarraynumspace
 \end{IEEEeqnarray}
 \hrulefill
 \end{figure*}
 the desired linear combination representing $\bfY_{(\frac{i}{2} -1)\beta_5+t_\ell+r_\ell+2}$ can be obtained by combining \eqref{eq:X1f}--\eqref{eq:Ybeta4df}. 
 
 If $\theta_5=1$, then the desired linear combination representing $\bfY_K$ can be found in a similar manner as in the previous Appendix~\ref{app:rem2}. The details are omitted.

}



\section{Proof of Proposition~\ref{prop:poweroffset}}\label{sec:poweroffset}

\begin{lemma}
\label{6thouless}
For an integer $p$ and a real number $\alpha$, denote $u_p(\alpha) = \deta{\mat{H}_p(\alpha)}$. Then the following holds.
\begin{enumerate}
	\item $u_p(\alpha)$ is a polynomial in $\alpha$, $u_p(0)=1$, and it satisfies the following second order recursion:
	\begin{align}
	\label{6thouless1}
	u_{p+2}(\alpha)=u_{p+1}(\alpha)-\alpha^2 u_p(\alpha),
	\end{align}
	with the initial conditions $u_0(\alpha)=u_1(\alpha)=1$. We denote by $E_{p}$ the set of roots of $u_p(\alpha)$.
	\item For $\alpha\neq0$, define
	\[v_p(\alpha)\triangleq\frac{u_p(\alpha)}{\left(-\alpha\right)^p}.\]
	Then $v_p(\alpha)$ satisfies the second order recursion:
	\begin{align}
	\label{6thouless2}
	v_{p+2}(\alpha)=-\frac{1}{\alpha}v_{p+1}(\alpha)- v_p(\alpha),
	\end{align}
	with the initial conditions $v_{-1}(\alpha)=0$ and $v_0(\alpha)=1$.
	Moreover, for all $p\geq1$ and $l\geq0$,
\begin{align}\label{6thouless3}
	&\begin{pmatrix}v_i&\cdots&v_{l+p-1}\end{pmatrix}\mat{H}_p=\nonumber \\
	&\qquad\qquad\begin{pmatrix}-\alpha v_{l-1}&0&\cdots&0&-\alpha v_{l+p} \end{pmatrix}
	\end{align}
	where for simplicity we wrote $v_l$ for $v_l(\alpha)$.
	
\end{enumerate}
\end{lemma}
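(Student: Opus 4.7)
For part 1, the plan is to obtain the recursion~\eqref{6thouless1} by Laplace (cofactor) expansion of $\deta{\mat{H}_{p+2}(\alpha)}$ along its last row. Since $\mat{H}_{p+2}(\alpha)$ is tridiagonal with $1$'s on the diagonal, $\alpha$'s on the first off-diagonals, and $0$'s elsewhere, the expansion yields two nonzero contributions: a term $1\cdot u_{p+1}(\alpha)$ from the $(p+2,p+2)$ cofactor and a term $(-\alpha)\cdot \alpha \cdot u_{p}(\alpha) = -\alpha^{2}u_{p}(\alpha)$ from the $(p+2,p+1)$ cofactor (the minor factors as $\alpha$ times $u_p(\alpha)$ after a further expansion along the last column). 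The initial values $u_0(\alpha)=u_1(\alpha)=1$ follow directly from the definition (the empty determinant and the $1\times 1$ case), and $u_p(0)=\deta{\mat{I}_p}=1$ because $\mat{H}_p(0)=\mat{I}_p$. Inductively, the recursion shows $u_p(\alpha)$ is a polynomial in $\alpha$.

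For the recursion~\eqref{6thouless2} in part 2, the plan is purely algebraic: divide both sides of~\eqref{6thouless1} by $(-\alpha)^{p+2}$, which gives
\begin{equation*}
v_{p+2}(\alpha)= \frac{u_{p+1}(\alpha)}{(-\alpha)^{p+2}} - \frac{\alpha^{2}u_{p}(\alpha)}{(-\alpha)^{p+2}} = -\frac{1}{\alpha}\,v_{p+1}(\alpha) - v_{p}(\alpha).
\end{equation*}
The initial condition $v_0(\alpha)=1$ is immediate. To check $v_{-1}(\alpha)=0$, I would simply verify that the recursion evaluated at $p=-1$ reads $v_{1}(\alpha)= -\tfrac{1}{\alpha}v_0(\alpha)-v_{-1}(\alpha)$, and since $v_1(\alpha)=u_1(\alpha)/(-\alpha)=-1/\alpha$, this forces $v_{-1}(\alpha)=0$.

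The main substantive step is~\eqref{6thouless3}, and my plan is a direct matrix--vector computation exploiting the tridiagonal structure. Writing out the $j$-th entry of the row vector $(v_l,\ldots,v_{l+p-1})\mat{H}_p(\alpha)$, one obtains $\alpha v_{l+j-2}+v_{l+j-1}+\alpha v_{l+j}$ for $2\le j\le p-1$, while the first entry is $v_l+\alpha v_{l+1}$ and the last is $\alpha v_{l+p-2}+v_{l+p-1}$. Rewriting the recursion~\eqref{6thouless2} in the symmetric form
\begin{equation*}
\alpha\,v_{q+1}(\alpha)+v_{q}(\alpha)+\alpha\,v_{q-1}(\alpha)=0,\qquad q\in\mathbb{Z},
\end{equation*}
immediately kills the middle entries (take $q=l+j-1$), handles the first entry via $q=l$ to give $v_l+\alpha v_{l+1}=-\alpha v_{l-1}$, and handles the last entry via $q=l+p-1$ to give $\alpha v_{l+p-2}+v_{l+p-1}=-\alpha v_{l+p}$. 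These three identities together produce exactly the right-hand side of~\eqref{6thouless3}.

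I do not anticipate any real obstacle here; the only mild care needed is indexing (and verifying the degenerate cases $p=1$ and $p=2$, where ``first,'' ``middle,'' and ``last'' entries coincide, separately to make sure the boundary identities still deliver the claimed right-hand side).
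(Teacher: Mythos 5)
The paper states this lemma with ``Proof: Omitted,'' so there is no paper proof to compare against; your argument fills that gap with essentially the natural proof. Your three steps are each correct: the cofactor expansion along the last row of a tridiagonal $\mat{H}_{p+2}(\alpha)$ gives exactly the claimed recursion (the $(p+2,p+1)$ minor is block lower triangular and factors as $\alpha\,u_p(\alpha)$, and the cofactor sign $(-1)^{2p+3}=-1$ combines with the entry $\alpha$ to give $-\alpha^2 u_p(\alpha)$); the division by $(-\alpha)^{p+2}$ is an immediate algebraic rewrite; and the symmetric form $\alpha v_{q+1}+v_q+\alpha v_{q-1}=0$ correctly annihilates the interior entries of $(v_l,\ldots,v_{l+p-1})\mat{H}_p$ while the boundary instances at $q=l$ and $q=l+p-1$ deliver $-\alpha v_{l-1}$ and $-\alpha v_{l+p}$. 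Two small points worth being explicit about: the symmetric recursion holds for $q\ge 0$ given the stated initial conditions $v_{-1}=0$, $v_0=1$ (not all of $\mathbb{Z}$ without further extension), but that range suffices since $l\ge 0$; and for $p=1$ the right-hand side notation collapses its two distinguished entries into one position, so the identity only makes sense in the additive reading $v_l=-\alpha v_{l-1}-\alpha v_{l+1}$---you flag exactly this, and in the paper's application of~\eqref{6thouless3} (with $p=L+1\ge 2$) the degenerate case never arises, so restricting to $p\ge 2$ would be harmless.
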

\mw{
\begin{proof}
Omitted. 
%
\end{proof}

We give a proof of Proposition~\ref{prop:poweroffset} for the case where $q$ is odd. The  case $q$ even goes along the same lines. We define $\gamma''' \triangleq (q-1)/2$ and
\begin{align*}
L&\triangleq r_\ell+t_\ell\\
\beta''' &\triangleq 2L+4.
\end{align*}

The first part of the proof follows the first part of the proof of the Dynamic-MAC Lemma, see Section \ref{sec:converse}. We construct a Cognitive MAC as in Section~\ref{sec:converse} using  parameters
\begin{itemize}
\item $q=2$;
\item $g=2\gamma'''$;
\item $\mathcal{A} = \bigcup_{m=1}^{\gamma'''} \mathcal{A}'''(m)$ where 
\[\mathcal{A}'''(0)\triangleq\{r_\ell+2,\dots,L+t_r+2\},\]
for $1\leq m\leq\gamma'''-1$,
\[\mathcal{A}'''(m)\triangleq\{m\beta''' +r_\ell+1,\dots,m\beta''' +L+t_r+2\},\]
and 
\[\mathcal{A}'''(\gamma''')\triangleq\{\gamma'''\beta'''+r_\ell+1,\dots,K\};\]
\item $\mathcal{B}_1=\{r_\ell+1\}$ and $\mathcal{B}_2=\mathcal{K} \backslash (\mathcal{A}\cup\mathcal{B}_1)$; 
\item the genie-information 
\[\vect{V}_0\triangleq-\alpha v_{L+1} \vect{X}_{L+1}+\sum_{j=0}^{L}v_j\vect{N}_{j+1},\]
and where the rest of the genie-informations $\{\vect{V}_i\}_{i=1}^{2\gamma'''}$ is similar to the genie-information described in~\eqref{eq:genieodd} and \eqref{eq:genieeven}.
\end{itemize}
By the  choice above,  
\begin{equation} \mathcal{K} \backslash \mathcal{R}_{\mathcal{A}} = \{ 1\} \cup \big\{ m\beta'''-1, m\beta'''\big\}_{ m=1}^{\gamma'''}.
\end{equation}


 Notice that unlike in the proof in Section~\ref{sec:ub1}, here, part of the genie-information depends on the transmitted signal $\vect{X}_{L+1}$. (But notice that the signal to noise ratio of $\vect{X}_{L+1}$ with respect to $\sum_{j=0}^{L}v_j\vect{N}_{j+1}$ goes to 0 like $\left(\alpha-\alpha^*\right)^\nu$ as  $\alpha$ goes to $\alpha^*$.)
 
Our choice of parameters satisfies Assumption~\eqref{eq:assum_fun} in the Dynamic-MAC Lemma, and thus we can follow the steps in the proof of~\eqref{eq:supseteq} to deduce that the capacity region of the original network is included in the capacity region of the Cognitive MAC. 
That Assumption~\eqref{eq:assum_fun} is satisfied for $i=1$ follows because from the messages $\{M_{k}\}_{k\in\mathcal{A}}$ one can reconstruct $\vect{X}_{L+2}$, and because by 
\begin{IEEEeqnarray*}{rCl}\begin{pmatrix}\vect{Y}_1\\\vdots\\\vect{Y}_{L}\\{\vect{Y}}_{L+1}\end{pmatrix}
&=& H_{L+1}\begin{pmatrix}\vect{X}_1\\\vdots\\ \vdots \\\vect{X}_{L+1}\end{pmatrix}+\begin{pmatrix}\vect{N}_1\\ \vdots \\\vdots\\\vect{N}_{L+1}\end{pmatrix} + \alpha \begin{pmatrix} 0 \\ \vdots \\ 0 \\ \vect{X}_{L+2}\end{pmatrix} 
  \end{IEEEeqnarray*}
and by  (\ref{6thouless3}), applied to $p=L+1$ and $l=0$, 
\begin{IEEEeqnarray*}{rCl}
\sum_{j=0}^{L}v_j\vect{Y}_{j+1}&=&\alpha v_{L}\vect{X}_{L+2}-\alpha v_{L+1} \vect{X}_{L+1}+\sum_{j=0}^{L}v_j\vect{N}_{j+1}\nonumber \\
&= & \alpha v_L \vect{X}_{L+2} + \vect{V}_0,
\end{IEEEeqnarray*}
and thus it is possible to reconstruct $\bfY_1$. 

For $i=2$, Assumption~\eqref{eq:assum_fun} follows by similar considerations as in Appendix~\ref{app:rem2}. Appendix~\ref{app:rem2} also shows how to choose the genie-signals $\{\bfV_m\}_{m=1}^{2\gamma'''}$. 

}

Let us now bound the sum-capacity of the Cognitive MAC:
\begin{align*}
n\Ca_{\textnormal{MAC},\Sigma}&\leq I\left(\left\{\vect{Y}_i\right\}_{i\in\mathcal{A}'''},\left\{\vect{V}_{i}\right\}_{0\leq i\leq2\gamma'''};M_1\dots,M_K\right)\\
&=I\left(\left\{\vect{Y}_i\right\}_{i\in\mathcal{A}'''};M_1,\dots,M_K|\left\{\vect{V}_{i}\right\}_{0\leq i\leq2\gamma'''}\right)\\
&\qquad +I\left(\left\{\vect{V}_{i}\right\}_{0\leq i\leq2\gamma'''};M_1,\dots,M_K\right).
\end{align*}

We deal with each term separately.
\begin{align*}
&I(\left\{\vect{Y}_i\right\}_{i\in\mathcal{A}};M_1,\dots,M_K|\left\{\vect{V}_{i}\right\}_{0\leq i\leq2\gamma'''})\\
&\leq \sum_{i\in\mathcal{A}}h(\vect{Y}_{i})-h\bigg(\{\vect{N}_i\}_{i\in\mathcal{A}}\bigg|\left\{\vect{V}_{i}\right\}_{1\leq i\leq2\gamma'''},\sum_{j=0}^{L}v_j\vect{N}_{j+1}\bigg)\\
&\leq n(K-2\gamma'''-1)\frac{1}{2}\log(P)+n f_1(P,\alpha),
\end{align*}
where $f_1$ is such that $\lim_{\alpha\rightarrow\alpha_0}\lim_{P\rightarrow\infty}f_1(P,\alpha)$ exists and is finite.

Moreover, as can be verified, the genie-information $\left\{\vect{V}_i\right\}_{1\leq i\leq2\gamma'''}$ is independent of $(\vect{V}_0 ,M_1,\dots,M_K)$, and 
\begin{align}
&I(\left\{\vect{V}_{i}\right\}_{0\leq i\leq2\gamma'''};M_1,\dots,M_K) \nonumber \\
&\qquad=I(\vect{V}_0;M_1,\dots,M_K)\nonumber\\
&\qquad\leq n\frac{1}{2}\log\left(1+\frac{P\alpha^2 v_{L+1}^2(\alpha)}{\left\|\begin{pmatrix}v_0&\cdots&v_{L}\end{pmatrix}\right\|_2^2}\right)\nonumber\\
&\qquad=n\frac{1}{2}\log\left(P\abs{\alpha-\alpha^*}^{2\nu}\right)+n f_2(P,\alpha),
\end{align}
where $f_2$ is such that $\lim_{\alpha\rightarrow\alpha_0}\lim_{P\rightarrow\infty}f_2(P,\alpha)$ exists and is finite.
The last equality follows because for every non-zero $\alpha_0$, the limit $\lim_{\alpha \to \alpha_0}\left\|\begin{pmatrix}v_0&\cdots&v_{L}\end{pmatrix}\right\|_2^2$ exists, is finite, and larger than 0, and because by definition $\alpha^*$ is a root of the polynomial $v_{L+1}^2(\alpha)$ with multiplicity $2\nu$. 

Taking $c_0(\alpha)=\lim_{P\rightarrow\infty}\left(f_1(P,\alpha)+f_2(P,\alpha)\right)$ concludes the proof.

\bibliographystyle{ieeetr}

\begin{thebibliography}{10}

\bibitem{Wyner-94}
A.~D. Wyner, ``Shannon-theoretic approach to a {G}aussian cellular
  multiple-access channel,'' {\em IEEE Transactions on Information Theory},
  vol.~40, pp.~1713--1727, Nov. 1994.

\bibitem{Hanly-Whiting-93}
S.~V. Hanly and P.~A. Whiting, ``Information-theoretic capacity of
  multi-receiver networks,'' {\em Telecommunication Systems}, vol.~1,
  pp.~1--42, 1993.

\bibitem{Lapidoth-Shamai-Wigger-ISIT07}
A.~Lapidoth, S.~{Shamai (Shitz)}, and M.~Wigger, ``A linear interference
  network with local side-information,'' in {\em Proceedings of the 2007 {IEEE}
  International Symposium on Information Theory (ISIT'07)}, (Nice, France),
  June 24--29, 2007.



\bibitem{local-decoding}
N.~Levy and S.~{Shamai (Shitz)}, ``Clustered local decoding for {W}yner-type
  cellular models,'' {\em IEEE Transactions on Information
  Theory}, vol.~55, no.~11, Nov.~2009, pp.~4976--4985.

\bibitem{rate-limited}
S.~{Shamai (Shitz)} and Michle Wigger, ``Rate-limited transmitter-cooperation in Wyner's asymmetric interference network ," in {\em Proceedings of the 2011 {IEEE}
  International Symposium on Information Theory (ISIT'11)}, (St. Petersburg, Russia),
  July~31--August~5, 2011, pp.~ 425--429.
  
  


\bibitem{Shamai-Wyner-97-I-II}
S.~{Shamai (Shitz)} and A.~D. Wyner, ``Information-theoretic considerations for
  symmetric, cellular, multiple-access fading channels - {P}arts {I} \& {II},''
  {\em IEEE Transactions on Information Theory}, vol.~43, pp.~1877--1911, Nov.
  1997.

\bibitem{Kim} S.-J.~Kim, S.~Jain, and G.~B.~Giannakis,
"Backhaul-constrained multi-cell cooperation using compressive sensing and spectral clustering," in \emph{Proc. 
SPAWC 2012},  Cesme, Turkey, June~17--20, 2012, pp.~65--69. 

\bibitem{BergelYellinShamai} I.~Bergel, D.~Yellin, and S.~Shamai, "Linear Precoding Bounds for Wyner-Type Cellular Network with Limited Base-Station Cooperation," \emph{IEEE Transactions on Signal Processing}, vol.~60, no.~7, pp.~3714--3725, July~2012.

\bibitem{Shamai-Somekh-Simeone-Sanderovich-Zaidel-Poor-JWCC-2007}
S.~{Shamai (Shitz)}, O.~Somekh, O.~Simeone, A.~Sanderovich, B.~Zaidel, and
  H.~V. Poor, ``Cooperative multi-cell networks: impact of limited-capacity
  backhaul and inter-users links,'' in {\em Proceedings of the Joint Workshop
  on Communications and Coding (JWCC'07)}, D\"urnstein, Austria, Oct.14--16,
  2007.

\bibitem{SimeoneSomekhPoorShamai2009}
O.~Simeone, O.~Somekh, H.~V. Poor, and S.~Shamai(Shitz), ``Local base station
  cooperation via finite-capacity links for the uplink of linear cellular
  networks,'' {\em {IEEE} Trans. Inform. Theory}, vol.~55, pp.~190--204, Jan.
  2009.


\bibitem{SimeoneSomekhKramerPoorShamai2008}
O.~Simeone, O.~Somekh, G.~Kramer, H.~V. Poor, and S.~Shamai(Shitz),
  ``Throughput of cellular systems with conferencing mobiles and cooperative
  base stations,'' {\em {EURASIP} Journal on Wireless Communications and
  Networking}, vol.~27, 2008.
  

\bibitem{GamalAnnapureddy4}  A.~El Gamal, V.~S.~Annapureddy, and V.~V.~Veervalli, "Interference Channels with CoMP: Degrees of Freedom, Message Assignment, and Fractional Reuse"
\emph{submitted to IEEE Transactions on Information Theory,} Nov. 2012.
    Online: \texttt{arXiv:1211.2897}

\bibitem{hanly} R.~Zakhaour and S.~V.~Hanly, "Min-max power allocation in cellular networks with coordinated beamforming," \emph{IEEE Journal on Selected Areas in Communications}, vol.~31, no.~2, pp.~287--302, February~2008.

\bibitem{Bergel} I.~Bergel, D.~Yellin and S.~Shamai,
"Uplink Downlink Balancing using variable feedback rates,"
\emph{2012 IEEE 27th Convention of Electrical \& Electronics
Engineers in Israel (IEEEI 2012)}, November 14--17, 2012, Hilton Hotel, Eilat,
Israel.

\bibitem{Bjornson} E.~Bjornson and E.~Jorswieck,
"Optimal Resource Allocation in Coordinated Multi-Cell Systems," \emph{
Foundations and Trends  in
Communications and Information Theory (FnT)}, vol.~9, nos.~2--3, 2012.


\bibitem{Simeone}  O.~Simeone, N.~Levy, A.~Sanderovich, O.~Somekh,
B.~M.~Zaidel, H.~V.~Poor and S.~Shamai (Shitz),
"Cooperative Wireless Cellular Systems: An Information-Theoretic View," \emph{
Foundations and Trends in Communications and Information Theory (FnT)},
vol.~8, nos.~1--2, 2011, pp.~1--177,  2012. 


\bibitem{Lapidoth-Shamai-Wigger-ITW07}
A.~Lapidoth, S.~{Shamai (Shitz)}, and M.~Wigger, ``On cognitive interference
  networks,'' in {\em Proceedings of Information Theory Workshop ({ITW})},
  (Lake Tahoe, USA), Sep. 2--7, 2007.
 \bibitem{Gesbert}
 D. Gesbert, S. Hanly, H. Huang, S. Shamai, O. Simeone and Wei Yu,
 ``Multi-Cell MIMO Cooperative Networks: A New Look at Interference,"
\emph{(JSAC) Journal on Selected Areas in Communications}, vol.~28, no.~9, pp.~1380--1408, Dec. 2010.


\bibitem{Lozano}
A.~Lozano, R.~W.~Heath Jr.,  and J.~G.~Andrews, 
``Fundamental Limits of Cooperation," Online: {http://arxiv.org/abs/1204.0011}, March 2012.

\bibitem{LiuErkip} Y.~Liu and E.~Erkip, "On the sum capacity of K-user cascade Gaussian Z-interference channel," in \emph{Proceedings of  the 2011 {IEEE}
  International Symposium on Information Theory (ISIT'11)}, St.Petersburg, Russia, July~31--August~5 2011, pp.~1382--1386.
    

\bibitem{ZhouYu} L.~Zhou and W.~Yu, "On the Capacity of the K-User Cyclic Gaussian
Interference Channel," \emph{to appear in IEEE Transactions on Information Theory}. Online: \texttt{http://arxiv.org/abs/1010.1044}


\bibitem{EtkinTseWang08} R.~Etkin, D.~Tse, H.~Wang, "Gaussian interference channel capacity to within one bit," \emph{IEEE Transactions on Information Theory}, vol.~54, no.~12, pp.~5543--5562, Dec.~2008.
  
\bibitem{Jafar2} H.~Maleki, V.~Cadambe, S.~A.~Jafar, "Index Coding -- An Interference Alignment Perspective," May 2012. Online: \texttt{arXiv:1205.1483}

\bibitem{Jafar3} S.~A.~Jafar, "Blind Interference Alignment," \emph{IEEE Journal of Selected Topics in Signal Processing}, vol.~6, no.~3, pp.~216--227, June 2012. 

  

\bibitem{Wangetal11} C.~Wang, S.~A.~Jafar, S.~Shamai (Shitz), and M.~Wigger, "Interference, Cooperation and Connectivity - A Degrees of Freedom Perspective," in \emph{Proceedings of  the 2011 {IEEE}
  International Symposium on Information Theory (ISIT'11)}, St.Petersburg, Russia, July 31--August 5 2011.

 \bibitem{GamalAnnapureddy2} A.~El Gamal, V.~S.~Annapureddy, and V.~V.~Veervalli, "On optimal message assignments for interference channel with CoMP transmission," in \emph{Proceedings of the 46th Annual Conference on Information Sciences and Systems}, Princeton, NJ, Mar. 2012.


 \bibitem{GamalAnnapureddy1} A.~El Gamal, V.~S.~Annapureddy, and V.~V.~Veervalli, "Degrees of freedom (DoF) of locally connected interference channels with Coordinated Multi-Point (CoMP) transmission," in \emph{Proceedings of the IEEE International Conference on Communications (ICC)}, Ottawa, Jun. 2012.
 
 
 \bibitem{GamalAnnapureddy3} A.~El Gamal, V.~S.~Annapureddy, and V.~V.~Veervalli, 
"Degrees of Freedom of Interference Channels with CoMP Transmission and Reception," \emph{IEEE Transactions on Information Theory}, vol.~58, no.~9, pp.~5740--5760, Sep. 2012.




\bibitem{Maddah08}
M.~A. Maddah-Ali, A.~S. Motahari, and A.~K. Khandani, ``Communication over mimo
  x channels: interference alignment, decomposition, and performance
  analysis,'' {\em IEEE Transactions on Information Theory}, vol.~54,
  pp.~3457--3470, Aug. 2008.

\bibitem{CadambeJafar08}
S.~A.~Jafar and V.~R.~Cadambe, ``Interference alignment and degrees of freedom of
  the K-user interference channel,'' {\em IEEE Transactions on Information
  Theory}, vol.~54, pp.~3425--3441, Aug. 2008.

\bibitem{Jafar11}
S.~A. Jafar, ``Interference alignment: A new look at signal dimensions in a
  communication network,'' {\em Foundation and Trends in Communications and
  Information Theory}, vol.~7, no.~1, 2011.
\newblock Now Publishers Inc. Hanover, USA.

    

\bibitem{Sato-IT-1981}
H.~Sato, ``The capacity of the {G}aussian interference channel under strong
  interferences,'' {\em IEEE Transactions on Information Theory}, vol.~27,
  pp.~786--788, Nov. 1981.

\bibitem{Kramer-upper}
G.~Kramer, ``Outer bounds on the capacity of {G}aussian interference
  channels,'' {\em IEEE Trans. Inform. Theory}, vol.~50, no.~3, pp.~581--586,
  2004.

\bibitem{Cover-Thomas-1991}
T.~M. Cover and J.~A. Thomas, {\em Elements of Information Theory}.
\newblock John Wiley and Sons, Inc., 1991.

\bibitem{Telatar-ETT-99}
E.~Telatar, ``Capacity of multi-antenna {G}aussian channels,'' {\em European
  Transactions on Telecommunications}, vol.~10, pp.~585--598, Nov. 1999.

  

\bibitem{willems83}
F.~M.~J. Willems, ``The discrete memoryless multiple access channel with
  partially cooperating encoders,''  {\em IEEE Transactions on Information Theory}, Nov. 1983.



\bibitem{maricyateskramer07}
I.~Maric, R.~D. Yates, and G.~Kramer.
\newblock Capacity of interference channels with partial transmitter
  cooperation.
\newblock \emph{IEEE Transactions on Information  Theory}, vol.~53, no.~10, Oct. 2007.


\bibitem{simeoneetal08} O.~Simeone, O.~Somekh, G.~Kramer, H.~V.~Poor,
  and S~Shamai (Shitz), ``Three-User Gaussian Multiple Access Channel
  with Partially Cooperating Encoders,'' \emph{In Proceedings of  Asilomar
  Conference}, Oct. 2008.  

\bibitem{WK08} M.~Wigger and G.~Kramer, ``Three-user MIMO MACs with cooperation," in \emph{Proceedings of  the 2009 {IEEE}
  International Workshop on Information Theory (ITW'09)}, Volos, Greece, 10--12 June 2009.

\bibitem{BrossLapidothWigger11} S.~I.~Bross, A.~Lapidoth, and M.~Wigger, "Dirty-paper coding for the Gaussian multiaccess channel with conferencing," {\em IEEE Transactions on Information Theory}, vol.~58, no.~9, pp.~5640--5668, Sep. 2012.


\end{thebibliography}

\end{document}